\documentclass[a4paper,USenglish,cleveref, autoref, thm-restate]{oasics-v2021}

\bibliographystyle{plainurl}

\title{A Simpler Approach for Monotone Parametric Minimum Cut: Finding the Breakpoints in Order}

\titlerunning{A Simpler Approach for Monotone Parametric Minimum Cut}

\newcommand{\affiliationBonn}{University of Bonn, Germany}

\author{Arne Beines}{Formerly of \affiliationBonn}{beines-arne@t-online.de}{}{}

\author{Michael Kaibel}{\affiliationBonn}{s6mikaib@uni-bonn.de}{https://orcid.org/0009-0006-1967-5376}{}

\author{Philip Mayer}{\affiliationBonn}{pmayer@cs.uni-bonn.de}{https://orcid.org/0009-0007-4800-7753}{}

\author{Petra Mutzel}{\affiliationBonn}{pmutzel@cs.uni-bonn.de}{https://orcid.org/0000-0001-7621-971X}{}

\author{Jonas Sauer}{\affiliationBonn}{jsauer1@uni-bonn.de}{https://orcid.org/0000-0002-7196-7468}{}

\authorrunning{A. Beines and M. Kaibel and P. Mayer and P. Mutzel and J. Sauer}

\Copyright{Arne Beines and Michael Kaibel and Philip Mayer and Petra Mutzel and Jonas Sauer}

\ccsdesc[500]{Theory of computation~Network flows}
\ccsdesc[300]{Mathematics of computing~Graph algorithms}
\ccsdesc[100]{Applied computing~Cartography}

\keywords{Algorithms, Multicriteria Optimization, Network Flows, Minimum Cut, Polygon Aggregation}

\supplement{The benchmark instances for polygon aggregation are available at~\url{https://doi.org/10.5281/zenodo.13642985}. Source code is available at~\url{https://doi.org/10.5281/zenodo.13788773}.}

\funding{This work was partially funded by the Deutsche Forschungsgemeinschaft (DFG, German Research Foundation) under grant FOR-5361 -- 459420781, and by the BMBF (Germany) and state of NRW as part of the Lamarr-Institute, LAMARR22B.}

\nolinenumbers
\hideOASIcs

\usepackage{packages/algorithm}
\usepackage{packages/colors}
\usepackage{packages/figures}
\usepackage{booktabs}
\usepackage{mathtools}

\DeclareMathOperator*{\argmin}{arg\,min}

\newcommand{\MPMC}{MPMC\xspace}
\newcommand{\PBFS}{PBFS\xspace}
\newcommand{\dichotomicscheme}{DS\xspace}

\newcommand{\graph}{G}
\newcommand{\residualGraph}{\graph_R}
\newcommand{\vertices}{V}
\newcommand{\newVertices}{\vertices_N}
\newcommand{\edges}{E}
\newcommand{\saturatedEdges}{\edges_\text{sat}}
\newcommand{\revEdges}{\overleftarrow{\edges}}
\newcommand{\revEdgesSat}{\overleftarrow{\edges}_\text{sat}}
\newcommand{\outEdges}{\hat{\edges}_\text{out}}
\newcommand{\residualEdges}{\edges_R}
\newcommand{\vertex}{v}
\newcommand{\vertexA}{u}
\newcommand{\vertexB}{v}
\newcommand{\vertexC}{w}
\newcommand{\source}{s}
\newcommand{\sink}{t}

\newcommand{\contractedGraph}{\hat{\graph}}
\newcommand{\contractedVertices}{\hat{\vertices}}
\newcommand{\contractedEdges}{\hat{\edges}}
\newcommand{\residualContractedGraph}{\contractedGraph_R}
\newcommand{\residualContractedEdges}{\contractedEdges_R}

\newcommand{\edge}{e}
\newcommand{\tree}{T}
\newcommand{\treeVertices}{\vertices_\sink}
\newcommand{\treeEdges}{\edges_\tree}
\newcommand{\capacity}{c}
\newcommand{\residualCapacity}{c_R}
\newcommand{\flow}{f}
\newcommand{\newFlow}{\flow_\text{new}}
\newcommand{\initialFlow}{\flow_\text{init}}
\newcommand{\netInflow}{\flow_\text{net}}
\DeclareMathOperator{\excess}{ex}
\newcommand{\cut}{C}
\newcommand{\breakpoint}{\beta}
\newcommand{\orphans}{\vertices_O}
\newcommand{\excessVertices}{\vertices_\text{ex}}
\newcommand{\minLambda}{\lambda_\text{min}}
\newcommand{\maxLambda}{\lambda_\text{max}}
\newcommand{\rootLambda}{\lambda_\text{root}}
\newcommand{\distanceLabel}{d_\sink}
\DeclareMathOperator{\dist}{dist}

\DeclareMathOperator{\children}{children}

\SetKwFunction{Bisect}{Bisect}
\SetKwFunction{Initialize}{Initialize}
\SetKwFunction{CalcMaxFlow}{CalcMaxFlow}
\SetKwFunction{BuildResidualGraph}{BuildResidualGraph}
\SetKwFunction{ReverseBFS}{ReverseBFS}
\SetKwFunction{CalcFlowFunction}{CalcFlowFunction}
\SetKwFunction{CalcFlowFunctionAlt}{Calc-FlowFunction}
\SetKwFunction{CalcRoot}{CalcRoot}
\SetKwFunction{SetFlow}{SetFlow}
\SetKwFunction{UpdateResidualGraph}{UpdateResidualGraph}
\SetKwFunction{RemoveTreeEdge}{RemoveTreeEdge}
\SetKwFunction{ReconnectTree}{ReconnectTree}
\SetKwFunction{DrainExcess}{DrainExcess}
\SetKwFunction{Pop}{Pop}
\SetKwFunction{AdoptWithSameDist}{AdoptSameDist}
\SetKwFunction{AdoptWithNewDist}{AdoptNewDist}

\newcommand{\polygons}{\mathcal P}
\newcommand{\triangulation}{\mathcal D}
\DeclareMathOperator*{\convexHull}{conv}
\newcommand{\solution}{S}
\newcommand{\solutions}{\mathcal S}
\newcommand{\area}{A}
\newcommand{\perimeter}{P}
\newcommand{\triangles}{T}

\begin{document}

\maketitle

\begin{abstract}
We present parametric breadth-first search~(\PBFS), a new algorithm for solving the parametric minimum cut problem in a network with source-sink-monotone capacities.
The objective is to find the set of breakpoints, i.e., the points at which the minimum cut changes.
It is well known that this problem can be solved in the same asymptotic runtime as the static minimum cut problem.
However, existing algorithms that achieve this runtime bound involve fairly complicated steps that are inefficient in practice.
\PBFS uses a simpler approach that discovers the breakpoints in ascending order, which allows it to achieve the desired runtime bound while still performing well in practice.
We evaluate our algorithm on benchmark instances from polygon aggregation and computer vision.
Polygon aggregation was recently proposed as an application for parametric minimum cut, but the monotonicity property has not been exploited fully.
\PBFS outperforms the state of the art on most benchmark instances, usually by a factor of~2--3.
It is particularly strong on instances with many breakpoints, which is the case for polygon aggregation.
Compared to the existing min-cut-based approach for polygon aggregation, \PBFS scales much better with the instance size.
On large instances with millions of vertices, it is able to compute all breakpoints in a matter of seconds.
\end{abstract}

\newpage

\section{Introduction}
Many optimization problems can be transformed into finding a minimum cut between a source and sink vertex in a graph. Some examples are image segmentation \cite{BJ01,BF06,BK03}, polygon aggregation for map simplification \cite{RDGRH21} and community detection in social networks \cite{CaiSHYH05}. Finding a minimum cut is equivalent to the problem of finding a maximum flow, which has been extensively studied from a theoretical as well as practical perspective.

In some applications, the edge capacities are not static but functions of some parameter~$\lambda$. As~$\lambda$ changes, so does the minimum cut. In this parametric version of the problem, the objective is to compute a minimum cut for each possible value of~$\lambda$.
An important subclass of this is the \emph{source-sink monotone parametric minimum cut~(\MPMC) problem}~\cite{GGT89}. Here, edge capacities are non-decreasing in~$\lambda$ for source-incident edges, non-increasing for sink-incident edges, and constant for all others. Solutions for the \MPMC problem are \emph{nested}: the source component of a minimum cut for~$\lambda$ is fully contained in the source component of a minimum cut for all~$\lambda'>\lambda$. The values of~$\lambda$ for which the minimum cut changes are called \emph{breakpoints}. A consequence of the nesting property is that the number of breakpoints is linear in the number of vertices in the graph.

Several applications are source-sink-monotone~\cite{GGT89,KBR07}.
Notable examples are binary image segmentation, where~$\lambda$ controls the penalty for separating similar neighboring pixels~\cite{KBR07}, and polygon aggregation for map simplification, where~$\lambda$ represents the desired zoom factor.
For the latter problem, Rottmann et al.~\cite{RDGRH21} give a transformation to a parametric min-cut problem that is not monotone but has the nesting property.
We show that it can be reformulated as a \MPMC problem, which enables the use of state-of-the-art \MPMC algorithms to solve the problem faster.
Additionally, with the structure of the aggregation instances in mind, we propose a new algorithm for the \MPMC problem.

\subparagraph*{State of the Art.}
If the breakpoints are supplied in advance, the \MPMC problem can be solved with a simple adaptation of the push-relabel~(PRF) algorithm for max-flows~\cite{GT88}.
This is done by exploiting the \emph{flow update property}: a max-flow for a parameter value~$\lambda_i$ can be transformed into a starting point for the flow computation of the next parameter value~$\lambda_{i+1}>\lambda_i$~\cite{GGT89}.
As a consequence, all minimum cuts can be computed in the same asymptotic runtime as a single call to PRF.
Practical implementations of PRF run in~$\mathcal O(n^2\sqrt{m})$ time on a graph with~$n$ vertices and~$m$ edges~\cite{CM88}.

For the case where the breakpoints are not known in advance, the GGT algorithm~\cite{GGT89} achieves the same asymptotic runtime bound.
However, it is significantly more complicated and has several drawbacks in practice.
To find the breakpoints, the algorithm recursively bisects the interval of possible~$\lambda$ values.
In each inspected interval, parts of the graph that are already known to lie in the source or sink component are contracted.
In practice, this step requires a significant portion of the overall runtime.
The flow update property is exploited with a fairly complicated scheme that performs two calls to the flow algorithm simultaneously in each interval and only keeps the results of the faster one.
In practice, a simplified version of the algorithm with flow updates is faster~\cite{BDGTZ07}.
However, it only achieves a worst-case runtime of~$\mathcal O(n^3\sqrt{m})$.

\subparagraph*{Our Contribution.}
We propose \emph{parametric breadth-first search} (\PBFS), a new algorithm for the \MPMC problem.
It incorporates ideas from the \emph{incremental breadth-first search} (IBFS)~\cite{GHKTW11} algorithm for the static max-flow problem, which has a runtime bound of~$\mathcal O(n^2m)$ and is known to perform well on (non-parametric) image segmentation instances.
\PBFS matches this bound, which means that, like GGT, it solves the parametric problem in the same asymptotic time as computing a single static min-cut.

Unlike GGT, \PBFS discovers the breakpoints in ascending order.
This allows flow updates to be incorporated in a straightforward and practically efficient manner.
\PBFS works by maintaining a shortest-path tree in the sink component of the residual graph.
As~$\lambda$ increases, the current max-flow is preserved by sending any excess flow towards the sink through the tree.
The resulting gradual changes to the flow are tracked by maintaining flow functions for the edges.
This way, the algorithm detects when a tree edge becomes saturated, which changes the residual graph and triggers a repair operation for the tree.
If the sink component becomes disconnected, a new breakpoint has been found.

We evaluate \PBFS on a set of benchmark instances from polygon aggregation and image segmentation.
We show that the performance benefits of IBFS carry over to aggregation instances.
On most instances, including all aggregation instances, \PBFS outperforms the state of the art, usually by a factor of~2--3.
It is slower by a similar margin on some segmentation instances that have few breakpoints and large structural changes in the residual graph between the breakpoints.

Overall, \PBFS is the first algorithm for the \MPMC problem to achieve good practical performance while bounding the worst-case runtime to that of a single flow computation.
Additionally, it drastically improves upon the runtime of the aggregation approach by Rottmann et al.~\cite{RDGRH21}, which has to resort to approximation even on moderately sized instances.
By contrast, \PBFS computes the exact breakpoint function in under a minute on instances with millions of vertices.

\section{Preliminaries}
\label{sec:preliminaries}
\subsection{Constant Capacities}
A \emph{(constant) flow network} is a graph~$\graph=(\vertices,\edges,\capacity,\source,\sink)$ consisting of a set~$\vertices$ of~$n$ vertices, a set~$\edges\subseteq\vertices\times\vertices$ of~$m$ directed edges, a function~$\capacity\colon \edges \to \mathbb{R}_{\geq 0}$ that assigns a \emph{capacity}~$\capacity(\edge)$ to each edge~$\edge$, and designated source and sink vertices~$\source,\sink\in\vertices$.
For simplicity, we assume that for every edge~$(\vertexA,\vertexB)\in\edges$, the graph also contains the reverse edge~$(\vertexB,\vertexA)$.
If necessary, $(\vertexB,\vertexA)$ can be added to~$\edges$ with capacity~$0$.
The capacity of an edge set~$\edges'\subseteq\edges$ is the sum of the individual edge capacities, i.e., $\capacity(\edges'):=\sum_{\edge\in\edges'}\capacity(\edge)$.

\subparagraph*{Flows.}
An $(\source, \sink)$-\emph{flow} in~$\graph$ is a function~$\flow\colon \edges \to \mathbb{R}$ that fulfills the following constraints:
\begin{itemize}
    \item \textbf{Capacity:} $\flow(\vertexA,\vertexB) \leq \capacity(\vertexA,\vertexB)$ for $(\vertexA,\vertexB) \in \edges$
    \item \textbf{Antisymmetry:} $\flow(\vertexA,\vertexB)=-\flow(\vertexB,\vertexA)$ for~$(\vertexA,\vertexB)\in\edges$
    \item \textbf{Flow conservation:} $\sum_{(\vertexB,\vertexA) \in \edges} \flow(\vertexB,\vertexA) = 0$ for $\vertexA \in \vertices\setminus\{\source,\sink\}$
\end{itemize}
The \emph{value} of~$\flow$ is given by~$\sum_{(\vertexB,\sink) \in \edges} \flow(\vertexB,\sink)$.
We call $\flow$ a \emph{maximum flow} if it has maximal value among all~$(\source, \sink)$-flows. 
The flow over an edge set~$\edges'\subseteq\edges$ is the sum of the individual edge flows, i.e., $\flow(\edges'):=\sum_{\edge\in\edges'}\flow(\edge)$.
If~$\flow$ fulfills the capacity and antisymmetry constraints but (possibly) violates the flow conservation constraints, it is called a~\emph{pseudoflow}.
The \emph{excess} of a vertex~$\vertexA$ is defined as~$\excess(\vertexA):=\sum_{(\vertexB,\vertexA) \in \edges} \flow(\vertexB,\vertexA)$.
Thus, the flow conservation constraint for~$\vertexA$ is fulfilled if~$\excess(\vertexA)=0$.
If all excesses are non-negative, $\flow$ is called a~\emph{preflow}.

The \emph{residual capacity} of an edge~$\edge=(\vertexA,\vertexB) \in \edges$ is defined as~$\residualCapacity(\edge) := \capacity(\edge) - \flow(\edge)$.
The capacity constraint for~$\edge$ is fulfilled if~$\residualCapacity(\edge) \geq 0$.
We call~$\edge$ \emph{saturated} if~$\residualCapacity(\edge)=0$ and \emph{residual} if~$\residualCapacity(\edge)>0$.
The \emph{residual graph}~$\residualGraph := (\vertices, \residualEdges,\residualCapacity,\source,\sink)$ contains only the residual edges, which are denoted by~$\residualEdges\subseteq\edges$.

\subparagraph*{Cuts.}
An~$(\source, \sink)$-\emph{cut} in~$\graph$ is a partition~$\cut=(\vertices_\source,\vertices_\sink)$ of~$\vertices$ into a \emph{source component}~$\vertices_\source$ with~$\source\in\vertices_\source$ and a \emph{sink component}~$\vertices_\sink$ with~$\sink\in\vertices_\sink$.
The capacity~$|\cut|$ of the cut is the capacity of the \emph{cut set}~$\edges(\cut)$, which is the set of edges~$(\vertexA,\vertexB)$ with~$\vertexA \in \vertices_\source$ and~$\vertexB \in \vertices_\sink$.
We call~$\cut$ a \emph{minimum cut} if it has minimal value among all~$(\source, \sink)$-cuts, and a \emph{sink-minimal minimum cut} if~$|\vertices_\sink|$ is minimal among all minimum cuts.

\subparagraph*{Relationship Between Flows and Cuts.}
Given a flow network~$\graph$, a flow~$\flow$ in~$\graph$ and a cut~$\cut=(\vertices_\source,\vertices_\sink)$ in~$\graph$, the flow of~$\flow$ over~$\cut$ is defined as the flow~$\flow(\cut)=\flow(\edges(\cut))$ over the cut set.
It follows from flow conservation that this is equal to the value of~$\flow$.
We say that~$\flow$ \emph{saturates}~$\cut$ if it saturates all edges in~$\edges(\cut)$.
The \emph{max-flow-min-cut theorem}~\cite{FF56} states that the value of a maximum flow is equal to the capacity of a minimum cut.
The following is a corollary of this:

\begin{observation} \label{obs:saturateAllCuts}
    A maximum flow~$\flow$ in a flow network~$\graph$ saturates every minimum cut~$\cut$ in~$\graph$.
\end{observation}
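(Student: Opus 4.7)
The plan is to combine the max-flow-min-cut theorem with the standard weak-duality argument, and then use the non-negativity of the per-edge capacity slack to upgrade an aggregate equality into an edge-by-edge one.

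First, I would establish the classical identity that for any $(\source,\sink)$-cut $\cut=(\vertices_\source,\vertices_\sink)$ and any flow $\flow$, the value of $\flow$ equals $\flow(\edges(\cut))$. This follows by summing the excesses $\excess(\vertexA)$ over $\vertexA\in\vertices_\sink$: every interior vertex of $\vertices_\sink$ contributes $0$ by flow conservation, while $\sink$ contributes the value of $\flow$ by definition. Expanding the sum and invoking antisymmetry, all contributions from edges with both endpoints in $\vertices_\sink$ cancel in pairs, so what remains equals precisely $\sum_{\edge\in\edges(\cut)}\flow(\edge)$. Combining this with the capacity constraint yields the weak-duality bound $\flow(\edges(\cut))\le\capacity(\edges(\cut))=|\cut|$ for any flow and any cut.

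Next, I would invoke the max-flow-min-cut theorem: since $\flow$ is maximum and $\cut$ is minimum, the value of $\flow$ equals $|\cut|$, and therefore $\sum_{\edge\in\edges(\cut)}\flow(\edge)=\sum_{\edge\in\edges(\cut)}\capacity(\edge)$. Because each term satisfies $\flow(\edge)\le\capacity(\edge)$, this aggregate equality forces $\flow(\edge)=\capacity(\edge)$ for every $\edge\in\edges(\cut)$, which is exactly the assertion that $\flow$ saturates~$\cut$.

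The only mild obstacle is the antisymmetric flow convention adopted in the preliminaries: when expanding $\sum_{\vertexA\in\vertices_\sink}\excess(\vertexA)$ into a sum over edges, one must carefully track which pairs cancel and confirm that the remaining contributions are indexed exactly by the cut set $\edges(\cut)$ (edges from $\vertices_\source$ to $\vertices_\sink$), rather than by the reverse edges. Once this bookkeeping is settled, the result is a short consequence of max-flow-min-cut and requires no further work.
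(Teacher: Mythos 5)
Your proof is correct and takes essentially the same route as the paper. The paper's one-line argument ("the flow over $\cut$ equals the value of $\flow$, which equals the capacity of $\cut$") relies on exactly the ingredients you spell out: the flow-conservation identity relating flow value to flow across the cut set (stated just before the observation in Section~\ref{sec:preliminaries}), the max-flow-min-cut theorem, and the implicit step that an aggregate equality under a termwise inequality forces termwise equality. You have simply made explicit the bookkeeping (summing excesses over $\vertices_\sink$, cancellation via antisymmetry) that the paper leaves to the reader.
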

\begin{proof}
    The flow over~$\cut$ is equal to the value of~$\flow$, which is equal to the capacity of~$\cut$.
\end{proof}

Given a flow network~$\graph$ and a maximum flow~$\flow$ in~$\graph$ with residual graph~$\residualGraph$, we define~$\vertices_\sink(\flow)$ as the set of vertices from which~$\sink$ is reachable in~$\residualGraph$.
The corresponding cut is given by $\cut(\flow):=(\vertices_\source(\flow),\vertices_\sink(\flow))$ with~$\vertices_\source(\flow)=\vertices\setminus\vertices_\sink(\flow)$.

\begin{corollary} \label{cor:sinkMinimalCut}
    For any flow network~$\graph$, there is a unique sink-minimal minimum cut~$\cut=(\vertices_\source,\vertices_\sink)$, which equals~$\cut(\flow)$ for every maximum flow~$\flow$.
\end{corollary}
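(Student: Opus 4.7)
The plan is to fix an arbitrary maximum flow $\flow$ and show two things: (i)~$\cut(\flow)$ is a minimum cut, and (ii)~for every minimum cut $\cut' = (\vertices_\source', \vertices_\sink')$, the inclusion $\vertices_\sink(\flow) \subseteq \vertices_\sink'$ holds. Together these imply that $\cut(\flow)$ is the sink-minimal minimum cut, that it is unique, and that this unique cut does not depend on the choice of maximum flow.

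For (i), I would first note that $\source \notin \vertices_\sink(\flow)$: otherwise there would be a residual $\source$-$\sink$-path in $\residualGraph$ yielding an augmenting path, contradicting maximality of $\flow$. Hence $\cut(\flow)$ is indeed a cut. Next I would argue that every edge $(\vertexA, \vertexB) \in \edges$ with $\vertexA \in \vertices_\source(\flow)$ and $\vertexB \in \vertices_\sink(\flow)$ is saturated: if such an edge had $\residualCapacity(\vertexA, \vertexB) > 0$, then $\vertexA$ could reach $\sink$ in $\residualGraph$ via $\vertexB$, contradicting $\vertexA \notin \vertices_\sink(\flow)$. Summing capacities over $\edges(\cut(\flow))$ thus equals the flow value of $\flow$, so by the max-flow-min-cut theorem $\cut(\flow)$ is a minimum cut.

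For (ii), let $\cut' = (\vertices_\source', \vertices_\sink')$ be any minimum cut and pick $\vertex \in \vertices_\sink(\flow)$. By definition there is a residual path from $\vertex$ to $\sink$ in $\residualGraph$. Suppose for contradiction that $\vertex \in \vertices_\source'$. Since $\sink \in \vertices_\sink'$, the path must contain a first edge $(x, y)$ crossing from $\vertices_\source'$ to $\vertices_\sink'$. This edge belongs to $\edges(\cut')$, and by Observation~\ref{obs:saturateAllCuts} it is saturated under $\flow$, so $\residualCapacity(x, y) = 0$, contradicting the fact that $(x, y)$ lies on a residual path. Therefore $\vertex \in \vertices_\sink'$, proving $\vertices_\sink(\flow) \subseteq \vertices_\sink'$.

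From (i) and (ii), $\cut(\flow)$ is a sink-minimal minimum cut. Uniqueness is immediate: if $\cut''$ were another sink-minimal minimum cut, then (ii) applied with $\cut' = \cut''$ gives $\vertices_\sink(\flow) \subseteq \vertices_\sink''$, and sink-minimality of $\cut''$ forces equality, so $\cut'' = \cut(\flow)$. Since this identification holds for every maximum flow, $\cut(\flow)$ is independent of the choice of $\flow$. I expect the only subtlety to be the cut-crossing argument in step~(ii); everything else is a direct unpacking of definitions combined with Observation~\ref{obs:saturateAllCuts} and the max-flow-min-cut theorem.
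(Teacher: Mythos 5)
Your proof is correct and uses the same core argument as the paper: a maximum flow saturates every minimum cut (Observation~\ref{obs:saturateAllCuts}), so the sink component of any minimum cut must contain $\vertices_\sink(\flow)$, and sink-minimality then forces equality. The only difference is that you explicitly verify in step~(i) that $\cut(\flow)$ is itself a minimum cut, a fact the paper's proof leaves implicit when it invokes minimality of $|\vertices_\sink|$ to conclude equality.
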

\begin{proof}
    Consider a maximum flow~$\flow$ and a sink-minimal minimum cut~$\cut=(\vertices_\source,\vertices_\sink)$.
    By Observation~\ref{obs:saturateAllCuts}, $\flow$ saturates~$\cut$.
    Thus, for any vertex~$\vertex\in\vertices_\source$, there is no residual~$\vertex$-$\sink$-path in~$\graph$, which implies~$\vertices_\source\subseteq\vertices_\source(\flow)$ and thereby~$\vertices_\sink\supseteq\vertices_\sink(\flow)$.
    Because~$|\vertices_\sink|$ is minimal among all minimum cuts, equality must hold.
\end{proof}

\subparagraph*{Contraction.}
For a vertex set~$\vertices' \subseteq \vertices \setminus \{ \sink \}$, the \emph{source-contracted graph}~$\graph/\vertices'$ is obtained from~$\graph$ by contracting all vertices in~$\vertices'$ into~$\source$.
A vertex~$\vertexA$ is \emph{contracted} into~$\source$ by removing it from the graph and replacing all incident edges~$(\vertexA,\vertexB)$ or~$(\vertexB,\vertexA)$ with edges~$(\source,\vertexB)$ or~$(\vertexB,\source)$, respectively.
Any resulting multi-edges are replaced with a single edge whose capacity is the sum of the individual edge capacities.
The resulting source-contracted edge set is denoted as~$\edges/\vertices'$.

\subparagraph*{Distance Labels.}
A \emph{distance labeling} for the residual graph~$\residualGraph := (\vertices, \residualEdges)$ 
is a function~$\distanceLabel\colon \vertices \to \mathbb{N}_{0}$ with~$\distanceLabel(\sink) = 0$.
We call the labeling \emph{valid} if~$\distanceLabel(\vertexA) \leq \distanceLabel(\vertexB)+1$ holds for all edges $\edge = (\vertexA,\vertexB) \in \residualEdges$.
In a valid distance labeling, $\distanceLabel(\vertex)$ is a lower bound for the length of the shortest~$\vertex$-$\sink$-path.
An edge $\edge = (\vertexA,\vertexB) \in \residualEdges$ is called \emph{admissible} with respect to~$\distanceLabel$ if~$\distanceLabel(\vertexA) = \distanceLabel(\vertexB)+1$.

\subsection{Parametric Capacities}
A \emph{parametric flow network}~$\graph=(\vertices,\edges,\capacity,[\minLambda,\maxLambda],\source,\sink)$ differs from a constant flow network in that the capacity of each edge~$\edge$ is a function~$\capacity(\edge)\colon[\minLambda,\maxLambda]\to\mathbb{R}_{\geq 0}$.
In all applications considered in this paper, the capacity can be modelled as an affine function~$\capacity(\edge)(\lambda)=c_1 \cdot \lambda + c_2$.
However, the presented algorithms support any type of function for which the roots can be calculated efficiently.
We say that~$\capacity$ (and thereby~$\graph$) is \emph{(source-sink-)monotone} if
\begin{itemize}
    \item $\capacity(\source,\vertex)$ is non-decreasing for all $\vertex \in \vertices\setminus\{\sink\}$,
    \item $\capacity(\vertex,\sink)$ is non-increasing for all $\vertex \in \vertices\setminus\{\source\}$, and
    \item $\capacity(\vertexA,\vertexB)$ is constant for all $\vertexA \in \vertices\setminus\{\source\},\vertexB \in \vertices\setminus\{\sink\}$.
\end{itemize}

For a parameter~$\lambda \in [\minLambda,\maxLambda]$, we denote by~$\capacity[\lambda]\colon \edges \to \mathbb R_{\geq 0}$ the constant edge capacities that are obtained by evaluating the capacity function of each edge~$\edge$ at~$\lambda$, i.e., $\capacity[\lambda](\edge) := \capacity(\edge)(\lambda)$.
The corresponding constant flow network is denoted by~$\graph[\lambda]:=(\vertices,\edges,\capacity[\lambda],\source,\sink)$.
A \emph{flow function} assigns to each edge~$\edge$ a function~$\flow(\edge)\colon[\minLambda,\maxLambda]\to\mathbb{R}$ such that~$\flow(\edge)(\lambda)$ fulfills the antisymmetry and flow conservation constraints for every parameter values~$\lambda\in[\minLambda,\maxLambda]$, and there is at least one parameter value~$\lambda'$ for which~$\flow(\edge)(\lambda')$ fulfills the capacity constraint.
As with the capacities, we use the notation~$\flow[\lambda](\edge):=\flow(\edge)(\lambda)$.
Note that~$\flow[\lambda']$ is a flow, but~$\flow[\lambda]$ for~$\lambda\neq\lambda'$ may not be.

The minimum cut problem can be extended to the \emph{parametric minimum cut problem}:
Given a parametric flow network~$\graph=(\vertices,\edges,\capacity,[\minLambda,\maxLambda],\source,\sink)$, compute a set~$\mathcal{C}$ of cuts that contains a minimum cut in~$\graph[\lambda]$ for every parameter value~$\lambda\in[\minLambda,\maxLambda]$.
If there are no restrictions on the capacity functions, the size of~$\mathcal{C}$ cannot be bounded.
However, if the network is source-sink-monotone, the~\emph{nesting property}~\cite{GGT89} holds:
\begin{observation}
\label{lem:nesting}
Let~$\graph=(\vertices,\edges,\capacity,[\minLambda,\maxLambda],\source,\sink)$ be a monotone parametric flow network and $\lambda_1,\lambda_2\in[\minLambda,\maxLambda]$ two parameter values with~$\lambda_1<\lambda_2$.
Further, let~$(\vertices^1_\source,\vertices^1_\sink)$ and~$(\vertices^2_\source,\vertices^2_\sink)$ be the sink-minimal minimum cuts in~$\graph[\lambda_1]$ and~$\graph[\lambda_2]$, respectively.
Then~$\vertices^1_\source \subseteq \vertices^2_\source$ and~$\vertices^1_\sink \supseteq \vertices^2_\sink$.
\end{observation}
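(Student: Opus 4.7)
The plan is a classical submodularity-plus-monotonicity argument that collapses neatly once combined with the uniqueness of the sink-minimal minimum cut from Corollary~\ref{cor:sinkMinimalCut}. Write $\cut_1 := (\vertices^1_\source, \vertices^1_\sink)$ and $\cut_2 := (\vertices^2_\source, \vertices^2_\sink)$ for brevity, and consider the two ``crossed'' candidate cuts $\cut' := (\vertices^1_\source \cap \vertices^2_\source,\, \vertices^1_\sink \cup \vertices^2_\sink)$ and $\cut'' := (\vertices^1_\source \cup \vertices^2_\source,\, \vertices^1_\sink \cap \vertices^2_\sink)$. Both are valid $(\source,\sink)$-cuts, since the intersection and union of the source components still contain $\source$ and exclude $\sink$.

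The central goal is to prove that $\cut''$ is a minimum cut in $\graph[\lambda_2]$. Once this is done, the conclusion follows quickly: $\cut''$ has sink component $\vertices^1_\sink \cap \vertices^2_\sink \subseteq \vertices^2_\sink$, but by Corollary~\ref{cor:sinkMinimalCut} the sink-minimal minimum cut in $\graph[\lambda_2]$ is unique and has sink component $\vertices^2_\sink$, so sink-minimality forces the containment to be an equality. This yields $\vertices^2_\sink \subseteq \vertices^1_\sink$ and, by complementation, $\vertices^1_\source \subseteq \vertices^2_\source$.

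To show that $\cut''$ is minimum at $\lambda_2$, I would combine two ingredients, writing $|\cut|_\lambda$ for the capacity of $\cut$ in $\graph[\lambda]$. First, \emph{submodularity of cut capacity} at $\lambda_1$ gives
\[
|\cut_1|_{\lambda_1} + |\cut_2|_{\lambda_1} \;\geq\; |\cut'|_{\lambda_1} + |\cut''|_{\lambda_1},
\]
which is a routine edge-by-edge check (only edges from $\vertices^1_\source\setminus\vertices^2_\source$ to $\vertices^2_\source\setminus\vertices^1_\source$ or vice versa contribute asymmetrically). Second, \emph{monotonicity} controls how capacities change between $\lambda_1$ and $\lambda_2$: decomposing a cut's capacity into source-incident, sink-incident, and internal contributions, the non-decreasing behavior of source-incident capacities applied to $\vertices^1_\sink \cap \vertices^2_\sink \subseteq \vertices^2_\sink$ and the non-increasing behavior of sink-incident capacities applied to $\vertices^1_\source \cup \vertices^2_\source \supseteq \vertices^2_\source$ yield
\[
|\cut''|_{\lambda_2} - |\cut''|_{\lambda_1} \;\leq\; |\cut_2|_{\lambda_2} - |\cut_2|_{\lambda_1}.
\]
Adding the two inequalities and canceling the common $\lambda_1$-terms gives $|\cut_1|_{\lambda_1} + |\cut_2|_{\lambda_2} \geq |\cut'|_{\lambda_1} + |\cut''|_{\lambda_2}$. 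Because $\cut_1$ is minimum at $\lambda_1$ and $\cut_2$ is minimum at $\lambda_2$, the reverse inequality holds as well, so equality is forced throughout and, in particular, $\cut''$ is minimum in $\graph[\lambda_2]$.

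The main obstacle I expect is the careful bookkeeping when combining submodularity at a single parameter with the monotonicity-based bridge between $\lambda_1$ and $\lambda_2$. Edges need to be partitioned correctly into the three classes of the monotonicity definition, and any $(\source,\sink)$-edge, which is not covered by any of the three cases, must be handled by noting that it lies in every cut set and contributes identically to both sides of each inequality, thereby dropping out. The remaining steps are essentially arithmetic.
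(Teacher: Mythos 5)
The paper states this as an observation and cites~\cite{GGT89} for it without giving a proof, so there is no in-paper argument to compare against. Your proposal is correct and follows the canonical submodularity-plus-monotonicity argument for nestedness of parametric minimum cuts, which is essentially the argument in the cited reference. The bookkeeping checks out: submodularity of the cut function at $\lambda_1$ yields $|\cut_1|_{\lambda_1} + |\cut_2|_{\lambda_1} \geq |\cut'|_{\lambda_1} + |\cut''|_{\lambda_1}$; decomposing cut capacity into source-incident, sink-incident, internal, and $(\source,\sink)$ contributions and using that $\vertices^1_\sink\cap\vertices^2_\sink\subseteq\vertices^2_\sink$ (fewer non-decreasing source-incident terms) and $\vertices^1_\source\cup\vertices^2_\source\supseteq\vertices^2_\source$ (more non-increasing sink-incident terms) yields the monotonicity bridge; and the two minimality inequalities for $\cut_1$ and $\cut_2$ force equality, so $\cut''$ is minimum at $\lambda_2$. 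The final step correctly invokes Corollary~\ref{cor:sinkMinimalCut} to conclude $\vertices^2_\sink\subseteq\vertices^1_\sink$ from sink-minimality of $\cut_2$. One minor remark: your argument in fact only uses that $\cut_2$ is sink-minimal; $\cut_1$ may be any minimum cut at $\lambda_1$, which shows the conclusion is slightly stronger than strictly needed for the stated hypotheses.
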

This property implies that there is a solution set of size~$\mathcal{O}(n)$.
To represent it in a compact manner, we define the \emph{breakpoint function}.
This is a function~$\breakpoint\colon \vertices \to \mathbb{R}_{\geq 0}$ with the following property: for each parameter value~$\lambda \in [\minLambda,\maxLambda]$, the sink-minimal minimum cut in~$\graph[\lambda]$ is given by
\begin{align*}
\cut(\breakpoint,\lambda) &:= (\vertices_\source(\breakpoint,\lambda),\vertices_\sink(\breakpoint,\lambda)) \text{ with}\\
\vertices_\source(\breakpoint,\lambda) &:= \{ \vertex \in \vertices \mid \breakpoint(\vertex) \leq \lambda \} \text{ and } \vertices_\sink(\breakpoint,\lambda) := \{ \vertex \in \vertices \mid \breakpoint(\vertex) > \lambda \}.
\end{align*}
We call a parameter value~$\lambda$ a \emph{breakpoint} if~$\breakpoint(\vertex)=\lambda$ for at least one vertex~$\vertex$.
The \emph{monotone parametric minimum cut (\MPMC) problem} can hence be stated as follows:
Given a monotone parametric flow network~$\graph$, compute a breakpoint function~$\breakpoint$ for~$\graph$.

The nesting property can be exploited when computing the breakpoints in ascending order:
Once  a sink-minimal minimum cut~$\cut_1=(\vertices^1_\source,\vertices^1_\sink)$ for some parameter value~$\lambda_1$ is known, it is sufficient to consider the source-contracted graph~$\graph[\lambda_2]/\vertices^1_\source$ for all~$\lambda_2>\lambda_1$.
This is because the sink-minimal minimum cut~$\cut_2$ in~$\graph[\lambda_2]$ cannot separate~$\vertices^1_\source$ and is therefore preserved in~$\graph[\lambda_2]/\vertices^1_\source$.
We give a formal proof of this in Section~\ref{sec:proof:contract}.

\section{Related Work}
\subsection{Static Max-Flow}
Recent decades have seen tremendous theoretical progress on reducing the worst-case runtime for the static max-flow problem (see~\cite{CL23} for an overview).
However, our focus is on algorithms that have been evaluated in practice.
On general instances, the push-relabel~(PRF) algorithm~\cite{GT88} and Hochbaum's pseudoflow algorithm~(HPF)~\cite{Hoc08} usually exhibit the best performance~\cite{CG97,CH09}.
PRF works by maintaining a preflow and distance labeling on the residual graph.
In each step, it selects a vertex with non-zero excess and attempts to push the excess along an admissible edge.
If no such edge exists, the distance label of the vertex is increased.
The worst-case runtime depends on the vertex selection rule.
The variant that chooses a vertex with a maximal distance label achieves a runtime of~$\mathcal O(n^2 \sqrt{m})$~\cite{CM88} and is fast in practice~\cite{CG97}.
This can be improved to~$\mathcal O(nm \log(n^2/m))$ with the dynamic trees data structure~\cite{ST83}, but experimental evidence~\cite{TW09} suggests that this only pays off on instances with very long residual paths.
HPF, which maintains a pseudoflow instead of a preflow, runs in~$\mathcal O(n^2m)$ time without dynamic trees and in~$\mathcal O(nm \log n)$ with them.

Another line of research focuses on exploiting structural features of instances that arise in computer vision applications.
Boykov and Kolmogorov~\cite{BK04} observe that these instances are often grid-like and have very short residual~$\source$-$\sink$-paths.
Their BK algorithm exploits this by growing trees from~$\source$ and~$\sink$ simultaneously.
Although no strongly polynomial time bound is known, BK is often much faster than PRF and competitive with HPF on computer vision instances~\cite{FHM16}.
Incremental breadth-first search (IBFS)~\cite{GHKTW11} modifies BK by maintaining shortest-path trees, which allows it to achieve a runtime bound of~$\mathcal O(n^2m)$ while also performing well in practice.
A version with dynamic trees runs in~$\mathcal O(nm \log n)$ time.
Excesses IBFS (EIBFS)~\cite{GHKKTW15} is a variant of IBFS that maintains preflows, which makes it more suitable for dynamic settings in which the flow network changes in discrete steps.
However, the steps must be known in advance, which is not the case in the \MPMC problem, so EIBFS is not directly applicable there.

\subsection{Monotone Parametric Max-Flow}
Hochbaum~\cite{Hoc08} distinguishes between two variants of the \MPMC problem:
In the \emph{simple sensitivity analysis}, the breakpoints are given online as a sequence~$\lambda_1<\lambda_2<\dots<\lambda_k$.
The objective is to compute a maximum flow~$\flow[\lambda_i]$ and a corresponding minimum cut~$\cut[\lambda_i]$ for every parameter value~$\lambda_i$.
In the \emph{complete parametric analysis}, the breakpoints are not known in advance.
The objective is to compute the breakpoint function~$\breakpoint$.
In contrast to our definitions in Section~\ref{sec:preliminaries}, it is not required that the minimum cut~$\cut(\breakpoint,\lambda)$ for a parameter value~$\lambda$ is sink-minimal, as long as the cuts represented by~$\breakpoint$ are nested.

\subparagraph*{Simple Sensitivity Analysis.}
\label{sec:existing:simple}
The parametric variant of PRF~\cite{GGT89}, in which the breakpoints are given, processes the parameter sequence in ascending order by exploiting the fact that PRF can be initialized with any preflow.
A maximum flow for some parameter value~$\lambda_i$ can be turned into a preflow for~$\lambda_{i+1}$ by decreasing the flow on all sink-incident edges to match their capacity and increasing the flow on all cut edges to saturate them.
This is a preflow because it produces positive excess at the tail and head vertices, respectively.
PRF is then initialized with this preflow and run for~$\lambda_{i+1}$.
For~$\mathcal O(n)$ breakpoints, this has the same asymptotic runtime as the basic PRF algorithm.
A similar approach works for the pseudoflow algorithm~\cite{Hoc08}.

\subparagraph*{Dichotomic Scheme.}
For the complete parametric analysis, Eisner and Severance~\cite{ES76} propose a \emph{dichotomic scheme} (\dichotomicscheme), which recursively subdivides the interval~$[\minLambda, \maxLambda]$.
It begins by computing minimum cuts~$\cut_\ell=(\vertices^\ell_\source,\vertices^\ell_\sink)$ for~$\minLambda$ and $\cut_u=(\vertices^u_\source,\vertices^u_\sink)$ for~$\maxLambda$.
Then it contracts~$\vertices^\ell_\source$ into~$\source$ and~$\vertices^u_\sink$ into~$\sink$, which yields a new graph~$\contractedGraph$, and calls the recursive procedure~$\Bisect([\minLambda,\maxLambda], \cut_\ell, \cut_u, \contractedGraph)$.
This procedure takes as input an interval~$[\lambda_\ell, \lambda_u]$, minimum cuts~$\cut_\ell=(\vertices^\ell_\source,\vertices^\ell_\sink)$ for~$\lambda_\ell$ and~$\cut_u=(\vertices^u_\source,\vertices^u_\sink)$ for~$\lambda_u$, and a contracted graph~$\contractedGraph$.
First, $\Bisect$ identifies the parameter value~$\lambda_m \in [\lambda_\ell,\lambda_u]$ for which~$\cut_\ell$ and~$\cut_u$ have equal capacity.
Then it invokes a max-flow algorithm on~$\contractedGraph[\lambda_m]$ to compute a minimum cut~$\cut_m=(\vertices^m_\source,\vertices^m_\sink)$.
If~$\cut_m$ has a smaller capacity than~$\cut_\ell$ when evaluated for~$\lambda_m$, then~$\cut_m$ is added to the solution set.
In this case, the algorithm makes the recursive calls~$\Bisect([\lambda_\ell,\lambda_m], \cut_\ell, \cut_m, \contractedGraph_\ell)$ and~$\Bisect([\lambda_m,\lambda_u], \cut_m, \cut_u, \contractedGraph_u)$, where~$\contractedGraph_\ell$ is obtained from~$\contractedGraph$ by contracting~$\vertices^m_\sink$ into~$\sink$, and~$\contractedGraph_u$ is obtained by contracting~$\vertices^m_\source$ into~$\source$.
Overall, \dichotomicscheme makes~$\mathcal O(n)$ max-flow invocations.
If the contracted graphs~$\contractedGraph_\ell$ and~$\contractedGraph_u$ have roughly equal size in each recursive call, then the runtime is proportional to a single max-flow computation.
However, this does not hold for all instances.

\subparagraph*{GGT.}
The GGT algorithm~\cite{GGT89} modifies \dichotomicscheme to incorporate flow updates.
It uses PRF as the max-flow algorithm, although the same scheme can also be applied to HPF~\cite{Hoc08}.
The procedure~$\Bisect$ is additionally supplied with max-flows~$\flow_\ell$ for~$\lambda_\ell$ and~$\flow_u$ for~$\lambda_u$.
Two runs of PRF are performed in parallel, one on~$\contractedGraph[\lambda_m]$ and one on the reverse graph.
The forward run is initialized with the preflow obtained from~$\flow_\ell$ and the backward run with the preflow obtained from~$\flow_u$.
W.l.o.g., assume the forward run terminates first with a max-flow~$\flow_m$.
Then the backward run is halted and the cut~$\cut_m$ is computed from~$\flow_m$.
If~$|\vertices^m_\source|>|\vertices^m_\sink|$, the backward run is finished and~$\flow_m$ is replaced with the computed flow.
Finally, $\flow_m$ is given as input to the recursive~$\Bisect$ calls, along with~$\flow_\ell$ or~$\flow_u$, respectively.

Flow updates allow GGT to achieve the same asymptotic runtime as a single PRF run.
The idea is that for the recursive call whose contracted graph is larger, the flow computations can be seen as a continuation of the previous flow computations, just like in the algorithm for the simple sensitivity analysis.
Therefore, the overall runtime only depends on the time for the recursive call with the smaller graph.


\subparagraph*{Drawbacks of Existing Approaches.}
Both bisection-based approaches do not explore the parameter space in order, so they need to maintain multiple flows and contracted graphs in memory.
Additionally, the time overhead for contracting the graphs is significant in practice.
In \dichotomicscheme, every flow computation starts from scratch on the contracted graph, which is wasteful.
On the other hand, the two concurrent flow computations performed by GGT are also wasteful in the sense that only one of them ends up being used, although it is not known in advance which one.
In an experimental comparison by Babenko et al.~\cite{BG06,BDGTZ07}, \dichotomicscheme is faster than GGT on nearly all tested instances.
The only exception is a synthetic family for which~$\Bisect$ produces contracted graphs with extremely imbalanced sizes.
By contrast, the splits tend to be fairly balanced on realistic instances.
In this case, the time savings from incorporating flow updates do not outweigh the overhead for the second flow computation.

\subsection{Applications}
\label{sec:applications}
Gallo, Grigoriadis and Tarjan~\cite{GGT89} list several applications for \MPMC, such as flow sharing and zero-one fractional programming.
More recent applications can be found in computer vision and cartography.

\subparagraph*{Computer Vision.}
Kolmogorov, Boykov and Rother \cite{KBR07} discuss applications of \MPMC in computer vision. Many problems (e.g., binary image segmentation, multiview reconstruction, surface fitting) can be formulated as maximum a-posteriori estimation of binary variables in a Markov random field (MAP-MRF).
The objective is to minimize the weighted sum of a data term and a regularization term, which enforces spatial coherence.
For example, in image segmentation, the data term encodes the likelihood that an individual pixel belongs to a certain class, whereas the regularization term ensures that similar neighboring pixels are more likely to be grouped together.
If the regularization term is submodular, then minimizing the sum for every possible weight is equivalent to solving the \MPMC problem.

\subparagraph*{Polygon Aggregation.}
\begin{figure}
    \centering
    \begin{minipage}{0.5\textwidth}
\begin{tikzpicture}[line cap=round,line join=round,>=triangle 45, xscale=1]

\coordinate (A) at (1.5*2,3);
\coordinate (B) at (1.5*2,1.5);
\coordinate (C) at (1.5*3,2.5);
\coordinate (D) at (1.5*3,4);
\coordinate (E) at (1.5*4.5,3);
\coordinate (F) at (1.5*5,1.5);
\coordinate (G) at (1.5*6,2);
\coordinate (H) at (1.5*6,3.5);

\fill[faceColor] (A) to (B) to (C) to (D) to cycle;
\fill[faceColor] (E) to (F) to (G) to (H) to cycle;

\draw [boundaryLine] (A) to (B);
\draw [boundaryLine] (B) to (C);
\draw [nonBoundaryLine] (C) to (D);
\draw [boundaryLine] (D) to (A);
\draw [nonBoundaryLine] (E) to (F);
\draw [boundaryLine] (F) to (G);
\draw [boundaryLine] (G) to (H);
\draw [boundaryLine] (H) to (E);
\draw [nonBoundaryLine] (D) to (H);
\draw [boundaryLine] (D) to (E);
\draw [nonBoundaryLine] (E) to (C);
\draw [boundaryLine] (C) to (F);
\draw [nonBoundaryLine] (F) to (B);

\node at (1.5*5.4,2.6) {$p_1$};
\node at (1.5*4.6,3.4) {$p_2$};
\node at (1.5*3.4,3.2) {$p_3$};
\node at (1.5*4.2,2.4) {$p_4$};
\node at (1.5*3.2,1.8) {$p_5$};
\node at (1.5*2.6,2.8) {$p_6$};

\end{tikzpicture}
\end{minipage}\begin{minipage}{0.5\textwidth}
\centering
\begin{tikzpicture}[line cap=round, line join=round, >=triangle 45,xscale=1.3]

\coordinate (p1) at (1.5*5.4,2.6);
\coordinate (p2) at (1.5*4.6,3.4);
\coordinate (p3) at (1.5*3.4,3.2);
\coordinate (p4) at (1.5*4.2,2.4);
\coordinate (p5) at (1.5*3.2,1.8);
\coordinate (p6) at (1.5*2.6,2.8);
\coordinate (s) at (6,0.7);
\coordinate (t) at (6,4.6);


\draw[cutEdge, terminalEdge, bend left=30] (t) to (p1);
\draw[regularEdge, terminalEdge, bend left=15] (t) to (p2);
\draw[cutEdge, terminalEdge, bend right=10] (t) to (p3);
\draw[cutEdge, terminalEdge, bend left=10] (t) to (p4);
\draw[regularEdge, terminalEdge, bend right=45] (t) to (p5);
\draw[cutEdge, terminalEdge, bend right=30] (t) to (p6);
\draw[regularEdge, terminalEdge,bend right=45] (s) to (p1);
\draw[cutEdge, terminalEdge, bend right=15] (s) to (p2);
\draw[regularEdge, terminalEdge,bend left=10] (s) to (p3);
\draw[regularEdge, terminalEdge,bend right=10] (s) to (p4);
\draw[cutEdge, terminalEdge, bend left=15] (s) to (p5);
\draw[regularEdge, terminalEdge,bend left=45] (s) to (p6);
\draw[cutEdge] (p6) to (p5);
\draw[cutEdge] (p1) to (p2);
\draw[cutEdge] (p2) to (p3);
\draw[regularEdge] (p3) to (p4);
\draw[cutEdge] (p4) to (p5);
\draw[regularEdge] (p1) to (p4);
\draw[regularEdge] (p3) to (p6);

\node [sinkNode] at (p1) {};
\node at (p1) [right=2pt] {$v_1$};
\node [sourceNode] at (p2) {};
\node at (p2) [below left] {$v_2$};
\node [sinkNode] at (p3) {};
\node [below left=1pt and -1pt of p3] {$v_3$};
\node [sinkNode] at (p4) {};
\node at (p4) [above right] {$v_4$};
\node [sourceNode] at (p5) {};
\node at (p5)  [below left] {$v_5$};
\node [sinkNode] at (p6) {};
\node at (p6) [left=2pt] {$v_6$};
\node [sinkNode] at (s) {};
\node at (s) [below=2pt] {$t$};
\node [sourceNode] at (t) {};
\node at (t) [above=2pt] {$s$};

\end{tikzpicture}
\end{minipage}
    \caption{Left: A polygon aggregation instance with input polygons in gray, a triangulation in white and a possible solution delineated in blue.
    Right: The corresponding flow network, in which vertices represent faces and edges represent boundaries.
    Source component vertices in black; sink component vertices in blue; cut edges in orange.}
    \label{fig:aggregation}
\end{figure}
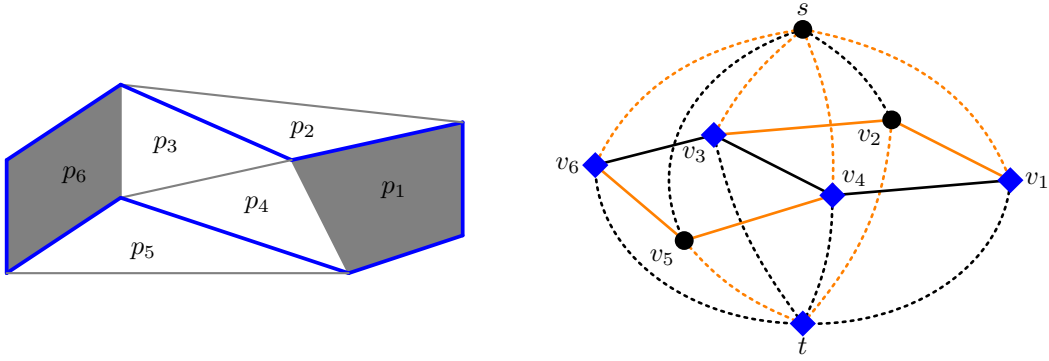

A more recently proposed application is polygon aggregation for map generalization, where the aim is to simplify a set of polygons by grouping them and replacing every group with a single representative object.
In the variant discussed by Rottmann et al.~\cite{RDGRH21}, we are given a set~$\polygons$ of input polygons in $\mathbb{R}^2$ and a triangulation~$\triangulation$ of~$\convexHull(\polygons) \setminus \polygons$, where~$\convexHull(\polygons)$ is the convex hull of~$\polygons$.
A solution $\solution=\polygons\cup\triangles$ adds a selection~$\triangles\subseteq\triangulation$ of triangles to the input~$\polygons$.
Let~$\area(\solution)$ denote the area of~$\solution$ and~$\perimeter(\solution)$ the perimeter.
The objective is to find a set~$\solutions$ of solutions such that for every~$\lambda\in [0,\infty)$, there is a solution~$\solution\in\solutions$ that minimizes~$f_\lambda(\solution)=\lambda \cdot \area(\solution) + \perimeter(\solution)$.

The input can be transformed into a monotone parametric flow network by taking the dual graph of the subdivision $X=\triangulation \cup \polygons$ and adding a source vertex~$\source$ and sink vertex~$\sink$, which are connected to all other vertices (see Figure~\ref{fig:aggregation}).
The capacity of a source-incident edge~$(\source,\vertex_i)$ is given by~$\capacity(\source,\vertex_i)=\lambda \cdot \area(p_i)+\ell(p_i)$, where $\ell (p_i)$ is the length of the shared boundary of~$p_i$ and the outer face.
For sink-incident edges~$(\vertex_i,\sink)$, the capacity is set to~$\infty$ if~$p_i\in\polygons$ and~$0$ otherwise.
For all other edges~$(\vertex_i,\vertex_j)$, the capacity is~$\capacity(\vertex_i,\vertex_j)=\ell (p_i,p_j)$, where $\ell (p_i,p_j)$ is the length of the common boundary of the faces $p_i,p_j\in X$.
Let~$\breakpoint$ be the breakpoint function for this \MPMC instance.
For every~$\lambda\in[0,\infty)$, an optimal solution is given by the input polygons and triangles that correspond to the sink component~$\vertices_\sink(\breakpoint,\lambda)$.

Rottmann et al.\ model the cost function as~$f_\alpha(\solution)=\alpha\cdot \area(\solution) + (1-\alpha) \cdot \perimeter(\solution)$ with~$\alpha\in[0,1]$.
Since~$f_\alpha(\solution)=(1-\alpha)\cdot f_\lambda(\solution)$ holds for~$\lambda=\frac{\alpha}{(1-\alpha)}$, this is equivalent to our formulation: the solution sets are the same, although the parameter values for which they are optimal are shifted.
However, when using~$f_\alpha$, the parametric capacities are not monotone.
Consequently, Rottmann et al.\ do not notice the equivalence of the problem to \MPMC.
Instead, they prove the nesting property directly.
To solve it, they use \dichotomicscheme without contraction and with PRF as the max-flow algorithm.
Because the number of breakpoints is very high, this algorithm does not scale well.
To achieve reasonable runtimes on larger instances, the authors compute an~$\varepsilon$-approximation of the solution set: for each~$\alpha\in[0,1]$, the relative difference between the capacity of the minimum cut and the capacity of the computed cut must be at most~$\varepsilon$.
\dichotomicscheme can be adapted for this scenario:
When~$\Bisect$ computes a minimum cut~$\cut_m$ for the intersection point~$\lambda_m$, it is compared to the cut~$\cut_\ell$ associated with~$\lambda_\ell$.
If the capacity of~$\cut_\ell$ for~$\lambda_m$ is at most~$(1 + \varepsilon)$ times larger than that of~$\cut_m$, the recursive~$\Bisect$ calls are skipped.

\section{Parametric BFS}
\label{sec:alg}
We present \emph{parametric breadth-first search} (\PBFS), a new algorithm for complete parametric analysis that addresses the shortcomings of GGT.
In particular, it discovers the breakpoints in ascending order, which allows it to incorporate flow updates in a straightforward manner.
As described in Section~\ref{sec:existing:simple}, a maximum flow~$\flow[\lambda_i]$ for a breakpoint~$\lambda_i$ can be turned into a preflow for any parameter value~$\lambda>\lambda_i$ by saturating the cut edges and ensuring that the sink-incident edges do not exceed their capacity.
The next breakpoint~$\lambda_{i+1}$ is the first value for which it is no longer possible to redistribute the resulting excesses without disconnecting the sink component of the residual graph.
PRF cannot easily identify this value because it only examines the neighborhood of a vertex to decide where its excess is pushed.

For this reason, \PBFS switches to an approach inspired by IBFS, which grows shortest-path trees from~$\source$ and~$\sink$ in the residual graph.
Similarly, \PBFS maintains a reverse shortest-path tree~$\tree$ rooted in~$\sink$ that only uses residual edges.
This ensures that it is always possible to send a non-zero amount of excess towards~$\sink$ by following tree edges.
In step~$i$, the algorithm identifies the smallest parameter value~$\lambda_{i+1}$ for which this strategy saturates at least one tree edge.
At this point, the residual graph changes and~$\tree$ is disconnected into a forest.
It is repaired using the \emph{vertex adoption} method from IBFS.
If a vertex~$\vertex$ cannot be reattached because there is no residual path to~$\sink$ left, then~$\lambda_{i+1}$ is the next breakpoint and~$\vertex$ enters the source component.
The algorithm exploits the nesting property by contracting~$\vertex$ into~$\source$, which removes the need for maintaining a shortest-path tree in the source component as well.

\subparagraph*{Algorithm Description.}
\begin{algorithm}
    \caption{Parametric BFS.}\label{alg:main}
    \Input{Monotone parametric flow network $\graph = (\vertices,\edges,\capacity,[\minLambda,\maxLambda],\source,\sink)$}
    \Output{Breakpoint function $\breakpoint\colon \vertices \to \mathbb{R}_{\geq 0}$}
    \BlankLine
    $\initialFlow \gets \Initialize()$\;
    $\CalcFlowFunction(\initialFlow)$\;
    $i \gets 0, \lambda_0 \gets \minLambda$\;
    \While{$\lambda_i < \maxLambda$}{
        $\lambda_{i+1} \gets \min_{\edge \in \treeEdges} \rootLambda(\edge)$\label{alg:calculateNextParameter}\;
        \lIf{$\lambda_{i+1} > \maxLambda$}{\Return $\breakpoint$\label{alg:return}}
        $\UpdateResidualGraph()$\;
        $\ReconnectTree()$\;
        $\DrainExcess(\lambda_{i+1})$\;
        $i \gets i+1$\;
    }
    \Return $\breakpoint$\;
\end{algorithm}
Pseudocode for \PBFS is given in Algorithm~\ref{alg:main}, with subroutines in Algorithms~\ref{alg:init}--\ref{alg:adopt}.
Only some data structures are maintained explicitly; the others are implicit.
In step~$i$ with current parameter value~$\lambda_i$, the algorithm maintains a sink component~$\vertices_\sink$ such that~$\cut=(\vertices_\source,\vertices_\sink)$ with~$\vertices_\source:=\vertices\setminus\vertices_\sink$ is the sink-minimal minimum cut in~$\graph[\lambda_i]$.
The source-contracted graph is given by~$\contractedGraph:=\graph[\lambda_i]/\vertices_\source=(\contractedVertices,\contractedEdges,\capacity[\lambda_i],\source,\sink)$ with~$\contractedVertices:=\vertices_\sink \cup \{\source\}$ and~$\contractedEdges:=\edges/\vertices_\source$.
The algorithm maintains a flow function~$\flow(\edge)$ on each edge~$\edge\in\contractedEdges$ such that~$\flow[\lambda_i]$ is a maximum flow on~$\contractedGraph$.
The residual graph~$\residualContractedGraph:=(\contractedVertices, \residualContractedEdges, \residualCapacity[\lambda_i],\source,\sink)$ with residual capacity function~$\residualCapacity(\edge):=\capacity(\edge) - \flow(\edge)$ is represented implicitly by the set~$\residualContractedEdges$ of residual edges.
Finally, the algorithm maintains an (unweighted) shortest-path tree~$\tree=(\treeVertices,\treeEdges)$ on~$\residualContractedGraph$ and a distance labeling~$\distanceLabel$.

\begin{algorithm}
    \caption{Initialization of \PBFS.}
    \label{alg:init}
    \myproc{$\Initialize()$}{
        $\initialFlow \gets \CalcMaxFlow(\graph[\minLambda])$\;
        $\residualGraph \gets \BuildResidualGraph(\graph,\initialFlow)$\;
        $\tree=(\treeVertices,\treeEdges) \gets \ReverseBFS(\residualGraph,\sink)$\;
        $\residualContractedEdges \gets \edges(\residualGraph)/(\vertices \setminus \treeVertices)$\;
        \ForEach{$\vertex \in \vertices \setminus \treeVertices$}{
            $\breakpoint(\vertex) \gets \minLambda$\;
            $\distanceLabel(\vertex) \gets \infty$\;
        }
        \ForEach{$\vertexB \in \treeVertices$}{
            $\breakpoint(\vertexB) \gets \infty$\;
            $\distanceLabel(\vertexB) \gets \dist_\tree(\vertexB,\sink)$\;
            $\outEdges(\vertexB) \gets \{(\vertexB,\vertexA) \in \edges\}\label{alg:init:outEdges}$\;
        }
        \Return $\initialFlow$\;
    }
\end{algorithm}

\begin{algorithm}
    \caption{Flow function initialization.}\label{alg:calcflow}
    \myproc{$\CalcFlowFunction(\initialFlow)$\label{alg:calcflow:begin}}{
        $\flow \gets \initialFlow$\;
        \lForEach{$\vertex\in\vertices$}{$\excess(\vertex) \gets 0$}
        \lForEach{$\edge \in \edges$}{$\rootLambda(\edge) \gets \infty$}
        $\excessVertices \gets \emptyset$\;
        \ForEach{$\edge = (\source,\vertexB) \in \edges$ with~$\initialFlow(\edge)=\capacity(\edge)(\minLambda)$}{
            $\excess(\vertexB) \gets \excess(\vertexB) + \capacity(\edge) - \flow(\edge)$\;
            $\excessVertices \gets \excessVertices \cup \{\vertexB\}$\;
            $\SetFlow(\flow, \edge, \capacity(\edge))$\;
        }
        \ForEach{$\edge = (\vertexB,\sink) \in \edges$ with~$\initialFlow(\edge)=\capacity(\edge)(\minLambda)$}{
            $\excess(\vertexB) \gets \excess(\vertexB) + \flow(\edge) - \capacity(\edge)$\;
            $\excessVertices \gets \excessVertices \cup \{\vertexB\}$\;
            $\SetFlow(\flow, \edge, \capacity(\edge))$\;
        }
        $\DrainExcess(\minLambda)$\;
        \Return $\flow$\label{alg:calcflow:end}\;
    }
    \myproc{$\SetFlow(\flow,(\vertexA,\vertexB),\flow')$}{
        $\flow(\vertexA,\vertexB) \gets \flow'$\;
        $\flow(\vertexB,\vertexA) \gets -\flow'$\;
    }
\end{algorithm}

The algorithm begins with the procedure~$\Initialize$ (pseudocode in Algorithm~\ref{alg:init}), which computes a max-flow~$\initialFlow$ for~$\minLambda$ and the corresponding set~$\residualContractedEdges$ of residual edges.
The shortest-path tree~$\tree$, distance labeling~$\distanceLabel$ and sink component~$\vertices_\sink$ are computed with a reverse BFS from~$\sink$ in the residual graph.
Alternatively, if IBFS is used as the max-flow algorithm, it already computes them alongside~$\initialFlow$.
The breakpoint of each vertex in the source component is set to~$\minLambda$.
The flow functions~$\flow$ are then calculated in~$\CalcFlowFunction$ (Algorithm~\ref{alg:calcflow}).
Starting with the constant flow~$\initialFlow$, the flow functions of all saturated source- and sink-incident edges are set to their capacity functions, which ensures that they remain saturated.
For parameter values above~$\minLambda$, this creates an excess at each vertex~$\vertex$, which is stored in the excess function~$\excess(\vertex)$.
The excesses are drained by sending them towards~$\sink$ in~$\tree$.
This is done by maintaining a set~$\excessVertices$ of vertices whose excess function was modified.

The procedure~$\DrainExcess$ (lines~\ref{alg:drain:begin}--\ref{alg:drain:end} of Algorithm~\ref{alg:reconnect-drain}) processes these vertices in decreasing order of distance to~$\sink$.
For each vertex~$\vertexA\in\excessVertices$ with corresponding tree edge~$\edge=(\vertexA,\vertexB)$, the algorithm pushes the excess across~$\edge$ to~$\vertexB$.
After all vertices have been processed, all excess functions except for that of~$\sink$ are~$0$ and the flow functions on tree edges are non-decreasing.
Because all tree edges are residual, there are parameter values~$\lambda>\minLambda$ for which~$\flow[\lambda]$ is still a flow.
For each edge~$\edge\in\contractedEdges$, the algorithm maintains the~\emph{flow limit}~$\rootLambda(\edge)$, which is the smallest parameter value for which~$\edge$ changes from residual to saturated.
The algorithm upholds the invariant that this is~$\infty$ for all non-tree edges.
When~$\DrainExcess(\lambda)$ changes the flow on a tree edge~$\edge$, then~$\rootLambda(\edge)$ is recalculated by calling the procedure~$\CalcRoot(\capacity(\edge)-\flow(\edge),\lambda)$, which returns the smallest root of the residual capacity function~$\capacity(\edge)-\flow(\edge)$ in the interval~$[\lambda,\infty)$, or~$\infty$ if there is no root.

\begin{algorithm}
    \caption{Residual graph update.}\label{alg:update}
    \myproc{$\UpdateResidualGraph()$\label{alg:update:begin}}{
        $\revEdgesSat \gets \{ \edge=(\vertexA,\vertexB) \mid (\vertexB,\vertexA)\in\treeEdges\land \flow(\edge)(\lambda_i)=\capacity(\edge)(\lambda_i)\}$\;
        $\revEdges \gets \{ \edge \in \revEdgesSat \mid \flow(\edge)(\lambda_{i+1}) < \capacity(\edge)(\lambda_{i+1}) \}$\;
        $\saturatedEdges \gets \{ \edge \in \treeEdges \mid \rootLambda(\edge)=\lambda_{i+1}\}$\;
        $\residualContractedEdges \gets \residualContractedEdges \cup \revEdges \setminus \saturatedEdges$\;
        $\orphans,\excessVertices \gets \emptyset$\;
        \ForEach{$\edge = (\vertexA,\vertexB) \in \saturatedEdges$}{
            $\RemoveTreeEdge(\edge)$\label{alg:update:end}\;
        }
    }
    \myproc{$\RemoveTreeEdge(\edge=(\vertexA,\vertexB))$}{
        $\newFlow \gets \capacity(\edge) + \flow(\edge)(\lambda_{i+1}) - \capacity(\edge)(\lambda_{i+1})$\;
        $\excess(\vertexA) \gets \excess(\vertexA) + \flow(\edge) - \newFlow$\;
        $\excess(\vertexB) \gets \excess(\vertexB) + \newFlow - \flow(\edge)$\;
        $\SetFlow(\flow, \edge, \newFlow)$\;
        $\rootLambda(\edge) \gets \infty$\;
        $\orphans \gets \orphans \cup \{ \vertexA \}$\;
        $\excessVertices \gets \excessVertices \cup \{ \vertexA, \vertexB \}$\;
        $\treeEdges \gets \treeEdges \setminus \{ \edge \}$\;
    }
\end{algorithm}
    
As long as the current parameter value~$\lambda_i$ is smaller than~$\maxLambda$, the algorithm performs another step.
Line~\ref{alg:calculateNextParameter} of Algorithm~\ref{alg:main} calculates the next parameter value~$\lambda_{i+1}$ as the minimal flow limit among all tree edges.
$\UpdateResidualGraph$ (Algorithm~\ref{alg:update}) identifies the set~$\revEdges$ of edges that were saturated for~$\lambda_i$ but are residual for~$\lambda_{i+1}$, which must be reverse edges of tree edges, and adds them to~$\residualContractedEdges$.
Additionally, it removes all saturated tree edges from~$\tree$ and~$\residualContractedEdges$.
Because flow can no longer be sent across a removed edge~$\edge=(\vertexA,\vertexB)$, excess starts building up at~$\vertexA$ and~$\vertexB$, which is stored in the excess functions~$\excess(\vertexA)$ and~$\excess(\vertexB)$.
The vertex~$\vertexA$, which no longer has a residual path to~$\sink$, is called an~\emph{orphan} and stored in a set~$\orphans$.
At this point, $\flow$ is still a flow for~$\lambda_{i+1}$ but only a pseudoflow beyond that, and~$\tree$ is a forest with more than one connected component.

\begin{algorithm}
    \caption{Tree repairing, excess draining.}\label{alg:reconnect-drain}
    \myproc{$\ReconnectTree()$\label{alg:reconnect:begin}}{
        \While{$\orphans \neq \emptyset$}{
            $\vertexA \gets \Pop(\orphans)$\;
            \lIf{$\AdoptWithSameDist(\vertexA)$}{\Continue}
            \For {$\vertexB \in \children_\tree(\vertexA)$}{
                $\RemoveTreeEdge((\vertexB,\vertexA))$\;
            }
            \lIf{$\AdoptWithNewDist(\vertexA)$}{\Continue}
            $\breakpoint(\vertexA) \gets \lambda_{i+1}$\;
            $\treeVertices \gets \treeVertices \setminus \{ \vertexA \}$\;
            $\residualContractedEdges \gets \residualContractedEdges/\{\vertexA\}$\label{alg:reconnect:end}\;
        }
    }
    \myproc{$\DrainExcess(\lambda)$\label{alg:drain:begin}}{
        \ForEach{$\edge=(\vertexA,\vertexB)\in\treeEdges$ with~$\vertexA\in\excessVertices$ in descending order of $\distanceLabel(\vertexA)$}{
            $\SetFlow(\flow, \edge, \flow(\edge) + \excess(\vertexA))$\;
            $\rootLambda(\edge) \gets \CalcRoot(\capacity(\edge)-\flow(\edge),\lambda)$\;
            $\excess(\vertexB) \gets \excess(\vertexB) + \excess(\vertexA)$\;
            $\excess(\vertexA) \gets 0$\;
            $\excessVertices \gets \excessVertices \cup \{ \vertexB \}$\label{alg:drain:end}\;
        }
    }
\end{algorithm}

\begin{algorithm}
    \caption{Vertex adoption.}
    \label{alg:adopt}
    \myproc{$\AdoptWithSameDist(\vertexA)$}{
        \While{$\outEdges(\vertexA)\neq\emptyset$}{
            $\edge = (\vertexA,\vertexB) \gets \Pop(\outEdges(\vertexA))$\;
            \If{$(\vertexA,\vertexB)\in\residualContractedEdges$ and $\distanceLabel(\vertexA)=\distanceLabel(\vertexB)+1$}{
                $\treeEdges \gets \treeEdges \cup \{\edge\}$\;
                \Return \textbf{true}\;
            }
        }
        \Return \textbf{false}\;
    }
    \myproc{$\AdoptWithNewDist(\vertexA)$}{
        \lIf{$\{(\vertexA,\vertexB) \in \residualContractedEdges\} = \emptyset$}{\Return \textbf{false}}
        $\edge = (\vertexA,\vertexC)\gets \argmin_{(\vertexA,\vertexB) \in \residualContractedEdges} \distanceLabel(\vertexB)$\;
        \lIf{$\distanceLabel(\vertexC)\geq |\treeVertices|-1$}{\Return \textbf{false}}
        $\distanceLabel(\vertexA) \gets \distanceLabel(\vertexC)+1$\;\label{alg:increaseDistanceLabel}
        $\treeEdges \gets \treeEdges \cup \{\edge\}$\;
        $\outEdges(\vertexA) \gets \{(\vertexA,\vertexB) \in \edges\}$\label{alg:adopt:outEdges}\;
        \Return \textbf{true}\;
    }
\end{algorithm}

$\ReconnectTree$ (lines~\ref{alg:reconnect:begin}--\ref{alg:reconnect:end} of Algorithm~\ref{alg:reconnect-drain}) rebuilds the tree structure so that the excesses can once again be drained.
This step is done exactly as in IBFS.
For each orphan~$\vertexA$, the algorithm tries to reconnect~$\vertexA$ to the tree by searching for an outgoing residual edge~$\edge=(\vertexA,\vertexB)$ such that the distance label~$\distanceLabel(\vertexB)$ is minimal.
This search is split into two procedures (pseudocode in Algorithm~\ref{alg:adopt}):
First, $\AdoptWithSameDist(\vertexA)$ checks whether there is an outgoing residual edge that is already admissible.
In this case, all vertices in the subtree rooted in~$\vertexA$ can keep their current distance label.
Otherwise, the distance label of~$\vertexA$ must increase and the subtree rooted in~$\vertexA$ may need to be split apart.
Therefore, all incoming tree edges of~$\vertexA$ are removed and their tail vertices are made orphans.
Then, $\AdoptWithNewDist(\vertexA)$ searches for an outgoing residual edge~$\edge=(\vertexA,\vertexC)$ such that the distance label~$\distanceLabel(\vertexC)$ is minimal.
If no such edge exists, there is no longer a residual~$\vertexA$-$\sink$-path, so~$\vertexA$ is moved into the source component and its breakpoint is set to~$\lambda_{i+1}$.
Otherwise, $\edge$ is added to the tree and the distance label of~$\vertex$ is set to~$\distanceLabel(\vertexC)+1$, which ensures that~$\edge$ is admissible.
Once all orphans have been processed, $\tree$ is once again a tree, so~$\DrainExcess(\lambda_{i+1})$ is called to drain the excesses towards~$\sink$.

\subparagraph*{Performance Optimizations.}
To achieve the desired runtime bound, the algorithm must ensure that~$\AdoptWithSameDist(\vertexA)$ examines each outgoing edge of a vertex~$\vertexA$ at most once before~$\distanceLabel(\vertexA)$ increases.
In the pseudocode, this is done by maintaining a set~$\outEdges(\vertexA)$ of unprocessed edges, which is initialized with all outgoing edges in line~\ref{alg:init:outEdges} of Algorithm~\ref{alg:init}.
During~$\AdoptWithSameDist(\vertexA)$, examined edges are removed from~$\outEdges(\vertexA)$ to ensure that they are not looked at again.
When~$\distanceLabel(\vertexA)$ is increased in line~\ref{alg:adopt:outEdges} of Algorithm~\ref{alg:adopt}, $\outEdges(\vertexA)$ is reset.
Our implementation of \PBFS does not maintain~$\outEdges(\vertexA)$ explicitly.
Rather, it follows the approach of IBFS by maintaining a linked list of all outgoing edges of~$\vertexA$ in~$\graph$, as well as a \emph{current edge} pointer to the first unprocessed edge in the list.
Whenever~$\outEdges(\vertexA)$ is reset, this is achieved by setting the current edge pointer to the first outgoing edge of~$\vertexA$.
When an edge is processed, the pointer is advanced to the next edge.

As noted by Goldberg et al.~\cite{GHKKTW15}, the depicted adoption strategy sometimes performs badly in practice.
When an orphan~$\vertexA$ is adopted by a vertex that later becomes an orphan itself, then~$\vertexA$ is processed again.
In some cases, this may cause the same vertex to be processed many times.
To prevent this, Goldberg et al.\ propose a three-pass adoption strategy, which ensures that each orphan is processed at most three times during each~$\ReconnectTree$ call, as well as a hybrid strategy that starts with the original adoption scheme but switches to the three-pass scheme once the amount of performed work per orphan exceeds a certain threshold.
Our implementations of IBFS and \PBFS use the hybrid strategy, which exhibits the best performance in practice.

\subparagraph*{Implementation Details.}
Line~\ref{alg:calculateNextParameter} of Algorithm~\ref{alg:main} calculates the next parameter value~$\lambda_{i+1}$ by examining the flow limits of all tree edges.
On realistic instances, this is wasteful because most iterations only modify a small portion of the tree and leave the flow limits of the rest unchanged.
Our implementation therefore maintains a priority queue on~$\treeEdges$ with the flow limit as the key.
This allows~$\lambda_{i+1}$ and the set~$\saturatedEdges$ of saturated tree edges to be computed quickly by extracting the element with minimal key until the key increases.
In the asymptotic runtime, this incurs an additional logarithmic factor because the queue must be updated every time a flow limit changes.
However, it is faster in practice unless the average number of updates per iteration is very high.

In the pseudocode, the edge set~$\residualContractedEdges$ of the contracted residual graph is maintained explicitly and updated when flows are changed or vertices are contracted.
This is merely done for ease of exposition.
In our implementation, this set is maintained implicitly.
Whenever the outgoing edges of a vertex are inspected, edges that are saturated or lead into the source component are ignored.
In particular, this means that the only task performed by~$\UpdateResidualGraph$ is identifying and removing the saturated tree edges.

\section{Proof of Correctness and Runtime}
\label{sec:proof}
We give a detailed proof of correctness and a runtime analysis for \PBFS.
We start with an overview of the main ideas in Section~\ref{sec:proof:overview}.
In Section~\ref{sec:proof:contract}, we show that the source component can be contracted whenever a new breakpoint is found.
Section~\ref{sec:proof:residual} proves invariants related to the residual graph, the shortest-path-tree and the distance labels, including the correctness of the adoption procedures.
In Section~\ref{sec:proof:functions}, we characterize the flow and excess functions maintained by the algorithm, depending on the type of edge, and show that the flow limits are updated correctly.
This is used to characterize how the residual graph changes as~$\lambda$ increases in Section~\ref{sec:proof:extend}.
These insights are then combined to give an inductive proof of correctness in Section~\ref{sec:proof:correctness} and bound the runtime in Section~\ref{sec:proof:runtime}.

\subsection{Overview}
\label{sec:proof:overview}
The algorithm maintains the invariants that~$\flow[\lambda_i]$ is a maximum flow (Lemma~\ref{lem:extendFlow}) and~$\tree$ is a shortest-path tree in the residual graph~$\residualContractedGraph$ (Corollary~\ref{cor:tree}).
Because the current cut is kept saturated, $\flow[\lambda]$ is a maximum flow for~$\lambda\in[\lambda_i,\lambda_{i+1}]$.
The algorithm ensures that only tree edges can become saturated for~$\lambda_{i+1}$ by maintaining the invariant that the residual capacity functions are non-increasing on tree edges (Lemma~\ref{lem:treeFlow}) and constant for out-of-tree edges whose endpoints are not~$\source$ or~$\sink$ (Lemma~\ref{lem:outOfTreeFlow}).
When the saturated tree edges are removed, $\tree$ becomes a forest.
New residual edges~$(\vertexB,\vertexA)$ may be introduced, but these are reverse edges of removed tree edges~$(\vertexA,\vertexB)$, which are residual.
Hence, the distance labeling remains valid (Lemma~\ref{lem:distanceLabels:valid}) because~$\distanceLabel(\vertexB)=\distanceLabel(\vertexA)-1<\distanceLabel(\vertexA)+1$ holds.
$\ReconnectTree$ ensures that~$\tree$ is once again a shortest-path tree; its correctness is proven in the same manner as for IBFS (Lemma~\ref{lem:tree} and Corollary~\ref{cor:tree}).
Vertices that cannot be reconnected do not have a residual path to~$\sink$, so they are no longer in the minimal sink component (Lemma~\ref{lem:adoptWithNewDist:fail}).
When excesses are pushed, $\flow[\lambda_{i+1}]$ is not changed because all excess functions evaluate to~$0$ for~$\lambda_{i+1}$ (Lemma~\ref{lem:flowChange}).

The runtime bound of~$\mathcal O(n^2m)$ relies on the observation that the distance label of a vertex never decreases and cannot exceed~$n-1$ (Lemma~\ref{lem:distanceLabels:nonDecreasing}).
The number of iterations in the main loop is bounded by the number of edge removals from~$\tree$, which is in~$\mathcal O(nm)$ (Lemma~\ref{lem:treeAdditions}):
When an edge~$\edge=(\vertexA,\vertexB)$ is removed from~$\tree$, this is either because~$\AdoptWithSameDist(\vertexA)$ has failed or because~$\edge$ is saturated.
In the former case, it can be shown that~$\distanceLabel(\vertexA)$ is increased in~$\AdoptWithNewDist(\vertexA)$ (Lemma~\ref{lem:increaseDistanceLabel}).
For the latter case, we observe that~$\edge$ is admissible and saturated when it is removed.
In order to re-enter~$\tree$, it must become residual again.
This requires that the reverse edge~$\edge'=(\vertexB,\vertexA)$ enters~$\tree$ in the meantime.
In order to make~$\edge'$ admissible, the distance label of~$\vertexB$ must increase.
Within an iteration, $\UpdateResidualGraph$ and~$\DrainExcess$ take time~$\mathcal O(n)$.
The initialization phase is dominated by the initial flow computation, which takes time~$\mathcal O(n^2m)$ with IBFS.
Finally, it can be proven in the same manner as for IBFS that the time spent in~$\ReconnectTree$ throughout the entire algorithm is in~$\mathcal O(nm)$ (Lemma~\ref{lem:reconnectTime}).

For IBFS, the runtime can be improved to~$\mathcal O(nm \log n)$ by using dynamic trees to represent the shortest-path trees.
Doing the same in \PBFS is challenging:
Dynamic trees support retrieving the edge with minimum weight in logarithmic time.
However, \PBFS must retrieve the edge whose residual capacity function has the smallest root. 
It is unclear whether this can also be done in logarithmic time.

\subsection{Source Contraction}
\label{sec:proof:contract}
For a vertex set~$\vertices'\subseteq\vertices$, the source-contracted graph~$\graph/\vertices'$ preserves exactly the cuts in~$\graph$ that do not separate~$\vertices'$:
For every cut~$\cut=(\vertices_\source,\vertices_\sink)$ in~$\graph$ with~$\vertices' \subseteq \vertices_\source$, there is a cut~$\cut'=(\vertices_\source \setminus \vertices', \vertices_\sink)$ in~$\graph/\vertices'$ with~$|\cut|=|\cut'|$.
Conversely, every cut~$\cut=(\vertices_\source,\vertices_\sink)$ in~$\graph/\vertices'$ has a corresponding cut~$\cut=(\vertices' \cup \vertices_\source,\vertices_\sink)$ in~$\graph$ with~$|\cut|=|\cut'|$.

For any flow~$\flow$ in~$\graph$, a corresponding flow~$\flow/\vertices'$ in~$\graph/\vertices'$ is obtained as follows: the flow over an edge~$\edge$ is the sum of the flows over all edges that are merged with~$\edge$ during the contraction.
It is easy to verify that~$\flow/\vertices'$ fulfills the flow conservation and capacity constraints and has the same value as~$\flow$.
The residual graph for~$\flow/\vertices'$ is~$\residualGraph/\vertices'$.

\begin{lemma} \label{lem:contractFlow}
Let~$\flow$ be a maximum flow in a flow network~$\graph$ and~$\vertices'$ a set of vertices such that there is a minimum cut~$\cut=(\vertices_\source,\vertices_\sink)$ in~$\graph$ with~$\vertices'\subseteq\vertices_\source$.
Then~$\flow/\vertices'$ is a maximum flow in~$\graph/\vertices'$.
\end{lemma}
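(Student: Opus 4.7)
The plan is to derive this as a direct consequence of the max-flow-min-cut theorem combined with the capacity- and value-preserving correspondences that the preceding paragraphs already establish between $\graph$ and $\graph/\vertices'$. The key observation is that we already have two pieces of machinery at our disposal: (i) every cut $\cut=(\vertices_\source,\vertices_\sink)$ in $\graph$ with $\vertices'\subseteq\vertices_\source$ corresponds to a cut $\cut'=(\vertices_\source\setminus\vertices',\vertices_\sink)$ in $\graph/\vertices'$ of the same capacity, and (ii) the value of $\flow/\vertices'$ equals the value of $\flow$.

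First, I would apply the max-flow-min-cut theorem to $\graph$: since $\flow$ is a maximum flow and $\cut$ is a minimum cut, the value of $\flow$ equals $|\cut|$. Next, because $\vertices'\subseteq\vertices_\source$, the paragraph above the lemma gives a corresponding cut $\cut'=(\vertices_\source\setminus\vertices',\vertices_\sink)$ in $\graph/\vertices'$ with $|\cut'|=|\cut|$. Combining these two equalities with the fact that $\flow/\vertices'$ has the same value as $\flow$ yields that the value of $\flow/\vertices'$ equals $|\cut'|$.

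To finish, I would invoke the ``easy direction'' of max-flow-min-cut in $\graph/\vertices'$: the value of any $(\source,\sink)$-flow is bounded above by the capacity of any $(\source,\sink)$-cut, and in particular by $|\cut'|$. Since $\flow/\vertices'$ attains this bound, it is a maximum flow in $\graph/\vertices'$ (and, as a byproduct, $\cut'$ is a minimum cut there). There is no real obstacle to overcome here; the only thing one has to be careful about is to cite the capacity-preserving cut correspondence and the value preservation of the contracted flow exactly as stated in the paragraphs preceding the lemma, rather than reproving them. The brevity of the argument reflects the fact that all the structural work has already been done before the lemma is stated.
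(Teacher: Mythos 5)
Your proof is correct and takes essentially the same route as the paper's: both invoke the capacity-preserving cut correspondence and value preservation of $\flow/\vertices'$ established in the paragraph preceding the lemma, and both close by the max-flow--min-cut duality in $\graph/\vertices'$. The only cosmetic difference is that the paper phrases the argument in terms of saturation (via Observation~\ref{obs:saturateAllCuts}: $\flow$ saturates $\cut$, hence $\flow/\vertices'$ saturates the contracted cut $\cut'$, hence it is maximal), whereas you phrase it directly in terms of flow value equalling cut capacity; these are the same argument expressed at the level of flow values rather than individual cut edges.
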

\begin{proof}
By Observation~\ref{obs:saturateAllCuts}, $\flow$ saturates~$\cut$.
Then~$\flow/\vertices'$ saturates the corresponding minimal cut~$(\vertices_\source\setminus\vertices',\vertices_\sink)$ in~$\graph/\vertices'$ and is therefore maximal.
\end{proof}

\begin{lemma}
\label{lem:contract}
Let~$\graph=(\vertices,\edges,\capacity,[\minLambda,\maxLambda],\source,\sink)$ be a monotone parametric flow network and $\lambda_1,\lambda_2\in[\minLambda,\maxLambda]$ two parameter values with~$\lambda_1<\lambda_2$.
Let~$\cut_1=(\vertices^1_\source,\vertices^1_\sink)$ and~$\cut_2=(\vertices^2_\source,\vertices^2_\sink)$ be the sink-minimal minimum cuts in~$\graph[\lambda_1]$ and~$\graph[\lambda_2]/\vertices^1_\source$, respectively.
Then the sink-minimal minimum cut in~$\graph[\lambda_2]$ is~$(\vertices^1_\source \cup \vertices^2_\source,\vertices^2_\sink)$.
\end{lemma}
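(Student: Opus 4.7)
The plan is to exploit the capacity-preserving bijection between cuts in $\graph[\lambda_2]$ whose source side contains $\vertices^1_\source$ and cuts in the contracted graph~$\graph[\lambda_2]/\vertices^1_\source$, and then combine this with the nesting property (Observation~\ref{lem:nesting}) and the uniqueness of the sink-minimal minimum cut (Corollary~\ref{cor:sinkMinimalCut}) to pin down the sink-minimal minimum cut in~$\graph[\lambda_2]$.

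First I would verify that $(\vertices^1_\source \cup \vertices^2_\source, \vertices^2_\sink)$ is indeed a cut in~$\graph[\lambda_2]$: this is immediate because $\cut_2$ is a cut in~$\graph[\lambda_2]/\vertices^1_\source$, so $\source \in \vertices^2_\source$, $\sink \in \vertices^2_\sink$, and the two sets partition~$\vertices$. By the capacity-preserving correspondence between cuts in~$\graph[\lambda_2]$ containing $\vertices^1_\source$ on the source side and cuts in the contracted graph (spelled out at the start of Section~\ref{sec:proof:contract}), the capacity of this cut equals $|\cut_2|$ in~$\graph[\lambda_2]/\vertices^1_\source$.

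Next I would show that this cut is minimum in~$\graph[\lambda_2]$. Let $\cut^\star = (\vertices^\star_\source, \vertices^\star_\sink)$ be the sink-minimal minimum cut in~$\graph[\lambda_2]$. By Observation~\ref{lem:nesting} applied to $\lambda_1 < \lambda_2$, we have $\vertices^1_\source \subseteq \vertices^\star_\source$, so $\cut^\star$ corresponds via the same bijection to a cut $(\vertices^\star_\source \setminus \vertices^1_\source, \vertices^\star_\sink)$ in the contracted graph of equal capacity. Since $\cut_2$ is a \emph{minimum} cut in the contracted graph, $|\cut_2| \leq |\cut^\star|$, and since $(\vertices^1_\source \cup \vertices^2_\source, \vertices^2_\sink)$ realizes capacity $|\cut_2|$ in~$\graph[\lambda_2]$, it is a minimum cut there, and equality of capacities holds throughout.

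Finally, I would check sink-minimality. Both $\vertices^\star_\sink$ and $\vertices^2_\sink$ are sink sides of minimum cuts in~$\graph[\lambda_2]$; by minimality of $\cut^\star$, we get $|\vertices^\star_\sink| \leq |\vertices^2_\sink|$. Conversely, $(\vertices^\star_\source \setminus \vertices^1_\source, \vertices^\star_\sink)$ is a minimum cut in~$\graph[\lambda_2]/\vertices^1_\source$, so sink-minimality of $\cut_2$ there yields $|\vertices^2_\sink| \leq |\vertices^\star_\sink|$. Hence $|\vertices^\star_\sink| = |\vertices^2_\sink|$, and uniqueness of the sink-minimal minimum cut (Corollary~\ref{cor:sinkMinimalCut}) forces $(\vertices^1_\source \cup \vertices^2_\source, \vertices^2_\sink) = \cut^\star$. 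I do not expect any real obstacle here beyond keeping the two applications of sink-minimality (once in~$\graph[\lambda_2]$ and once in the contracted graph) straight; all the geometric content is already packaged in the nesting property and the cut bijection established at the beginning of the section.
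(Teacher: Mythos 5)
Your proof is correct, but it takes a genuinely different route from the paper's. The paper's argument is flow-based: it fixes a maximum flow $\flow$ in $\graph[\lambda_2]$, invokes Lemma~\ref{lem:contractFlow} to deduce that $\flow/\vertices^1_\source$ is a maximum flow in the contracted graph, and then uses the residual-reachability characterization from Corollary~\ref{cor:sinkMinimalCut} to identify $\vertices^2_\sink$ as the set of vertices reaching $\sink$ in the contracted residual graph and argue that this set does not change when the contraction is undone (since $\sink$ is unreachable from $\vertices^1_\source$). Your proof, by contrast, is purely cut-theoretic: it works entirely with the capacity-preserving bijection between cuts in $\graph[\lambda_2]$ whose source side contains $\vertices^1_\source$ and cuts in $\graph[\lambda_2]/\vertices^1_\source$, plus the nesting property, and then compares capacities (to get minimality) and sink-side cardinalities in both graphs (to get sink-minimality), finishing with the uniqueness half of Corollary~\ref{cor:sinkMinimalCut}. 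The paper's route is shorter given the flow-contraction machinery it has just set up (Lemma~\ref{lem:contractFlow}), while yours is more elementary in that it never touches a flow or a residual graph explicitly; you only use Corollary~\ref{cor:sinkMinimalCut} as a black-box uniqueness statement. Both are sound, and your careful two-sided sandwich of $|\vertices^\star_\sink|$ and $|\vertices^2_\sink|$ correctly handles the only subtle point, namely that sink-minimality must be argued once in each of the two graphs.
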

\begin{proof}
It follows from the nesting property (Observation~\ref{lem:nesting}) that~$\vertices^1_\source$ is fully contained in the source component of the sink-minimal minimum cut in~$\graph[\lambda_2]$.
Let~$\flow$ be a maximum flow in~$\graph[\lambda_2]$.
By Lemma~\ref{lem:contractFlow}, $\flow/\vertices^1_\source$ is a maximum flow in~$\graph[\lambda_2]/\vertices^1_\source$.
It follows from Corollary~\ref{cor:sinkMinimalCut} that~$\cut_2=\cut(\flow/\vertices^1_\source)$, i.e., $\vertices^2_\sink$ is the set of vertices from which~$\sink$ is reachable in~$\residualGraph[\lambda_2]/\vertices^1_\source$.
This set remains the same in~$\residualGraph[\lambda_2]$; in particular, $\sink$ is not reachable from~$\vertices^1_\source$ because~$\flow$ saturates~$\cut_1$ by Observation~\ref{obs:saturateAllCuts}.
Hence, by Corollary~\ref{cor:sinkMinimalCut}, $(\vertices^1_\source\cup\vertices^2_\source,\vertices^2_\sink)=\cut(\flow)$ is the sink-minimal minimum cut in~$\graph[\lambda_2]$.
\end{proof}

\subsection{Residual Edges and Sink Component Tree}
\label{sec:proof:residual}
\begin{lemma} \label{lem:tree}
    At any time during the execution of the algorithm, $\tree$ is a (directed) forest in~$\residualContractedGraph$ whose edges are all admissible and~$\orphans\subseteq\treeVertices$ is the set of vertices besides~$\sink$ that have no outgoing edge in~$\tree$.
\end{lemma}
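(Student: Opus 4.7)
The proof is by induction on the sequence of elementary operations that modify $\tree$, $\residualContractedEdges$, $\distanceLabel$, or $\orphans$. After $\Initialize$, a reverse BFS from $\sink$ in $\residualGraph$ constructs $\tree$ as a shortest-path tree; its edges are a subset of the residual edges, they strictly decrease distance labels and are therefore admissible, and $\tree$ is acyclic (indeed connected). At that moment $\orphans$ is empty and every vertex in $\treeVertices\setminus\{\sink\}$ has an outgoing tree edge, so the orphan characterization holds vacuously.

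For the inductive step I would enumerate the elementary operations and verify that each preserves three properties: (i)~$\treeEdges\subseteq\residualContractedEdges$; (ii)~every tree edge is admissible; and (iii)~$\orphans$ equals the set of non-$\sink$ vertices in $\treeVertices$ without an outgoing tree edge. $\RemoveTreeEdge((\vertexA,\vertexB))$ deletes the unique outgoing tree edge of $\vertexA$ and inserts $\vertexA$ into $\orphans$, which preserves the forest property and the orphan characterization while leaving all other tree edges untouched. $\UpdateResidualGraph$ only modifies $\residualContractedEdges$ by adding the reverse edges of saturated tree edges (irrelevant for (i)--(iii)) and by removing every edge in $\saturatedEdges$, each of which is simultaneously stripped from $\tree$ via $\RemoveTreeEdge$; distance labels are not touched.

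In $\ReconnectTree$, popping $\vertexA$ from $\orphans$ either succeeds immediately through $\AdoptWithSameDist(\vertexA)$, which picks an outgoing edge $(\vertexA,\vertexB)\in\residualContractedEdges$ with $\distanceLabel(\vertexA)=\distanceLabel(\vertexB)+1$ and inserts it into $\treeEdges$, or else first detaches all children of $\vertexA$ via $\RemoveTreeEdge$ (enqueuing them as orphans) and then attempts $\AdoptWithNewDist(\vertexA)$. In the latter, if a residual edge $(\vertexA,\vertexC)$ with smallest $\distanceLabel(\vertexC)$ is found, the algorithm sets $\distanceLabel(\vertexA)\gets\distanceLabel(\vertexC)+1$ and inserts the edge into $\treeEdges$; the crucial point is that at this moment $\vertexA$ has no tree edges at all incident to it, so the label bump cannot invalidate admissibility of any other tree edge. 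If both attempts fail, $\vertexA$ is removed from $\treeVertices$ entirely, restoring (iii) for that vertex vacuously; in every success case, $\vertexA$ gains an admissible outgoing residual edge and leaves $\orphans$, again restoring (iii) for $\vertexA$.

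The main obstacle is ruling out cycles at the moment a new tree edge is inserted. I would invoke (ii) itself: tree edges strictly decrease $\distanceLabel$, so a directed cycle would have to return from a strictly smaller label to a larger one, which is impossible. Combined with the observation that the orphan vertex has no outgoing tree edge immediately before the insertion, this preserves the forest property. A final subtlety worth noting is that $\DrainExcess$ alters flow functions but not $\residualContractedEdges$, $\treeEdges$, or $\distanceLabel$, so it preserves (i)--(iii) automatically; the burden of keeping (i) consistent with the actual residual edges at the new parameter value is deferred to $\UpdateResidualGraph$, which via the flow limits $\rootLambda$ catches exactly the tree edges that become saturated at $\lambda_{i+1}$.
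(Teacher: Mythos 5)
Your proof is correct and follows essentially the same inductive strategy as the paper's: verify the invariant after \texttt{Initialize}, then check each operation that modifies $\tree$, $\residualContractedEdges$, $\distanceLabel$, or $\orphans$ preserves it, with the key observations that new tree edges are admissible by construction, that distance labels are only raised in \texttt{AdoptNewDist} when the vertex is isolated in $\tree$ (so no other tree edge's admissibility is affected), and that admissibility forces acyclicity. Your version is a bit more explicit in enumerating cases and in noting that \texttt{DrainExcess} is irrelevant to the invariant, but the substance is identical.
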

\begin{proof}
    After~$\Initialize$, $\tree$ is a shortest-path tree in~$\residualContractedGraph$ and~$\orphans$ is empty, so the claim is true.
    If an edge~$\edge=(\vertexA,\vertexB)$ is removed from~$\residualContractedGraph$, it is also removed from~$\tree$ via~$\RemoveTreeEdge(\edge)$.
    This procedure adds~$\vertexA$ to~$\orphans$, so the invariant is upheld.
    The only procedures that add an edge~$\edge=(\vertexA,\vertexB)\in\residualContractedEdges$ to~$\tree$ are~$\AdoptWithSameDist(\vertexA)$ and~$\AdoptWithNewDist(\vertexB)$, which are called when~$\vertexA$ is removed from~$\orphans$.
    If both fail to add an edge, then~$\vertexA$ is removed from~$\treeVertices$.
    If either succeeds, then~$\edge$ is admissible.
    Finally, the only procedure that changes the distance label~$\distanceLabel(\vertexA)$ of a vertex~$\vertexA$ is~$\AdoptWithSameDist(\vertexA)$, which is only called when~$\vertexA$ is isolated in~$\tree$.
    Therefore, the edges in~$\tree$ remain admissible.
    It follows from the definition of admissibility that~$\tree$ is acyclic and therefore a forest.
\end{proof}

\begin{corollary} \label{cor:tree}
    Whenever~$\orphans$ is empty, $\tree$ is a shortest-path tree to~$\sink$ in~$\residualContractedGraph$.
\end{corollary}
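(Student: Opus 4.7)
The plan is to combine the structural properties established in Lemma~\ref{lem:tree} with validity of the distance labeling on~$\residualContractedGraph$. By Lemma~\ref{lem:tree}, $\tree$ is always a directed forest in~$\residualContractedGraph$ whose edges are admissible, and~$\orphans$ contains exactly those vertices of~$\treeVertices\setminus\{\sink\}$ that have no outgoing edge in~$\tree$. When~$\orphans=\emptyset$, this means every~$\vertex\in\treeVertices\setminus\{\sink\}$ has exactly one outgoing edge in~$\tree$: at most one because~$\tree$ is a forest, and at least one because~$\vertex$ is not an orphan.

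Next I would argue that~$\tree$ is in fact a directed tree rooted at~$\sink$ with all edges oriented toward~$\sink$. Starting from any~$\vertex\in\treeVertices$ and repeatedly following the unique outgoing tree edge yields a sequence of vertices with no repetitions (since~$\tree$ is acyclic), so the sequence must terminate; the only possible terminus is~$\sink$, as every other vertex in~$\treeVertices$ has an outgoing tree edge. Hence each~$\vertex\in\treeVertices$ admits a unique directed $\vertex$--$\sink$-path in~$\tree$.

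Finally, I would verify that these paths are shortest in~$\residualContractedGraph$. By admissibility, each tree edge~$(\vertexA,\vertexB)$ satisfies~$\distanceLabel(\vertexA)=\distanceLabel(\vertexB)+1$, so a short induction along the $\vertex$--$\sink$-path in~$\tree$ shows that its length equals~$\distanceLabel(\vertex)$. Combined with validity of the distance labeling, which, as recalled in Section~\ref{sec:preliminaries}, gives the standard lower bound~$\distanceLabel(\vertex)\leq \dist(\vertex,\sink)$ in~$\residualContractedGraph$, the tree path meets this lower bound and is therefore shortest, so~$\tree$ is a shortest-path tree to~$\sink$.

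The only nontrivial ingredient I would need is validity of the distance labeling as an invariant of the algorithm, so the main obstacle is deferred to a separate lemma: verifying that the operations performed by~$\UpdateResidualGraph$ and~$\ReconnectTree$, namely adding reverse edges of saturated tree edges to~$\residualContractedEdges$, removing saturated edges, and raising~$\distanceLabel(\vertexA)$ only in~$\AdoptWithNewDist$ to the minimum feasible value, preserve the constraint~$\distanceLabel(\vertexA)\leq \distanceLabel(\vertexB)+1$ on every~$(\vertexA,\vertexB)\in\residualContractedEdges$. Granting that invariant, the corollary follows directly from the two structural observations above.
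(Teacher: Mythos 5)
Your proof is correct and fills in exactly what the paper leaves implicit: the paper states the corollary without proof, treating it as immediate from Lemma~\ref{lem:tree} together with the validity of the distance labeling (Lemma~\ref{lem:distanceLabels:valid}), which are precisely the two ingredients you invoke. One small presentational note: the paper states Lemma~\ref{lem:distanceLabels:valid} \emph{after} Corollary~\ref{cor:tree}, so your proof makes explicit a forward dependency that the paper glosses over; otherwise the argument is the same.
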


\begin{lemma} \label{lem:distanceLabels:valid}
    At every point during the execution of the algorithm, the distance labeling~$\distanceLabel$ is valid for~$\residualContractedGraph$.
\end{lemma}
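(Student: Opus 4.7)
The plan is to prove validity by induction on the sequence of operations performed by the algorithm. Specifically, I would verify that validity holds immediately after \Initialize and then check every operation that can either modify $\distanceLabel$ or alter the edge set $\residualContractedEdges$, showing that none of them can violate the condition $\distanceLabel(\vertexA) \leq \distanceLabel(\vertexB)+1$ for an edge $(\vertexA,\vertexB) \in \residualContractedEdges$. The operations to inspect are: the distance assignment inside \Initialize, the edge updates in \UpdateResidualGraph, the distance relabelling inside \AdoptWithNewDist, and the source contraction performed when \AdoptWithNewDist fails.

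\emph{Base case.} Inside \Initialize the labels on $\treeVertices$ are set to the exact reverse-BFS distances in $\residualGraph$, so validity holds trivially on $\treeVertices$; non-tree vertices receive the label $\infty$, which is consistent with the convention that source-side labels are at least $|\treeVertices|$. Hence the invariant holds after \Initialize.

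\emph{Edge updates.} For \UpdateResidualGraph, removal of an edge cannot break validity, so only the edges added to $\residualContractedEdges$ must be checked. Each such edge has the form $(\vertexB,\vertexA)$ with $(\vertexA,\vertexB)\in\treeEdges$ at the start of the iteration. By Lemma~\ref{lem:tree} the tree edge is admissible, so $\distanceLabel(\vertexA)=\distanceLabel(\vertexB)+1$, which yields $\distanceLabel(\vertexB)=\distanceLabel(\vertexA)-1 \leq \distanceLabel(\vertexA)+1$ for the new reverse edge, preserving validity.

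\emph{Label updates.} The only place where $\distanceLabel$ is changed after \Initialize is line~\ref{alg:increaseDistanceLabel} of \AdoptWithNewDist, where $\distanceLabel(\vertexA)$ is reset to $\distanceLabel(\vertexC)+1$ with $\vertexC$ minimising $\distanceLabel(\cdot)$ over all outgoing residual neighbours of $\vertexA$. I would split the argument into two sub-claims. First, for every remaining outgoing residual edge $(\vertexA,\vertexB)$, the minimality of $\vertexC$ gives $\distanceLabel(\vertexA) = \distanceLabel(\vertexC)+1 \leq \distanceLabel(\vertexB)+1$. Second, the new label is strictly larger than the old one: because \AdoptWithSameDist had already failed, no outgoing residual edge was admissible, and combined with the previous validity $\distanceLabel(\vertexB)\geq\distanceLabel(\vertexA_{\text{old}})-1$ this forces $\distanceLabel(\vertexB)\geq\distanceLabel(\vertexA_{\text{old}})$ for all outgoing residual $\vertexB$, so $\distanceLabel(\vertexA_{\text{new}})\geq\distanceLabel(\vertexA_{\text{old}})+1$. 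Since $\distanceLabel(\vertexA)$ only increases, every incoming residual edge $(\vertexB,\vertexA)$ that was previously valid remains valid.

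\emph{Source contraction.} When \AdoptWithNewDist ultimately fails, $\vertexA$ is moved to the source component and $\residualContractedEdges$ is updated to $\residualContractedEdges/\{\vertexA\}$. Incoming residual edges $(\vertexB,\source)$ inherited from $(\vertexB,\vertexA)$ are trivially valid because $\distanceLabel(\source)$ is treated as $\infty$. The subtle case, and the one I expect to require the most care, is that of outgoing residual edges $(\source,\vertexB)$ inherited from $(\vertexA,\vertexB)$; under the literal reading of Section~\ref{sec:preliminaries} they would violate $\distanceLabel(\source)\leq\distanceLabel(\vertexB)+1$. I would resolve this either by adopting the standard push-relabel convention $\distanceLabel(\source):=|\treeVertices|$ and noting that the failure condition of \AdoptWithNewDist guarantees that every such $\vertexB$ satisfies $\distanceLabel(\vertexB) \geq |\treeVertices|-1$, or by restricting the validity statement to edges whose tail lies in $\treeVertices$, which is the only direction used by the algorithm's shortest-path reasoning. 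Either way the invariant is maintained, closing the induction.
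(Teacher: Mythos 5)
Your proof is correct and follows the same inductive case analysis as the paper: validity after \Initialize, validity preserved by the edge additions in \UpdateResidualGraph (via admissibility of the corresponding tree edges), and validity preserved by the relabeling in \AdoptWithNewDist (via minimality of $\vertexC$). Your explicit handling of incoming residual edges under the relabeling step is left implicit in the paper but is correct; note that the inductive validity hypothesis applied to the edge $(\vertexA,\vertexC)$ already gives $\distanceLabel(\vertexA)\le\distanceLabel(\vertexC)+1$ before the assignment, so the label cannot decrease and the stronger strict increase (which you also argue) is not actually needed for this lemma.

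The one place you go noticeably further than the paper is source contraction. The paper disposes of it with ``removing an edge or vertex from~$\residualContractedGraph$ preserves the validity of the distance labeling,'' implicitly treating $\residualContractedEdges/\{\vertexA\}$ as a vertex removal and thereby excluding from the validity check any outgoing residual edges $(\source,\vertexB)$ that the contraction might create. This is justified but not spelled out: the algorithm never inspects such edges (they cannot enter $\treeEdges$ by Lemma~\ref{lem:forbiddenTreeEdges} and Corollary~\ref{cor:forbiddenTreeEdges}, and $\outEdges$ is never built for $\source$), so their ``validity'' is never consulted. Both of your proposed fixes work; restricting the validity predicate to edges whose tail lies in $\treeVertices$ matches the paper's implicit convention most closely, and your $\distanceLabel(\source)=|\treeVertices|$ alternative is also sound because the failure threshold in \AdoptWithNewDist guarantees $\distanceLabel(\vertexB)\ge|\treeVertices|-1$ for every surviving outgoing neighbor of the contracted vertex. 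Flagging this case explicitly is a genuine tightening of the written argument.
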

\begin{proof}
    The initial distance labeling is valid because each label~$\distanceLabel(\vertex)$ corresponds to the length of a shortest~$\vertex$-$\sink$-path in~$\residualContractedGraph$.
    Removing an edge or vertex from~$\residualContractedGraph$ preserves the validity of the distance labeling.
	Hence, the only critical parts of the algorithm are the addition of new edges in~$\UpdateResidualGraph$ and the modification of the distance labels in~$\AdoptWithNewDist$.
    $\UpdateResidualGraph$ only adds edges~$(\vertexB,\vertexA) \in \revEdges$, for which the reverse edge~$(\vertexA,\vertexB)$ is in~$\treeEdges$ and therefore admissible by Lemma~\ref{lem:tree}.
    This implies~$\distanceLabel(\vertexB)=\distanceLabel(\vertexA)-1<\distanceLabel(\vertexA)+1$.
    Consider the update of a distance label~$\distanceLabel(\vertexA)$ to~$\distanceLabel(\vertexC)+1$ in~$\AdoptWithNewDist(\vertexA)$.
	Because~$\distanceLabel(\vertexC)$ was chosen to be minimal among all neighbors of~$\vertexA$, the distance label of~$\vertexA$ remains valid.
\end{proof}

\begin{lemma} \label{lem:distanceLabels:nonDecreasing}
    The distance label~$\distanceLabel(\vertexA)$ of a vertex~$\vertexA$ never decreases and increases at most~$n$ times.
\end{lemma}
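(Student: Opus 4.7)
The plan is to pinpoint the exact line where distance labels change and analyze it directly. The only place in the algorithm where $\distanceLabel$ is modified after $\Initialize$ is line~\ref{alg:increaseDistanceLabel} of Algorithm~\ref{alg:adopt} inside $\AdoptWithNewDist(\vertexA)$; in particular, $\AdoptWithSameDist$, $\UpdateResidualGraph$, and $\DrainExcess$ leave all labels untouched. Thus both claims reduce to properties of a single update $\distanceLabel(\vertexA)\gets\distanceLabel(\vertexC)+1$.

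For the non-decrease, I would exploit the fact that $\AdoptWithNewDist(\vertexA)$ is only invoked after $\AdoptWithSameDist(\vertexA)$ has returned \textbf{false}, so no outgoing residual edge $(\vertexA,\vertexB)\in\residualContractedEdges$ is admissible; that is, $\distanceLabel(\vertexB)\neq\distanceLabel(\vertexA)-1$ for every such edge. Lemma~\ref{lem:distanceLabels:valid} provides the validity bound $\distanceLabel(\vertexB)\geq\distanceLabel(\vertexA)-1$ on every residual edge, so combining the two yields $\distanceLabel(\vertexB)\geq\distanceLabel(\vertexA)$ on every outgoing residual edge of~$\vertexA$. Since $(\vertexA,\vertexC)$ is chosen to minimize $\distanceLabel(\vertexB)$ among them, we also have $\distanceLabel(\vertexC)\geq\distanceLabel(\vertexA)$, and the new label $\distanceLabel(\vertexC)+1$ is strictly larger than the old one.

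For the upper bound, I would appeal to the guard directly above line~\ref{alg:increaseDistanceLabel}: the update is only performed when $\distanceLabel(\vertexC)<|\treeVertices|-1$, so the new value satisfies $\distanceLabel(\vertexA)\leq|\treeVertices|-1\leq n-1$. Together with the previous paragraph, every update strictly increases $\distanceLabel(\vertexA)$ by at least~$1$ but keeps it in $\{0,1,\dots,n-1\}$, so the number of updates per vertex is bounded by $n-1\leq n$.

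The one bookkeeping point I expect to require explicit care is that vertices initialized with $\distanceLabel(\vertex)=\infty$ (those outside $\treeVertices$ after $\Initialize$) are immediately absorbed into the contracted source component and never re-enter $\treeVertices$, hence are never touched by $\AdoptWithNewDist$; similarly, once a vertex's adoption ultimately fails in $\ReconnectTree$, it is removed from $\treeVertices$ and its label is frozen thereafter. With this observation in place, the two-step argument above applies uniformly to every vertex throughout the execution.
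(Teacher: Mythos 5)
Your argument for the non-decrease part has a circular dependency. You assert that because $\AdoptWithSameDist(\vertexA)$ returned \textbf{false}, ``no outgoing residual edge $(\vertexA,\vertexB)\in\residualContractedEdges$ is admissible.'' But $\AdoptWithSameDist$ only scans the bookkeeping set $\outEdges(\vertexA)$, not all residual edges: an edge that was popped from $\outEdges(\vertexA)$ in a previous call (when it was saturated or inadmissible) could in principle have become residual or admissible in the meantime and would not be re-examined. The statement you need is exactly Lemma~\ref{lem:outedges}, which asserts that $\outEdges(\vertexA)$ always contains every admissible outgoing residual non-tree edge. However, the paper's proof of Lemma~\ref{lem:outedges} \emph{uses} Lemma~\ref{lem:distanceLabels:nonDecreasing} (``distance labels never decrease, so if $\edge$ is already contained in $\residualContractedEdges$ and becomes admissible, this is because $\distanceLabel(\vertexA)$ increases''), so you cannot invoke it here without a circular argument.

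The gap is easily repaired because you are over-proving: you don't need a strict increase (that stronger fact is the content of Lemma~\ref{lem:increaseDistanceLabel}, which the paper correctly states \emph{after} Lemma~\ref{lem:outedges}). For non-decrease, it suffices to apply validity (Lemma~\ref{lem:distanceLabels:valid}) to the single edge $(\vertexA,\vertexC)\in\residualContractedEdges$ that is actually chosen: validity directly gives $\distanceLabel(\vertexA)\leq\distanceLabel(\vertexC)+1$, which is precisely the new value, so the label does not decrease. This is the paper's argument, and it breaks the circularity by never needing to reason about which edges $\AdoptWithSameDist$ did or did not inspect. Your upper-bound argument via the guard $\distanceLabel(\vertexC)<|\treeVertices|-1$ is fine and slightly more ``read off the code'' than the paper's appeal to $\distanceLabel(\vertexA)$ being a lower bound on the $\vertexA$-$\sink$ distance, but both give the same $0\leq\distanceLabel(\vertexA)\leq n-1$ range.
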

\begin{proof}
    The only procedure that changes~$\distanceLabel(\vertexA)$ is~$\AdoptWithNewDist(\vertexA)$, which sets it to~$\distanceLabel(\vertexC)+1$.
    Before this happens, $\distanceLabel(\vertexA)\leq\distanceLabel(\vertexC)+1$ must already hold because~$(\vertexA,\vertexC)\in\residualContractedEdges$ and the distance labeling is valid by Lemma~\ref{lem:distanceLabels:valid}.
    Hence, $\distanceLabel(\vertexA)$ does not decrease.
    Because~$\distanceLabel(\vertexA)$ is a lower bound for the distance from~$\vertexA$ to~$\sink$, its value can range from~$0$ to~$n-1$ and therefore increase at most~$n$ times.    
\end{proof}

\begin{lemma} \label{lem:outedges}
    At any time during the execution of the algorithm, the set~$\outEdges(\vertexA)$ for a vertex~$\vertexA \in \treeVertices$ contains all admissible outgoing edges in~$\residualContractedEdges \setminus \treeEdges$.
\end{lemma}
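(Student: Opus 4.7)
I would prove the invariant by induction over the sequence of operations performed by the algorithm. The invariant holds trivially each time $\outEdges(\vertexA)$ is (re)set, namely in line~\ref{alg:init:outEdges} of Algorithm~\ref{alg:init} at initialization and in line~\ref{alg:adopt:outEdges} of Algorithm~\ref{alg:adopt} whenever $\distanceLabel(\vertexA)$ is raised: in both cases $\outEdges(\vertexA)$ is set to every outgoing edge of $\vertexA$ in $\edges$, which trivially covers every element of $\residualContractedEdges \setminus \treeEdges$. The only operation that removes edges from $\outEdges(\vertexA)$ is $\AdoptWithSameDist(\vertexA)$, which pops $\edge=(\vertexA,\vertexB)$ only if $\edge$ is not residual or not admissible at that moment, because otherwise $\edge$ would have been added to $\treeEdges$ and the call would have returned.

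It therefore suffices to show that, between the pop of $\edge$ and the next reset of $\outEdges(\vertexA)$, $\edge$ never simultaneously lies in $\residualContractedEdges \setminus \treeEdges$ and is admissible. Since the reset happens precisely when $\distanceLabel(\vertexA)$ is raised, I may assume $\distanceLabel(\vertexA)$ is fixed throughout this interval; by Lemma~\ref{lem:distanceLabels:nonDecreasing} the label $\distanceLabel(\vertexB)$ can only grow. If $\edge$ was residual but not admissible when popped, then $\distanceLabel(\vertexA)<\distanceLabel(\vertexB)+1$ by Lemma~\ref{lem:distanceLabels:valid}, and this strict inequality is preserved under the allowed changes, so $\edge$ never becomes admissible. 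If $\edge$ was not residual when popped, I would argue that $\edge$ cannot re-enter $\residualContractedEdges$ without simultaneously being non-admissible: for a constant-capacity edge, the flow on $\edge$ decreases only when $(\vertexB,\vertexA)\in\treeEdges$ and $\DrainExcess$ pushes along it, which forces $\distanceLabel(\vertexB)=\distanceLabel(\vertexA)+1$ and thus rules out admissibility of $\edge$; for a sink-incident edge $(\vertexA,\sink)$, the capacity is non-increasing and the flow is only modified inside $\RemoveTreeEdge$ or $\DrainExcess$ on tree edges, so a non-tree sink-incident edge can only transition from residual to non-residual as $\lambda$ grows.

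Finally, I would verify that the remaining operations do not break the invariant. $\UpdateResidualGraph$ only adds the reverses $(\vertexB,\vertexA)$ of recently removed tree edges $(\vertexA,\vertexB)$, which satisfy $\distanceLabel(\vertexB)=\distanceLabel(\vertexA)-1$, so the new edge is not admissible at $\vertexB$; tree additions in the adoption procedures only shift edges into $\treeEdges$, shrinking $\residualContractedEdges\setminus\treeEdges$; and the contractions in $\ReconnectTree$ remove vertices from $\treeVertices$, making the invariant vacuous at those vertices. The main obstacle is the second case of the previous paragraph, where both flow updates and parametric capacity updates can alter the residual status of sink-incident edges. The key enabling observation there is that the algorithm never modifies the flow of a non-tree edge during the main loop, so monotonicity of the capacity function alone suffices to exclude the backward transition from non-residual to residual.
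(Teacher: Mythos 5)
Your proof is correct and follows essentially the same approach as the paper's. Both decompose the invariant into the same components: resets of $\outEdges(\vertexA)$ trivially restore it; a popped edge that is residual and admissible immediately enters $\treeEdges$; an already-residual but inadmissible edge stays inadmissible because $\distanceLabel(\vertexA)$ is frozen until the next reset while $\distanceLabel(\vertexB)$ can only grow; and an edge that re-enters $\residualContractedEdges$ arrives inadmissible because it is the reverse of a tree edge. Where you diverge is in establishing the last fact: the paper simply observes that, in the pseudocode, $\residualContractedEdges$ only gains edges as part of $\revEdges$ in $\UpdateResidualGraph$, whose members are reverses of current tree edges and hence inadmissible by Lemma~\ref{lem:tree}; you instead re-derive this conclusion by a case split on edge type and on which operations can change the flow of a non-tree edge, which is a longer and somewhat more fragile route (one must also account for the flow changes performed by $\RemoveTreeEdge$ on the reverse of $\edge$). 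One further point worth flagging, although it affects the paper's proof as well: your case analysis of what happens ``between the pop of $\edge$ and the next reset'' only covers the branches where $\edge$ was inadmissible or non-residual when popped, and omits the branch where $\edge$ was popped as residual \emph{and} admissible and therefore promoted into $\treeEdges$. If such an $\edge$ is later removed from $\treeEdges$ inside the for-loop of $\ReconnectTree$ while remaining residual, it momentarily lies in $\residualContractedEdges\setminus\treeEdges$, admissible and outside $\outEdges(\vertexA)$. The window closes once $\AdoptWithNewDist(\vertexB)$ has either raised $\distanceLabel(\vertexB)$ or contracted $\vertexB$, which is all the later applications of this lemma require, but a fully rigorous proof of the ``at any time'' statement should address this transition explicitly.
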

\begin{proof}
    In~$\Initialize$, every outgoing edge~$\edge=(\vertexA,\vertexB)\in\edges$ is added to~$\outEdges(\vertexA)$.
    If at some point~$\edge$ is not contained in~$\outEdges(\vertexA)$ but enters~$\residualContractedEdges$ as part of the set~$\revEdges$, then the reverse edge~$(\vertexB,\vertexA)$ is in~$\treeEdges$ and therefore admissible by Lemma~\ref{lem:tree}.
    Hence, $\edge$ itself is not admissible and does not need to be added to~$\outEdges(\vertexA)$.
    By Lemma~\ref{lem:distanceLabels:nonDecreasing}, distance labels never decrease, so if~$\edge$ is already contained in~$\residualContractedEdges$ and becomes admissible, this is because~$\distanceLabel(\vertexA)$ increases.
    This can only happen in~$\AdoptWithNewDist(\vertexA)$, where~$\outEdges(\vertexA)$ is rebuilt.
    Finally, $\edge$ is only removed from~$\outEdges(\vertexA)$ when it is examined in~$\AdoptWithSameDist(\vertexA)$.
    If~$\edge\in\residualContractedEdges$ and~$\edge$ is admissible, it is added to~$\treeEdges$, so the claim remains true.
\end{proof}

\begin{lemma} \label{lem:increaseDistanceLabel}
    If~$\AdoptWithNewDist(\vertexA)$ succeeds for a vertex~$\vertexA$, it strictly increases~$\distanceLabel(\vertexA)$.
\end{lemma}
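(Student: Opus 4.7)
The plan is to show that when $\AdoptWithSameDist(\vertexA)$ fails, every residual outgoing edge of $\vertexA$ must lead to a vertex whose distance label is at least $\distanceLabel(\vertexA)$, and that this property is preserved up until $\AdoptWithNewDist(\vertexA)$ is invoked. The new label $\distanceLabel(\vertexC)+1$ assigned in $\AdoptWithNewDist$ is then necessarily strictly greater than the old $\distanceLabel(\vertexA)$.

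First, I would note that $\vertexA$ is processed as an orphan, so by Lemma~\ref{lem:tree} it has no outgoing tree edge. Hence every outgoing residual edge of $\vertexA$ lies in $\residualContractedEdges\setminus\treeEdges$, and by Lemma~\ref{lem:outedges} every such edge that is admissible belongs to $\outEdges(\vertexA)$. Because $\AdoptWithSameDist(\vertexA)$ examined (and rejected) every edge in $\outEdges(\vertexA)$, no outgoing edge is simultaneously residual and admissible at that moment. Combined with the validity of the distance labeling (Lemma~\ref{lem:distanceLabels:valid}), this yields $\distanceLabel(\vertexA) \leq \distanceLabel(\vertexB) + 1$ and $\distanceLabel(\vertexA) \neq \distanceLabel(\vertexB) + 1$, hence $\distanceLabel(\vertexA) \leq \distanceLabel(\vertexB)$ for every $(\vertexA,\vertexB)\in\residualContractedEdges$ at that point.

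The subtlety I expect to be the main obstacle is that the intermediate $\RemoveTreeEdge((\vertexB,\vertexA))$ calls performed between the two adoption attempts can turn reverse edges $(\vertexA,\vertexB)$ into new residual edges, so the set over which $\AdoptWithNewDist$ takes its minimum may be strictly larger than the set considered above. I would handle this by observing that each such newly residual edge $(\vertexA,\vertexB)$ is the reverse of a tree edge $(\vertexB,\vertexA)$ that was admissible by Lemma~\ref{lem:tree}, giving $\distanceLabel(\vertexB) = \distanceLabel(\vertexA)+1 > \distanceLabel(\vertexA)$. Note also that $\RemoveTreeEdge$ modifies neither $\distanceLabel$ nor the outgoing residual edges at vertices other than through these reverse-edge additions, so the inequality $\distanceLabel(\vertexA)\le\distanceLabel(\vertexB)$ continues to hold for all previously residual outgoing edges.

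Combining the two cases, every $(\vertexA,\vertexB)\in\residualContractedEdges$ at the moment $\AdoptWithNewDist(\vertexA)$ is invoked satisfies $\distanceLabel(\vertexB) \geq \distanceLabel(\vertexA)$. Since the procedure succeeds by selecting $\vertexC$ minimizing $\distanceLabel(\vertexC)$ among outgoing residual neighbors, we obtain $\distanceLabel(\vertexC)\geq\distanceLabel(\vertexA)$, so the assignment in line~\ref{alg:increaseDistanceLabel} sets $\distanceLabel(\vertexA)$ to $\distanceLabel(\vertexC)+1 \geq \distanceLabel(\vertexA)+1$, which is a strict increase.
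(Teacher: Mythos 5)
Your proof is correct and follows essentially the same route as the paper's: both arguments hinge on the validity of the distance labeling (Lemma~\ref{lem:distanceLabels:valid}) together with the observation that, once $\AdoptWithSameDist(\vertexA)$ has failed and $\vertexA$ is an orphan, no outgoing edge of $\vertexA$ in $\residualContractedEdges$ can be admissible, so the minimum over $\distanceLabel(\vertexB)+1$ is strictly larger than the current $\distanceLabel(\vertexA)$.

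The one substantive difference is that the paper dispatches the ``no admissible edge'' step in a single stroke by citing Lemma~\ref{lem:outedges}, whose ``at any time'' guarantee already covers whatever might happen to $\residualContractedEdges$ and $\treeEdges$ between the failure of $\AdoptWithSameDist(\vertexA)$ and the invocation of $\AdoptWithNewDist(\vertexA)$. You instead re-derive this by hand, explicitly worrying that the intervening $\RemoveTreeEdge((\vertexB,\vertexA))$ calls might introduce new residual outgoing edges at $\vertexA$. That concern is in fact unfounded: $\RemoveTreeEdge$ changes neither $\residualContractedEdges$ nor the flow value at the current parameter $\lambda_{i+1}$ (the new flow function is constructed to agree with the old one at $\lambda_{i+1}$); new edges enter $\residualContractedEdges$ only through $\revEdges$ in $\UpdateResidualGraph$, which runs before $\ReconnectTree$. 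Your fallback argument for that phantom case is nonetheless sound --- any such edge would be the reverse of an admissible tree edge and would therefore point to a vertex with strictly larger label --- so the extra caution is harmless, just unnecessary. Either invoking Lemma~\ref{lem:outedges} directly, as the paper does, or noting that $\RemoveTreeEdge$ cannot change residual status at $\lambda_{i+1}$, would let you drop that paragraph and tighten the argument.
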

\begin{proof}
    Line~\ref{alg:increaseDistanceLabel} sets~$\distanceLabel(\vertexA)$ to~$x:=\min_{(\vertexA,\vertexB)\in\residualContractedEdges}\distanceLabel(\vertexB)+1$.
    By Lemma~\ref{lem:distanceLabels:valid}, we know that~$\distanceLabel(\vertexA)\leq x$ holds before this step.
    Equality holds if and only if there is an admissible edge~$(\vertexA,\vertexB)\in\residualContractedEdges$.
    However, $\AdoptWithNewDist(\vertexA)$ is only called if~$\AdoptWithSameDist(\vertexA)$ fails, which means that~$\outEdges(\vertexA)$ does not contain any admissible edges.
    Because~$\vertexA$ is an orphan, it follows from Lemma~\ref{lem:tree} that there is no tree edge~$(\vertexA,\vertexB)\in\treeEdges$ either.
    Hence, by Lemma~\ref{lem:outedges}, there is no admissible outgoing edge in~$\residualContractedEdges$.
\end{proof}

\begin{lemma} \label{lem:treeAdditions}
    Each edge~$\edge=(\vertexA,\vertexB)$ is added to and removed from~$\tree$ at most~$n$ times.
\end{lemma}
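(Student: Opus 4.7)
The plan is to prove the stronger claim that between any two consecutive additions of $\edge=(\vertexA,\vertexB)$ to $\tree$, the distance label $\distanceLabel(\vertexA)$ strictly increases. Since Lemma~\ref{lem:tree} guarantees $\edge$ is admissible whenever it is in $\tree$, we have $\distanceLabel(\vertexA)\geq 1$ at any such moment, so by Lemma~\ref{lem:distanceLabels:nonDecreasing} the label takes at most $n-1$ distinct values after the initial presence; this bounds the number of additions by $n$. Since every presence of $\edge$ in $\tree$ ends with exactly one $\RemoveTreeEdge$ call, the same bound carries over to removals.

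To establish the core claim, I would fix an addition of $\edge$ at which $\distanceLabel(\vertexA)=d_A$ and $\distanceLabel(\vertexB)=d_A-1$, followed by a removal and then a hypothetical later re-addition. By inspecting the pseudocode, $\RemoveTreeEdge(\edge)$ is called in exactly two places: in $\UpdateResidualGraph$ when $\edge$ saturates, and in $\ReconnectTree$ while processing an orphan $v$ whose $\AdoptWithSameDist(v)$ has failed and for which $\edge$ is an incoming tree edge (so $v=\vertexB$). I would then case-split on these two possibilities.

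In the $\ReconnectTree$ case, the very next action is a call to $\AdoptWithNewDist(\vertexB)$. If it succeeds, Lemma~\ref{lem:increaseDistanceLabel} yields a strict increase of $\distanceLabel(\vertexB)$; admissibility at the later re-addition of $\edge$ then forces $\distanceLabel(\vertexA)=\distanceLabel(\vertexB)+1>d_A$ via Lemma~\ref{lem:distanceLabels:nonDecreasing}. If it fails, $\vertexB$ is removed from $\treeVertices$ and contracted into $\source$, so $\edge$ can never be re-added, contradicting the hypothesis. In the saturation case, $\edge$ must become residual again before re-addition, so $\flow(\edge)$ must strictly decrease at some intermediate time; since $\edge$ is out of $\tree$, this can only happen through operations on the reverse edge $(\vertexB,\vertexA)$, which therefore must enter $\tree$ at some intermediate moment. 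Admissibility of $(\vertexB,\vertexA)$ at that moment combined with Lemma~\ref{lem:distanceLabels:nonDecreasing} gives $\distanceLabel(\vertexB)\geq \distanceLabel(\vertexA)+1\geq d_A+1$, and admissibility of $\edge$ at re-addition then forces $\distanceLabel(\vertexA)\geq d_A+2$, as required.

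The main obstacle will be justifying rigorously, in the saturation case, that the only mechanism by which $\flow(\edge)$ can strictly decrease is through an operation on $(\vertexB,\vertexA)$ while the latter is in $\tree$. This step depends on the characterizations of flow evolution developed in Section~\ref{sec:proof:functions}, in particular that after $\RemoveTreeEdge$ the residual capacity of the removed edge stays zero for all subsequent $\lambda$ as long as the edge remains out of $\tree$, and that only $\DrainExcess$ and $\RemoveTreeEdge$ modify the flow functions. Once that is granted, the remainder of the argument is a short chain of inequalities between snapshots of distance labels.
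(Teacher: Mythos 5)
Your proof is correct and takes essentially the same route as the paper: case-split on the two call sites of $\RemoveTreeEdge$ and show that a distance label strictly increases between consecutive additions of~$\edge$. Two small differences are worth noting. First, you track~$\distanceLabel(\vertexA)$ while the paper tracks~$\distanceLabel(\vertexB)$; since $\distanceLabel(\vertexA)=\distanceLabel(\vertexB)+1$ at every moment~$\edge$ is in~$\tree$ and labels are non-decreasing, these are interchangeable. Second, the ``main obstacle'' you flag in the saturation case is sidestepped by the paper: it observes directly that a saturated tree edge is also removed from~$\residualContractedEdges$, and the only mechanism that re-inserts an edge into~$\residualContractedEdges$ is the set~$\revEdges$ built in~$\UpdateResidualGraph$, whose members are by definition reverse edges of current tree edges. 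This makes the requirement ``$(\vertexB,\vertexA)\in\treeEdges$ and hence admissible'' immediate without appealing to the flow-function invariants of Section~\ref{sec:proof:functions}; your flow-based justification works too (after ruling out terminal edges via Corollary~\ref{cor:forbiddenTreeEdges} so that~$\capacity(\edge)$ is constant), but it is a longer detour.
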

\begin{proof}
    Between two subsequent additions of~$\edge$ to~$\tree$, it must be removed by~$\RemoveTreeEdge(\edge)$.
    There are two situations in which~$\RemoveTreeEdge(\edge)$ is called:
    \begin{itemize}
        \item Case 1: $\AdoptWithSameDist(\vertexB)$ fails, which means that~$\AdoptWithNewDist(\vertexB)$ is called afterwards.
        If this also fails, then~$\vertexB$ is contracted into~$\source$, so~$\edge$ cannot be re-added to~$\tree$.
        If it succeeds, then~$\distanceLabel(\vertexB)$ increases by Lemma~\ref{lem:increaseDistanceLabel}.
        \item Case 2: The edge~$\edge$ is part of the set~$\saturatedEdges$ of saturated tree edges in~$\UpdateResidualGraph$.
        In this case, it is also removed from~$\residualContractedEdges$.
        Because~$\edge$ is a tree edge at this point, it is admissible by Lemma~\ref{lem:tree}, i.e., $\distanceLabel(\vertexB)=\distanceLabel(\vertexA)-1$.
        To be re-added to~$\tree$, the edge must be re-added to~$\residualContractedEdges$ as part of~$\revEdges$ in~$\UpdateResidualGraph$ first.
        This requires that the reverse edge~$\edge'=(\vertexB,\vertexA)$ is in~$\treeEdges$ and therefore admissible, i.e., $\distanceLabel(\vertexB)=\distanceLabel(\vertexA)+1$.
        Because distance labels do not decrease (Lemma~\ref{lem:distanceLabels:nonDecreasing}), this can only be achieved by increasing~$\distanceLabel(\vertexB)$.
    \end{itemize}
    In both cases, $\distanceLabel(\vertexB)$ increases before~$\edge$ is re-added to~$\tree$.
    By Lemma~\ref{lem:distanceLabels:nonDecreasing}, this can happen at most~$n$ times.
\end{proof}

\begin{lemma}
\label{lem:adoptWithNewDist:fail}
    If~$\AdoptWithNewDist(\vertexA)$ fails for a vertex~$\vertexA$, there is no $\vertexA$-$\sink$-path in~$\residualContractedGraph$.
\end{lemma}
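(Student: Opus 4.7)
The plan is to handle the two failure branches of $\AdoptWithNewDist$ separately. If $\vertexA$ has no outgoing edge in $\residualContractedEdges$, the conclusion is immediate. Otherwise the failure means the minimum-label outgoing neighbor $\vertexC$ satisfies $\distanceLabel(\vertexC) \geq |\treeVertices| - 1$, and by minimality every outgoing neighbor $\vertexB$ of $\vertexA$ in $\residualContractedEdges$ inherits $\distanceLabel(\vertexB) \geq |\treeVertices| - 1$.

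I would then argue by contradiction. Assume a $\vertexA$-$\sink$-path exists in $\residualContractedGraph$ and take a shortest one, which is simple. The key structural observation is that this path cannot pass through $\source$: every outgoing edge of $\source$ in the contracted graph represents a currently saturated cut edge of the maintained maximum flow, so $\source$ has no outgoing residual edges in $\residualContractedGraph$ and any path reaching $\source$ would be trapped there. Hence the path stays entirely within $\treeVertices$, a set of $|\treeVertices|$ vertices, so its length is at most $|\treeVertices| - 1$.

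Let $\vertexB$ be the vertex immediately following $\vertexA$ on the path. The suffix from $\vertexB$ to $\sink$ has length at most $|\treeVertices| - 2$, so the actual shortest-path distance from $\vertexB$ to $\sink$ in $\residualContractedGraph$ is at most $|\treeVertices| - 2$. By the validity of $\distanceLabel$ (Lemma~\ref{lem:distanceLabels:valid}), this distance is bounded below by $\distanceLabel(\vertexB)$, yielding $\distanceLabel(\vertexB) \leq |\treeVertices| - 2$ and contradicting $\distanceLabel(\vertexB) \geq |\treeVertices| - 1$.

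The main obstacle is justifying that $\source$ has no outgoing residual edges at the moment $\AdoptWithNewDist$ is invoked. This is an auxiliary invariant that should be established separately: it holds after $\Initialize$ because the computed max flow saturates the cut; $\UpdateResidualGraph$ only adds reverses of tree edges, whose tails lie in $\treeVertices \setminus \{\source\}$; and the only step that enlarges $\source$ by contraction is precisely the failure branch of $\AdoptWithNewDist$ under analysis, so the invariant can be carried along inductively with the conclusion of this lemma, since the outgoing edges absorbed into $\source$ lead only to vertices unreachable from the contracted vertex and are therefore non-residual or would yield a $\vertexA$-$\sink$-path that this lemma rules out.
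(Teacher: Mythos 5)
Your proof is correct and follows essentially the same route as the paper: use the distance-label bound on the best out-neighbor, validity of $\distanceLabel$, and a counting argument on the number of vertices available to a simple path. The only place where you diverge is in how you handle $\source$, which the paper in fact glosses over: it says ``all $\vertexA$-$\sink$-paths have length at least $|\treeVertices|$'' and then concludes there are no simple paths, but since $\residualContractedGraph$ has $|\treeVertices|+1$ vertices, a simple path could a priori have length exactly $|\treeVertices|$, so the conclusion tacitly needs $\source$ to be off-limits. You make this explicit, which is a genuine improvement.

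However, the mechanism you use to exclude $\source$ is more elaborate than necessary and creates the circularity worry you yourself flag. The fact that $\source$ has no outgoing residual edges is indeed true, but in the paper it only becomes available in Lemma~\ref{lem:outOfTreeFlow} and Corollary~\ref{cor:forbiddenTreeEdges}, which appear \emph{after} Lemma~\ref{lem:adoptWithNewDist:fail}; and, as you observe, contracting failed orphans can transiently introduce outgoing edges of $\source$, so justifying the invariant at this point would have to be interleaved with the very statement being proved. There is a simpler, forward-reference-free way to rule out $\source$: $\Initialize$ sets $\distanceLabel(\source)=\infty$ and no procedure ever changes it (only $\AdoptWithNewDist(\source)$ could, and $\source$ is never an orphan). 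By Lemma~\ref{lem:distanceLabels:valid}, $\distanceLabel(\source)$ is a lower bound on the length of any $\source$-$\sink$-path in $\residualContractedGraph$, so no such path exists, and hence no $\vertexA$-$\sink$-path passes through $\source$. This gives you your bound of $|\treeVertices|-1$ on simple path lengths without introducing a new invariant, and it sidesteps the circularity entirely.
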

\begin{proof}
    If~$\AdoptWithNewDist(\vertexA)$ fails because there is no outgoing edge in~$\residualContractedGraph$, the claim is obviously true.
    Otherwise, it fails because~$\distanceLabel(\vertexC) \geq |\treeVertices|-1$.
    By Lemma~\ref{lem:distanceLabels:valid}, this is a lower bound for the length of all $\vertexC$-$\sink$-paths in~$\residualContractedGraph$.
    Because~$\vertexC$ was chosen such that~$\distanceLabel(\vertexC)$ is minimal, all~$\vertexA$-$\sink$-paths have length at least~$|\treeVertices|$.
    This implies that there are no simple~$\vertexA$-$\sink$-paths and therefore no paths at all.
\end{proof}

\subsection{Flow and Excess Functions}
\label{sec:proof:functions}
\begin{lemma}
    \label{lem:pseudoToFlow}
    At the start of every iteration, $\excess(\vertex)=0$ holds for every vertex~$\vertex\in\vertices_\sink\setminus\{\sink\}$, and for every parameter value~$\lambda$ such that~$\flow[\lambda]$ is a pseudoflow, $\flow[\lambda]$ is also a flow.
\end{lemma}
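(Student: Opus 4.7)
The plan is to prove the statement by induction on the iteration count $i$ of the main loop, relying on the invariant that the function $\excess(\vertex)$ maintained by the algorithm equals the actual parametric excess of $\vertex$, namely the imbalance between inflow and outflow under the current flow functions. Granting this invariant, the second assertion of the lemma follows immediately: if $\flow[\lambda]$ satisfies the capacity and antisymmetry constraints at some~$\lambda$ and $\excess(\vertex)$ is the zero function on $\vertices_\sink \setminus \{\sink\}$, then flow conservation holds at every non-source, non-sink vertex of the contracted graph at $\lambda$, so $\flow[\lambda]$ is a flow.

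For the base case, I would verify the claim after $\Initialize$ and $\CalcFlowFunction$. Since $\initialFlow$ is a maximum flow, every non-terminal vertex initially has zero excess. The two loops in $\CalcFlowFunction$ re-saturate source- and sink-incident edges, installing excess increments of the form $\capacity(\edge) - \initialFlow(\edge)$ (or its negation) at the affected non-terminal endpoints and recording them in $\excessVertices$. The subsequent call $\DrainExcess(\minLambda)$ then pushes these excesses toward $\sink$ along the shortest-path tree $\tree$. By Lemma~\ref{lem:tree}, every non-sink vertex of $\treeVertices$ has a tree edge (since $\orphans = \emptyset$ at this point), and processing vertices in descending order of $\distanceLabel$ guarantees that the excess at each vertex is fully transferred to its parent and set to the zero function before the parent itself is processed. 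An induction on distance label then shows that every created excess is eventually absorbed at $\sink$.

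For the inductive step, I would trace the effect of one iteration on the excess functions. $\UpdateResidualGraph$ invokes $\RemoveTreeEdge$ for every edge $\edge = (\vertexA, \vertexB) \in \saturatedEdges$; since $\rootLambda(\edge) = \lambda_{i+1}$ implies $\flow(\edge)(\lambda_{i+1}) = \capacity(\edge)(\lambda_{i+1})$, we have $\newFlow = \capacity(\edge)$, and the resulting updates to $\excess(\vertexA)$ and $\excess(\vertexB)$ mirror exactly the change of flow on $\edge$, preserving the excess-tracking invariant. The same reasoning applies to additional calls to $\RemoveTreeEdge$ issued inside $\ReconnectTree$. Orphans that fail both adoption procedures are removed from $\treeVertices$ (and hence from $\vertices_\sink$) and contracted into $\source$, so the claim no longer applies to them; the surplus excess is absorbed by the source. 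Finally, $\DrainExcess(\lambda_{i+1})$ drains the newly created excesses along the repaired tree exactly as in the base case, restoring the zero-function invariant.

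The main technical obstacle is verifying that $\DrainExcess$ zeros the modified excesses as functions, not merely at the single parameter~$\lambda_{i+1}$. Two points deserve care: the push $\SetFlow(\flow, \edge, \flow(\edge) + \excess(\vertexA))$ followed by $\excess(\vertexA) \gets 0$ must be read as an operation on functions that transfers $\excess(\vertexA)$ in its entirety to $\excess(\vertexB)$; and whenever this causes $\excess(\vertexB)$ to grow, $\vertexB$ must still be processed later. The latter holds because tree edges are admissible by Lemma~\ref{lem:tree}, so $\distanceLabel(\vertexB) = \distanceLabel(\vertexA) - 1$ places $\vertexB$ strictly later in the descending processing order, and the assignment $\excessVertices \gets \excessVertices \cup \{\vertexB\}$ keeps it eligible. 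Once these points are resolved, the induction carries through and the second claim follows as described above.
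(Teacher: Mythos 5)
Your proof is correct and follows essentially the same strategy as the paper: establish the invariant that the maintained function $\excess(\vertex)$ equals the true parametric excess, then trace the three flow-modifying operations ($\CalcFlowFunction$ initialization, $\RemoveTreeEdge$, $\DrainExcess$) to show each preserves the invariant, and conclude from the zero-excess property that any pseudoflow $\flow[\lambda]$ is a flow. Your write-up is in fact slightly more careful than the paper's at the crux of the argument, namely why $\DrainExcess$ leaves every non-sink vertex of $\treeVertices$ with excess identically zero as a \emph{function}: you spell out that $\orphans = \emptyset$ guarantees a unique outgoing tree edge per vertex, that admissibility forces $\distanceLabel(\vertexB) = \distanceLabel(\vertexA) - 1$ so the parent is processed strictly later in the descending-distance order, and hence each vertex's excess is fully forwarded before it can receive more. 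The paper asserts the same conclusion but leaves that ordering argument implicit.

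One minor remark: in the inductive step you observe that $\rootLambda(\edge) = \lambda_{i+1}$ forces $\newFlow = \capacity(\edge)$. This is true for the $\RemoveTreeEdge$ calls issued by $\UpdateResidualGraph$, but not for those issued inside $\ReconnectTree$, where the removed child edge need not be saturated. Fortunately this detail is irrelevant to the invariant: whatever $\newFlow$ is, $\RemoveTreeEdge$ always credits and debits $\excess(\vertexA)$ and $\excess(\vertexB)$ by exactly the flow change, which is all that is needed. It would be cleaner to drop the $\newFlow = \capacity(\edge)$ side remark and say directly that the excess updates mirror the flow change in every call, as you do correctly in the very next clause.
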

\begin{proof}
    We show that~$\excess(\vertex)(\lambda)$ is the excess of~$\vertex$ with respect to~$\flow[\lambda]$.
    At the start of~$\CalcFlowFunction$, all excesses are set to~$0$ and~$\flow$ is initialized with the constant flow function~$\initialFlow$, so~$\flow[\lambda]$ fulfills the flow conservation constraints for every parameter value~$\lambda$.
    Thus, if~$\flow[\lambda]$ is a pseudoflow, it is also a flow.
    There are three steps in the algorithm that change the flow along an edge~$\edge=(\vertexA,\vertexB)$: the initialization in~$\CalcFlowFunction$, $\RemoveTreeEdge(\edge)$ and~$\DrainExcess$.
    In all three cases, the difference between the old flow and the new flow is added to~$\excess(\vertexA)$ and subtracted from~$\excess(\vertexB)$ (unless the vertex in question is~$\source$ or~$\sink$).
    Whenever the excess function of a vertex~$\vertex$ is modified, the vertex is added to~$\excessVertices$.
    During~$\DrainExcess$, $\excess(\vertex)$ is set back to~0.
\end{proof}

\begin{lemma}
    \label{lem:forbiddenTreeEdges}
    Outgoing edges of~$\sink$ and incoming edges of~$\source$ are never contained in~$\treeEdges$.
\end{lemma}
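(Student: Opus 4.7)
The plan is to treat the two cases separately and in both cases derive a contradiction from the assumption that such an edge lies in $\treeEdges$.

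For an outgoing edge $(\sink,\vertex)\in\treeEdges$, I would invoke Lemma~\ref{lem:tree}, which guarantees that every edge of $\tree$ is admissible, so that $\distanceLabel(\sink)=\distanceLabel(\vertex)+1$ must hold. Since $\distanceLabel(\sink)=0$ by definition of a distance labeling and $\distanceLabel(\vertex)\in\mathbb{N}_{0}$, this forces $\distanceLabel(\vertex)=-1$, which is impossible. Hence no such tree edge can exist.

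For an incoming edge $(\vertex,\source)\in\treeEdges$, I would instead argue that $\source$ never belongs to $\treeVertices$, so that no edge with $\source$ as its head can ever be added to $\tree$. For the initial value of $\treeVertices$, the flow $\initialFlow$ computed in $\Initialize$ is a maximum flow in $\graph[\minLambda]$, so by the max-flow-min-cut theorem there is no residual $\source$-$\sink$-path in $\residualGraph$; consequently the reverse BFS from $\sink$ does not visit $\source$, and $\source\notin\treeVertices$ after $\Initialize$. A look at the pseudocode then confirms that $\treeVertices$ is only ever shrunk (by the single assignment $\treeVertices\gets\treeVertices\setminus\{\vertexA\}$ inside $\ReconnectTree$) and never extended, so $\source\notin\treeVertices$ remains invariant throughout the execution.

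The only step that requires genuine care is the second one: one needs to be sure that no hidden mechanism silently re-introduces $\source$ into $\treeVertices$. I would therefore explicitly verify that none of $\UpdateResidualGraph$, $\ReconnectTree$, $\DrainExcess$, $\AdoptWithSameDist$ or $\AdoptWithNewDist$ modifies $\treeVertices$ other than by removal, so that the only way for $\source$ to enter the tree would have been through the initial BFS—which we have already ruled out. Combining the two cases then yields the claim.
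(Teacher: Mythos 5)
Your first case (outgoing edges of $\sink$) matches the paper's argument exactly: tree edges are admissible (Lemma~\ref{lem:tree}), $\distanceLabel(\sink)=0$, and $\distanceLabel(\vertexB)=-1$ is impossible.

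For incoming edges of $\source$ you take a genuinely different route, and there is a gap in it. You establish, correctly, that $\source\notin\treeVertices$ after $\Initialize$ (because $\initialFlow$ is a max-flow, so the reverse BFS from $\sink$ does not reach $\source$) and that $\treeVertices$ is only ever shrunk. But the step from "$\source\notin\treeVertices$ throughout" to "no edge with head $\source$ is ever put into $\treeEdges$" is exactly the step that needs justification, and your explicit verification list addresses only mutations of $\treeVertices$, not the consistency between $\treeVertices$ and $\treeEdges$. The algorithm maintains these as separate data structures; nothing in $\AdoptWithSameDist$ or $\AdoptWithNewDist$ explicitly tests whether the prospective parent lies in $\treeVertices$. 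What actually prevents an edge $(\vertexA,\source)$ from being added is that $\distanceLabel(\source)=\infty$ (set in $\Initialize$ and, by Lemma~\ref{lem:distanceLabels:nonDecreasing}, never decreased), so the admissibility test in $\AdoptWithSameDist$ fails and the guard $\distanceLabel(\vertexC)\geq|\treeVertices|-1$ in $\AdoptWithNewDist$ triggers. In other words, if you try to fill the gap, you find yourself reconstructing precisely the paper's argument. The paper sidesteps all of this by applying the same admissibility reasoning to both endpoints: $\distanceLabel(\sink)=0$ always and $\distanceLabel(\source)=\infty$ always, so neither outgoing $\sink$-edges nor incoming $\source$-edges can ever satisfy $\distanceLabel(\vertexA)=\distanceLabel(\vertexB)+1$. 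That uniform treatment is both shorter and avoids the unverified invariant $\treeEdges\subseteq\treeVertices\times\treeVertices$ on which your version implicitly leans.
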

\begin{proof}
    Because~$\distanceLabel(\source)=\infty$ holds initially, incoming edges of~$\source$ cannot become admissible and the distance label can never decrease.
    Likewise, because~$\distanceLabel(\sink)=0$ holds at all points during the execution of the algorithm, outgoing edges of~$\sink$ can never be admissible.
\end{proof}

Consider a non-tree edge~$\edge\in\edges\setminus\treeEdges$.
If the reverse edge is in~$\treeEdges$, we call~$\edge$ a~\emph{reverse tree edge}.
Otherwise, we call it an~\emph{out-of-tree edge}.

\begin{lemma}
    \label{lem:outOfTreeFlow}
    The algorithm maintains the following invariant for each out-of-tree edge~$\edge=(\vertexA,\vertexB)$ and its reverse edge~$\edge'=(\vertexB,\vertexA)$:
\begin{itemize}
    \item If~$\vertexA=\source$ or~$\vertexB=\sink$, then~$\residualCapacity(\edge)=0$ and~$\residualCapacity(\edge')=\capacity(\edge)+\capacity(\edge')$.
    \item If~$\vertexA,\vertexB\notin\{\source,\sink\}$, then~$\residualCapacity(\edge)$ and~$\residualCapacity(\edge')$ are constant.
\end{itemize}
\end{lemma}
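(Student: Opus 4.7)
The plan is to prove the invariant by induction on the algorithmic steps that modify either the tree structure or the flow function. The key observation is that once an edge~$\edge$ is out-of-tree, its flow stays fixed until it either re-enters the tree or is affected by a vertex contraction: $\DrainExcess$ operates only on tree edges, and $\RemoveTreeEdge$ only modifies the edge being removed and, by antisymmetry, its reverse. Hence it suffices to verify the invariant at the base case and at each transition that places an edge into the out-of-tree category.

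First, I would establish the base case after $\CalcFlowFunction$. By Observation~\ref{obs:saturateAllCuts}, $\initialFlow$ saturates every edge of the minimum cut at~$\minLambda$, so every source-outgoing edge~$(\source,\vertexB)$ in the contracted graph is saturated. The first loop of $\CalcFlowFunction$ therefore sets $\flow(\edge)=\capacity(\edge)$, yielding $\residualCapacity(\edge)=0$ and $\residualCapacity(\edge')=\capacity(\edge)+\capacity(\edge')$. For sink-incident edges $(\vertexA,\sink)$, any non-saturated one is residual and therefore becomes a tree edge in the reverse BFS (since $\sink$ is the sole distance-$0$ vertex), so the out-of-tree sink-incident edges are exactly the saturated ones, handled symmetrically by the second loop. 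For out-of-tree edges with $\vertexA,\vertexB\notin\{\source,\sink\}$, neither loop touches the flow, so $\flow(\edge)=\initialFlow(\edge)$ stays constant, and monotonicity makes $\capacity(\edge)$ constant as well, giving the required constant residual capacities. The concluding $\DrainExcess(\minLambda)$ only pushes along tree edges, so this classification is not disturbed.

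For the inductive step, the main event is $\RemoveTreeEdge(\edge=(\vertexA,\vertexB))$, which moves $\edge$ from the tree and $\edge'$ from reverse-tree into out-of-tree simultaneously. Writing $c:=\flow(\edge)(\lambda_{i+1})-\capacity(\edge)(\lambda_{i+1})$, the assignment gives $\flow(\edge)=\capacity(\edge)+c$, hence $\residualCapacity(\edge)=-c$ and $\residualCapacity(\edge')=\capacity(\edge)+\capacity(\edge')+c$. I would then split on the caller. Calls from $\UpdateResidualGraph$ have $\edge\in\saturatedEdges$, i.e.\ $\rootLambda(\edge)=\lambda_{i+1}$, forcing $c=0$; the formulas reduce to $\residualCapacity(\edge)=0$ and $\residualCapacity(\edge')=\capacity(\edge)+\capacity(\edge')$, matching both cases of the lemma uniformly. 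Calls from the orphan loop in $\ReconnectTree$ remove an edge $(\vertexC,\vertexA)$ whose endpoints both lie in $\treeVertices\setminus\{\sink\}$ and are distinct from $\source$, so only the second case applies, and monotonicity makes $\capacity(\edge)$ and $\capacity(\edge')$ constant, so the formulas give constant residual capacities. The remaining event is the contraction of a vertex $\vertexA$ that failed $\AdoptWithNewDist$: by Lemma~\ref{lem:adoptWithNewDist:fail}, no residual $\vertexA$-$\sink$-path exists, so any out-of-tree edge $(\vertexA,\vertexB)$ with $\vertexB$ still in $\vertices_\sink$ is saturated at $\lambda_{i+1}$; since its residual capacity was a constant before contraction, it equals $0$ identically afterwards, and the reverse-edge formula then follows by antisymmetry.

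The main obstacle I anticipate is the case analysis at the $\RemoveTreeEdge$ call sites, in particular ruling out source- or sink-adjacent edges in the orphan loop. This rests on chaining several small facts: orphans are populated from tree-edge tails, which by $\distanceLabel(\source)=\infty$ and the admissibility of tree edges can never be $\source$, and which by Lemma~\ref{lem:forbiddenTreeEdges} can never be $\sink$ either; the same argument transfers to children of orphans. Once this chain is pinned down, the remaining verification is routine algebra driven by the specific shift formula in $\RemoveTreeEdge$ together with the monotonicity of the capacity functions.
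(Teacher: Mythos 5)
Your proof is correct and follows the same inductive strategy as the paper: verify the invariant after $\CalcFlowFunction$, then show it is preserved by each event that moves an edge into the out-of-tree category. Your base case is essentially identical to the paper's. The main difference is in how you organize the $\RemoveTreeEdge$ case analysis: you split by \emph{call site} ($\UpdateResidualGraph$ versus the orphan loop), introducing the constant $c := \flow(\edge)(\lambda_{i+1})-\capacity(\edge)(\lambda_{i+1})$ and observing $c=0$ for $\UpdateResidualGraph$ calls, whereas the paper splits by \emph{endpoint} ($\vertexB=\sink$ or not) and then deduces the call site when $\vertexB=\sink$. Both case splits cover the same set of removals and rely on the same facts (Lemma~\ref{lem:forbiddenTreeEdges}, Corollary~\ref{cor:forbiddenTreeEdges}, $\sink$ never an orphan); yours is slightly cleaner algebraically since the two sub-cases of the lemma fall out of a single formula. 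One genuine addition in your proof is the explicit treatment of vertex contraction, which the paper silently subsumes under ``the flow functions of out-of-tree edges are never changed.'' Since contraction merges each $(\vertexA,\vertexB)$ into $(\source,\vertexB)$ and sums residual capacities, your instinct to check that the merged source-incident edge stays saturated is a legitimate concern. However, be careful with the phrase ``$\vertexB$ still in $\vertices_\sink$'': at the moment $\vertexA$ is popped and contracted, $\vertices_\sink$ may still contain unprocessed orphans that will themselves be contracted, for which $(\vertexA,\vertexB)$ need not be saturated. The claim should be evaluated once $\ReconnectTree$ has finished (equivalently, at the start of iteration $i+1$), at which point every surviving $\vertexB\in\vertices_\sink$ has a residual path to $\sink$ and Lemma~\ref{lem:adoptWithNewDist:fail} forces $(\vertexA,\vertexB)$ saturated. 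With that timing made precise, your contraction argument is sound and, if anything, more thorough than the paper's.
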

\begin{proof}
    For the initial maximum flow~$\initialFlow/\vertices_\source$, edges of the form~$(\source,\vertexB)$ are saturated because they are part of a minimal cut.
    If a sink-incident edge~$(\vertexA,\sink)$ is residual, then it constitutes the only shortest~$\vertexA$-$\sink$-path in~$\residualContractedGraph$, so it must be a tree edge by Corollary~\ref{cor:tree}.
    Hence, all out-of-tree edges of the form~$(\source,\vertexB)$ or~$(\vertexA,\sink)$ are saturated for~$\initialFlow/\vertices_\source$, so~$\CalcFlowFunction$ initializes their flow function with the capacity function.
    For all other out-of-tree edges, $\CalcFlowFunction$ initializes their flow function with the constant initial flow.

    Afterwards, the flow functions of out-of-tree edges are never changed, so it suffices to show that the invariant holds if~$\edge$ is removed from~$\treeEdges$ via~$\RemoveTreeEdge(\edge)$.
    By Lemma~\ref{lem:forbiddenTreeEdges}, outgoing edges of~$\sink$ and incoming edges of~$\source$ never appear in~$\treeEdges$.
    Outgoing edges of~$\source$ are initially saturated and their reverse edges never appear in~$\treeEdges$ by Lemma~\ref{lem:forbiddenTreeEdges}, so they remain out-of-tree edges for the entire execution of the algorithm.
    Hence, there are only two cases to consider:
    If~$\vertexB\neq\sink$, then~$\newFlow$ is made constant.
    Because the capacities of~$\edge$ and~$\edge'$ are both constant, their residual capacities are as well.
    If~$\vertexB=\sink$, then it follows from Lemma~\ref{lem:forbiddenTreeEdges} that sink has no outgoing tree edges and is therefore not an orphan.
    Hence, $\RemoveTreeEdge(\edge)$ was called in~$\UpdateResidualGraph$, which means that~$\edge$ is saturated.
    Then~$\newFlow$ is set to~$\capacity(\edge)$, which yields~$\residualCapacity(\edge)=0$ and~$\residualCapacity(\edge')=\capacity(\edge)+\capacity(\edge')$.
\end{proof}

\begin{corollary}
    \label{cor:forbiddenTreeEdges}
    Outgoing edges of~$\source$ are never contained in~$\treeEdges$.
    Incoming edges of~$\sink$ are initially contained in~$\treeEdges$ if they are residual for~$\initialFlow$.
    If they are later removed from~$\treeEdges$, they are never re-added.
\end{corollary}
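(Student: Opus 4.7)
The plan is to address the three assertions separately, in each case locating the exact places in the algorithm where edges can enter or leave $\treeEdges$ and $\residualContractedEdges$, and then appealing to Lemma~\ref{lem:forbiddenTreeEdges} or Lemma~\ref{lem:outOfTreeFlow}.

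For the first claim, I would mirror the distance-label argument of Lemma~\ref{lem:forbiddenTreeEdges}. Because $\initialFlow$ is a maximum flow in $\graph[\minLambda]$, there is no residual $\source$-$\sink$-path, hence $\source \notin \treeVertices$ after the reverse BFS in $\Initialize$, and $\distanceLabel(\source)$ is set to $\infty$. The only routine that modifies a distance label is $\AdoptWithNewDist$, and it is invoked only on orphans, which by Lemma~\ref{lem:tree} lie in $\treeVertices$; since $\source$ is never added to $\treeVertices$, its label stays $\infty$ throughout. The only places that can add an edge to $\treeEdges$ are the initial reverse BFS (whose tree lives on $\treeVertices$, and so excludes any $(\source,\vertexB)$) and $\AdoptWithSameDist(\vertexA)$ or $\AdoptWithNewDist(\vertexA)$, which only insert outgoing edges of an orphan $\vertexA \in \treeVertices$. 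Since $\source$ is never an orphan, no edge $(\source,\vertexB)$ can ever be inserted into $\treeEdges$.

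The second claim follows directly from the structure of the reverse BFS: each vertex $\vertexA$ with a residual edge $(\vertexA,\sink)$ sits at BFS depth $1$ and is discovered precisely through that edge, so $(\vertexA,\sink)$ is placed into $\treeEdges$ as $\vertexA$'s parent pointer. For the third claim, I would show that any incoming $\sink$-edge, once removed, stays out of $\residualContractedEdges$ forever. Such an edge $\edge = (\vertexA,\sink)$ can only leave $\treeEdges$ via $\RemoveTreeEdge$, and the only applicable call site is inside $\UpdateResidualGraph$, because by Lemma~\ref{lem:forbiddenTreeEdges} $\sink$ has no outgoing tree edges and therefore never becomes an orphan. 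The $\vertexB = \sink$ case in the proof of Lemma~\ref{lem:outOfTreeFlow} already establishes that at this point $\newFlow$ is set equal to $\capacity(\edge)$, so $\residualCapacity(\edge) \equiv 0$ from then on. Consequently $\edge$ never appears in $\revEdges$ in any subsequent call to $\UpdateResidualGraph$, and the contraction step in $\ReconnectTree$ only renames residual edges rather than creating new ones, so $\edge$ cannot re-enter $\residualContractedEdges$; hence neither adoption procedure can re-add it to $\treeEdges$.

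The main obstacle is not any single hard step but rather the bookkeeping: one must carefully enumerate every way an edge can enter $\treeEdges$ (the initial reverse BFS and the two adoption procedures) and every way a non-residual edge can become residual again ($\revEdges$ in $\UpdateResidualGraph$ and the contraction updates in $\ReconnectTree$). Once these entry points are in view, each of the three claims reduces to a short appeal to an invariant already proven above.
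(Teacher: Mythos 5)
Your proof is correct and covers the same ground as the paper's one-line proof, which simply delegates to Lemmas~\ref{lem:forbiddenTreeEdges} and~\ref{lem:outOfTreeFlow}; you unfold exactly the arguments that those proofs contain. The only notable variation is in the first claim: the paper's intended route (visible inside the proof of Lemma~\ref{lem:outOfTreeFlow}) is that out-of-tree edges $(\source,\vertexB)$ have residual capacity identically~$0$, so they can never become residual and hence never join $\treeEdges$, whereas you argue structurally that $\source$ is never in $\treeVertices$, never becomes an orphan, and therefore its outgoing edges are never inserted by $\AdoptWithSameDist$ or $\AdoptWithNewDist$. Both routes are valid and rest on the same invariants; yours is slightly more elementary in that it does not pass through Lemma~\ref{lem:outOfTreeFlow} for that part.
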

\begin{proof}
    This follows from Lemmas~\ref{lem:forbiddenTreeEdges} and~\ref{lem:outOfTreeFlow}.
\end{proof}

\begin{lemma}
    \label{lem:treeFlow}
    At the start of every iteration, the flow functions of all tree edges are non-decreasing.
    Accordingly, the residual capacity functions are non-increasing for tree edges and non-decreasing for reverse tree edges.
\end{lemma}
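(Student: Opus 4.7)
The plan is to proceed by induction on the iteration index, simultaneously strengthening the statement to characterize the excess functions. Concretely, I would prove that (i) at the start of every iteration, $\flow(\edge)$ is non-decreasing on $[\lambda_i,\maxLambda]$ for every $\edge\in\treeEdges$, and (ii) just before each invocation of $\DrainExcess(\lambda)$, every accumulated excess $\excess(\vertexA)$ with $\vertexA\in\excessVertices$ is non-decreasing on $[\lambda,\maxLambda]$ and vanishes at $\lambda$. The corollary about residual capacities follows from (i) by a short case distinction on the edge type, combining the monotonicity of $\capacity$ on source- and sink-incident edges with the antisymmetry relation $\flow(\vertexB,\vertexA)=-\flow(\vertexA,\vertexB)$ and the fact that outgoing edges of $\source$, incoming edges of $\source$, and outgoing edges of $\sink$ are never in $\treeEdges$ by Lemma~\ref{lem:forbiddenTreeEdges} and Corollary~\ref{cor:forbiddenTreeEdges}.

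For the base case, I would trace $\CalcFlowFunction$ carefully. Because $\initialFlow$ is a maximum flow of $\graph[\minLambda]$, each source-incident edge $(\source,\vertexB)$ the procedure touches is saturated at $\minLambda$, so the quantity $\capacity(\source,\vertexB)-\initialFlow(\source,\vertexB)$ added to $\excess(\vertexB)$ is non-decreasing by source-monotonicity of $\capacity$ and vanishes at $\minLambda$; a symmetric argument handles sink-incident saturated edges. Since all tree-edge flow functions are still the constant $\initialFlow$ before $\DrainExcess(\minLambda)$ runs, and $\DrainExcess$ processes vertices in descending-distance order, it adds non-decreasing excesses on top of constant flow functions, yielding (i) at the start of iteration $0$.

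For the inductive step, I would analyze each iteration operation by operation. The central local computation is for $\RemoveTreeEdge((\vertexA,\vertexB))$: the definition $\newFlow=\capacity(\edge)+\flow(\edge)(\lambda_{i+1})-\capacity(\edge)(\lambda_{i+1})$ enforces $\newFlow(\lambda_{i+1})=\flow(\edge)(\lambda_{i+1})$, so both excess updates vanish at $\lambda_{i+1}$; moreover the child-side deposit $\flow(\edge)-\newFlow$ is non-decreasing on $[\lambda_{i+1},\maxLambda]$, as it reduces to $\flow(\edge)-\flow(\edge)(\lambda_{i+1})$ for interior tree edges (non-decreasing by induction) and gains the extra non-decreasing term $\capacity(\edge)(\lambda_{i+1})-\capacity(\edge)$ for sink-incoming edges. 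When an edge enters $\treeEdges$ via $\AdoptWithSameDist$ or $\AdoptWithNewDist$, its flow function at that moment is constant: edges that have always been out-of-tree have this property by Lemma~\ref{lem:outOfTreeFlow}, and edges whose flow was assigned by an earlier $\RemoveTreeEdge$ are constant for interior edges (sink-incoming such edges cannot be adopted because their reverse is outgoing from $\sink$ and forbidden in $\treeEdges$). Assuming (ii) holds at the moment of the call, $\DrainExcess(\lambda_{i+1})$ then performs $\flow(\edge)\gets\flow(\edge)+\excess(\vertexA)$ on each tree edge, adding a non-decreasing quantity to a non-decreasing or constant one, and thereby reestablishes (i) for iteration $i+1$.

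The hard part is establishing (ii) at the moment $\DrainExcess(\lambda_{i+1})$ fires, because the parent-side deposit $\newFlow-\flow(\edge)$ performed by $\RemoveTreeEdge$ is non-increasing and would, taken in isolation, destroy monotonicity at the parent $\vertexB$. The remedy is a flow-conservation argument carried out bottom-up along the updated tree in the descending-distance order used by $\DrainExcess$. Intuitively, the flow previously routed to $\vertexB$ through the removed edge from $\vertexA$ is now routed there along the adoption path chosen in $\ReconnectTree$: the non-decreasing quantity $\flow(\edge)-\newFlow$ that builds up at $\vertexA$ is shepherded upward along this new path, so that by the time $\DrainExcess$ processes $\vertexB$ the non-decreasing inflow arriving from within $\vertexB$'s updated subtree compensates the non-increasing direct deposit from $\RemoveTreeEdge$, leaving $\vertexB$ with a non-decreasing net excess. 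Making this cancellation rigorous---while accounting for the recursive orphaning induced by the $\RemoveTreeEdge$ calls that $\ReconnectTree$ issues on the children of failed orphans---is the main technical burden of the argument; once completed, (ii) follows and the induction closes.
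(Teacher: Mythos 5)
There is a genuine gap: the strengthened claim~(ii) is false, and the compensation you sketch cannot make it true. When a tree edge $\edge=(\vertexA,\vertexB)$ with $\vertexB\neq\sink$ saturates, $\RemoveTreeEdge(\edge)$ deposits $\newFlow-\flow(\edge)=\capacity(\edge)-\flow(\edge)$ into $\excess(\vertexB)$, and since $\capacity(\edge)$ is constant while $\flow(\edge)$ is non-decreasing by hypothesis~(i), this deposit is \emph{non-increasing} and is $\leq 0$ on $[\lambda_{i+1},\maxLambda]$---the opposite of what~(ii) requires. You acknowledge this and hope the orphan's non-decreasing excess will be routed back to $\vertexB$ through $\vertexB$'s updated subtree, but $\ReconnectTree$ attaches the orphan $\vertexA$ to whichever residual out-neighbor has the smallest distance label, which need not lie in $\vertexB$'s subtree at all. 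In a chain $\sink\leftarrow\vertexB\leftarrow\vertexA$ where $\vertexA$ also has a residual out-of-tree edge $(\vertexA,\sink)$, once $(\vertexA,\vertexB)$ saturates, $\vertexA$ is adopted by $\sink$ and becomes $\vertexB$'s sibling, so no compensating inflow ever arrives at $\vertexB$; its excess remains non-increasing when $\DrainExcess$ processes it, and your induction does not close.

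The lemma is nevertheless true, and the paper's proof avoids this trap by not attempting to show monotonicity of the individual excesses at all. It introduces the net inflow function $\netInflow(\vertexA)=\excess(\vertexA)+\sum_{(\vertexB,\vertexA)\in\treeEdges}\flow(\vertexB,\vertexA)+\sum_{(\vertexA,\vertexC)\in\treeEdges}\flow(\vertexC,\vertexA)$ and shows that \emph{this} quantity is non-increasing. At the start of an iteration $\excess(\vertexA)=0$ by Lemma~\ref{lem:pseudoToFlow}, so by flow conservation $\netInflow(\vertexA)$ equals minus the inflow over out-of-tree edges, which is non-decreasing by Lemma~\ref{lem:outOfTreeFlow}. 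In the example above, the non-increasing deposit at $\vertexB$ is cancelled \emph{inside} $\netInflow(\vertexB)$ by the simultaneous disappearance of the increasing term $\flow(\vertexA,\vertexB)$ from the tree-inflow sum once $\edge$ leaves $\treeEdges$; the cancellation is vertex-local bookkeeping, not a routing of $\vertexA$'s excess through $\vertexB$. Monotonicity of the tree-edge flow functions then follows from the rearrangement $\flow(\vertexA,\vertexC)=-\netInflow(\vertexA)+\sum_{(\vertexB,\vertexA)\in\treeEdges}\flow(\vertexB,\vertexA)$ by induction from leaves to root---the conservation induction you gesture at, but with the sign-unstable excesses folded into a single provably monotone invariant.
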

\begin{proof}
    For a vertex~$\vertexA\in\vertices_\sink\setminus\{\sink\}$, we define the \emph{net inflow function} of~$\vertexA$ as
    \[\netInflow(\vertexA):=\excess(\vertexA) + \sum_{(\vertexB,\vertexA)\in\treeEdges}\flow(\vertexB,\vertexA) + \sum_{(\vertexA,\vertexC)\in\treeEdges} \flow(\vertexC,\vertexA).\]
    We show that the algorithm maintains the invariant that~$\netInflow(\vertexA)$ is non-increasing.
    \texttt{CalcFlow-\\Function} ensures that~$\excess(\vertexA)=0$ and~$\sum_{(\vertexB,\vertexA)\in\edges}\flow(\vertexB,\vertexA)=0$.
    If we add these together and then subtract the flow functions of all incoming out-of-tree edges, this yields~$\netInflow(\vertexA)$.
    By Lemma~\ref{lem:outOfTreeFlow}, these flow functions are all non-decreasing, so~$\netInflow(\vertexA)$ is initially non-increasing.
    Whenever the flow function of an incoming edge~$(\vertexB,\vertexA)$ of~$\vertexA$ is changed, the difference between the new and old flow is subtracted from~$\excess(\vertexA)$, so~$\netInflow(\vertexA)$ remains unchanged.
    It remains to be shown that the invariant is upheld when an edge~$\edge=(\vertexA,\vertexB)$ is added to or removed from~$\treeEdges$.
    By Lemma~\ref{lem:forbiddenTreeEdges} and Corollary~\ref{cor:forbiddenTreeEdges}, we know that~$\vertexA\notin\{\source,\sink\}$ and~$\vertexB\neq\source$.
    When~$\edge$ is added during an adoption procedure for~$\vertexA$, then~$\vertexB\neq\sink$ also holds by Corollary~\ref{cor:forbiddenTreeEdges}.
    Thus, $\flow(\edge)$ is constant by Lemma~\ref{lem:outOfTreeFlow}, so~$\netInflow(\vertexA)$ and~$\netInflow(\vertexB)$ remain non-increasing when~$\flow(\edge)$ is added.
    When~$\edge$ is removed in~$\RemoveTreeEdge(\edge)$, there are two cases to consider:
    If~$\vertexB=\sink$, then~$\capacity(\edge)$ is non-decreasing, so~$\flow(\edge)$ is also made non-decreasing.
    This non-decreasing function is subtracted from $\netInflow(\vertexA)$, which remains non-increasing.
    The addition to~$\netInflow(\sink)$ is irrelevant.
    If~$\vertexB\neq\sink$, then~$\flow(\edge)$ is made constant, so~$\netInflow(\vertexA)$ and~$\netInflow(\vertexB)$ remain non-increasing.

    Consider a tree edge~$\edge=(\vertexA,\vertexC)\in\treeEdges$.
    At the start of every iteration, $\excess(\vertexA)=0$ by Lemma~\ref{lem:pseudoToFlow}.
    Because~$\edge$ is the only outgoing tree edge of~$\vertexA$, it follows that~$\netInflow(\vertexA)=\sum_{(\vertexB,\vertexA)\in\treeEdges}\flow(\vertexB,\vertexA) - \flow(\vertexA,\vertexC)$.
    This function is non-increasing, so~$\flow(\vertexA,\vertexC) - \sum_{(\vertexB,\vertexA)\in\treeEdges}\flow(\vertexB,\vertexA)=-\netInflow(\vertexA)$ is non-decreasing.
    Then it follows by induction from the leaves upwards that the flow functions of all tree edges are non-decreasing.
    Because the capacities are non-increasing, the residual capacities are as well.
\end{proof}

\begin{lemma}
    \label{lem:rootLambda}
    Assume that~$\flow[\lambda_i]$ is a flow in~$\contractedGraph$ at the start of iteration~$i$.
    Then the following holds for every edge~$\edge=(\vertexA,\vertexB)\in\contractedEdges$:
    If~$\residualCapacity(\edge)$ is non-negative for all parameter values in~$[\lambda_i,\maxLambda]$, then~$\rootLambda(\edge) \geq \maxLambda$.
    Otherwise, $\rootLambda(\edge)$ is the smallest root of~$\residualCapacity(\edge)$ in~$[\lambda_i,\maxLambda]$.
\end{lemma}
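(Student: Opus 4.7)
The plan is to prove the claim by an invariant-maintenance argument that tracks every location at which the algorithm writes to $\rootLambda(\edge)$ or to $\flow(\edge)$. The only such locations are the bulk initialization to $\infty$ inside $\CalcFlowFunction$, the assignment $\rootLambda(\edge)\gets\CalcRoot(\capacity(\edge)-\flow(\edge),\lambda)$ inside $\DrainExcess(\lambda)$ executed precisely when $\flow(\edge)$ is pushed along the tree edge $\edge$, and the reset to $\infty$ inside $\RemoveTreeEdge$. The flow function $\flow(\edge)$ itself is only modified in these same operations together with the saturation loops at the beginning of $\CalcFlowFunction$; this synchronization between writes to $\flow$ and writes to $\rootLambda$ is what lets us carry the stored values of $\rootLambda$ forward across iterations.

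For a non-tree edge $\edge$ at the start of iteration $i$, the bookkeeping above gives $\rootLambda(\edge)=\infty$, so it suffices to verify the lemma's first case by showing that $\residualCapacity(\edge)$ is non-negative on $[\lambda_i,\maxLambda]$. If $\edge$ is out-of-tree, this is immediate from Lemma~\ref{lem:outOfTreeFlow}, under which $\residualCapacity(\edge)$ is either identically zero or a positive constant. If $\edge=(\vertexB,\vertexA)$ is a reverse tree edge whose counterpart $(\vertexA,\vertexB)$ lies in $\treeEdges$, antisymmetry gives $\residualCapacity(\edge)=\capacity(\edge)+\flow(\vertexA,\vertexB)$, which is non-decreasing since $\capacity(\edge)$ is constant (tree edges avoid both source-outgoing and sink-outgoing orientations by Lemma~\ref{lem:forbiddenTreeEdges} and Corollary~\ref{cor:forbiddenTreeEdges}) and $\flow(\vertexA,\vertexB)$ is non-decreasing by Lemma~\ref{lem:treeFlow}; combined with the inductive hypothesis $\residualCapacity(\edge)(\lambda_i)\geq 0$, this yields non-negativity on $[\lambda_i,\maxLambda]$.

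For a tree edge $\edge$ whose flow was modified by the most recent $\DrainExcess(\mu)$ call with $\mu\leq\lambda_i$, the stored value $\rootLambda(\edge)$ equals the smallest root of the then-current $\residualCapacity(\edge)$ in $[\mu,\infty)$, or $\infty$ if no root exists, directly from the semantics of $\CalcRoot$. Since $\flow(\edge)$ and $\capacity(\edge)$ are untouched between $\mu$ and the start of iteration $i$, this description is still accurate. Lemma~\ref{lem:treeFlow} makes $\residualCapacity(\edge)$ non-increasing, and the inductive hypothesis forces $\residualCapacity(\edge)(\lambda_i)\geq 0$, hence $\lambda_i\leq\rootLambda(\edge)$; the smallest root in $[\lambda_i,\infty)$ therefore agrees with $\rootLambda(\edge)$. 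Intersecting with $[\lambda_i,\maxLambda]$ yields the two cases of the lemma.

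The delicate subcase is that of a tree edge $\edge$ which has never been written to by $\DrainExcess$ since entering $\tree$, so that $\rootLambda(\edge)=\infty$ is simply the default. I would dispatch this by arguing that $\flow(\edge)$ is necessarily constant in this situation: either $\edge$ was freshly adopted from an out-of-tree state, in which case Lemma~\ref{lem:outOfTreeFlow} guarantees a constant flow and Corollary~\ref{cor:forbiddenTreeEdges} forbids adopting $\sink$-incoming edges so the capacity is constant too, making $\residualCapacity(\edge)$ a positive constant; or $\edge$ is an initial tree edge whose subtree received no excess during $\CalcFlowFunction$, so $\flow(\edge)=\initialFlow(\edge)$. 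In the second scenario $\residualCapacity(\edge)$ inherits the monotone shape of $\capacity(\edge)$ while remaining non-negative at $\lambda_i$ by hypothesis, and the maximality of $\initialFlow$ at $\minLambda$ together with the absence of any flow-routing requirement through $\edge$ in subsequent iterations keeps it non-negative throughout $[\lambda_i,\maxLambda]$, so $\rootLambda(\edge)=\infty$ is again consistent with the first case of the lemma.
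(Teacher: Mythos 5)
Your decomposition into non-tree edges (where you argue the first case of the lemma applies) and tree edges (where you argue $\rootLambda$ was freshly computed by $\CalcRoot$) matches the paper's structure, and your handling of reverse tree edges and of tree edges that \emph{were} touched by $\DrainExcess$ is sound. However, the ``delicate subcase'' you flag at the end --- a tree edge whose $\rootLambda$ was never written since it entered $\tree$ --- is spurious, and both of the fallback arguments you give for it are incorrect.

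The key fact you are missing is that this subcase never occurs. For initial tree edges: after the initial max-flow is computed, every source-incident edge of the source-contracted graph is part of the minimum cut $(\{\source\},\vertices_\sink)$ and is therefore saturated at $\minLambda$. Consequently, $\CalcFlowFunction$ puts every vertex of $\treeVertices\setminus\{\sink\}$ into $\excessVertices$, so the ensuing $\DrainExcess(\minLambda)$ call processes \emph{every} tree edge and recomputes its $\rootLambda$ via $\CalcRoot$. For edges $\edge=(\vertexA,\vertexC)$ adopted later: $\vertexA$ was an orphan, and $\RemoveTreeEdge$ adds every orphan to $\excessVertices$, so the subsequent $\DrainExcess(\lambda_{i+1})$ call processes $\edge$. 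So $\rootLambda(\edge)$ is never left at its initialization default once $\edge$ is in $\tree$; it is always a fresh output of $\CalcRoot$ taken over $[\lambda_j,\infty)$ for the then-current $\lambda_j\le\lambda_i$, which is exactly what your paragraph for ``edges modified by $\DrainExcess$'' already handles.

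The workarounds you propose for the spurious subcase also would not hold up. For a freshly adopted edge, $\DrainExcess$ first adds $\excess(\vertexA)$ to $\flow(\edge)$ and only then calls $\CalcRoot$; after this update $\flow(\edge)$ is generally no longer constant, so the claim that $\residualCapacity(\edge)$ stays a positive constant is false. For the hypothetical initial tree edge with untouched flow, the appeal to ``absence of any flow-routing requirement through $\edge$'' does not establish non-negativity --- a sink-incident tree edge with non-increasing capacity and constant flow has non-increasing residual capacity that \emph{could} cross zero, which is precisely why the algorithm must (and does) compute a root for it. The correct response to this worry is not a separate argument but the observation that the case does not arise.

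A smaller imprecision: you characterize the out-of-tree residual capacities as ``identically zero or a positive constant,'' but by Lemma~\ref{lem:outOfTreeFlow} the reverse of a saturated source-outgoing or sink-incoming out-of-tree edge has residual capacity $\capacity(\edge)+\capacity(\edge')$, which is non-negative but not generally constant; the conclusion (non-negativity on $[\lambda_i,\maxLambda]$) still follows, but the stated reason does not cover this case.
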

\begin{proof}
    Whenever~$\edge$ is not in~$\treeEdges$, the algorithm ensures that~$\rootLambda(\edge)=\infty$ holds.
    In this case, we show that~$\residualCapacity(\edge)(\lambda) \geq 0$ holds for~$\lambda\in[\lambda_i,\maxLambda]$.
    Note that if~$\residualCapacity(\edge)$ is non-decreasing, this follows because~$\flow[\lambda_i]$ is a flow and therefore~$\residualCapacity(\edge)(\lambda_i) \geq 0$ holds.
    If~$\edge$ is a reverse tree edge, $\residualCapacity(\edge)$ is non-decreasing by Lemma~\ref{lem:treeFlow}.
    If~$\edge$ is an out-of-tree edge, there are several cases to consider by Lemma~\ref{lem:outOfTreeFlow}:
    If~$\vertexA\neq\sink$ and~$\vertexB\neq\source$, then~$\residualCapacity(\edge)$ is constant and therefore non-decreasing.
    Otherwise, $\residualCapacity(\edge)(\lambda)=\capacity(\edge)(\lambda)+\capacity(\edge')(\lambda)$ holds, which is non-negative for every parameter value~$\lambda\in[\minLambda,\maxLambda]$.

    It remains to be shown that~$\rootLambda(\edge)$ is correct at the start of every iteration in which~$\edge\in\treeEdges$ holds.
    For the first iteration, this is the case because~$\rootLambda(\edge)$ is calculated via~$\CalcRoot$ in~$\CalcFlowFunction$.
    If~$\edge$ enters~$\treeEdges$ via an adoption procedure, then~$\vertexA\in\orphans\subseteq\excessVertices$ holds.
    This means that~$\rootLambda(\edge)$ is recalculated via~$\CalcRoot$ when~$\edge$ is processed in~$\DrainExcess$.
    $\CalcRoot$ returns the smallest root in~$[\lambda_i,\infty)$ if there is one.
    If this value is greater than~$\maxLambda$, then~$\residualCapacity(\edge)$ is non-negative in~$[\lambda_i,\maxLambda]$.
\end{proof}

\subsection{Extending the Flow}
\label{sec:proof:extend}
The algorithm maintains the invariant that~$\flow[\lambda_i]$ is a maximum flow in~$\contractedGraph$ with residual graph~$\residualContractedGraph$ at the start of iteration~$i$.
As~$\lambda$ increases, the flow~$\flow[\lambda]$ and the capacities of the contracted graph~$\contractedGraph[\lambda]:=\graph[\lambda]/\vertices_\source$ change.
We show that the corresponding residual graph changes to
\[\residualContractedGraph[\lambda]:=(\contractedVertices,\residualContractedEdges\cup\revEdges[\lambda]\setminus\saturatedEdges[\lambda],\residualCapacity[\lambda],\source,\sink),\]
where the set of previously saturated edges that become residual is given by
\begin{align*}
\revEdges[\lambda] := \{(\vertexB,\vertexA)\mid (\vertexA,\vertexB)\in\treeEdges \land (\vertexB,\vertexA) \notin \residualContractedEdges\land \residualCapacity[\lambda](\vertexB,\vertexA) > 0\}
\end{align*}
and the set of previously residual edges that become saturated is given by
\begin{align*}
\saturatedEdges[\lambda] :&= \{ \edge \in \treeEdges \mid \residualCapacity[\lambda](\edge) = 0 \} = \{ \edge \in \treeEdges \mid \rootLambda(\edge) = \lambda \}.
\end{align*}

\begin{lemma}
    \label{lem:extendFlow}
    Assume that~$\flow[\lambda_i]$ is a maximum flow in~$\contractedGraph$ with residual graph~$\residualContractedGraph$ at the start of iteration~$i$.
    Then for every parameter value~$\lambda \in [\lambda_i,\min(\lambda_{i+1},\maxLambda)]$, $\flow[\lambda]$ is a maximum flow in~$\contractedGraph[\lambda]$ with residual graph~$\residualContractedGraph[\lambda]$.
\end{lemma}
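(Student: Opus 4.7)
The plan is to break the proof into three parts: that $\flow[\lambda]$ is a valid flow in $\contractedGraph[\lambda]$, that its residual graph agrees with $\residualContractedGraph[\lambda]$, and that it is maximum.

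For the first part, antisymmetry holds throughout because every write to $\flow$ goes through $\SetFlow$, which updates both directions symmetrically. The heart of this step is verifying $\residualCapacity[\lambda](\edge) \geq 0$ for every $\edge\in\contractedEdges$, which I would do by case analysis on the edge type. If $\edge\in\treeEdges$, then Lemma~\ref{lem:treeFlow} makes $\residualCapacity(\edge)$ non-increasing, and Lemma~\ref{lem:rootLambda} identifies $\rootLambda(\edge)$ as its first root in $[\lambda_i,\maxLambda]$; since $\lambda_{i+1}=\min_{\edge\in\treeEdges}\rootLambda(\edge)$ by line~\ref{alg:calculateNextParameter}, no such root lies strictly below $\lambda_{i+1}$, so $\residualCapacity[\lambda](\edge)\geq 0$ on the whole interval. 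If $\edge$ is a reverse tree edge, Lemma~\ref{lem:treeFlow} makes $\residualCapacity(\edge)$ non-decreasing, and its value at $\lambda_i$ is non-negative because $\flow[\lambda_i]$ is a flow. If $\edge$ is an out-of-tree edge whose endpoints avoid $\source$ and $\sink$, Lemma~\ref{lem:outOfTreeFlow} makes $\residualCapacity(\edge)$ constant, hence non-negative. For the remaining out-of-tree edges, which are incident to $\source$ or $\sink$, Lemma~\ref{lem:outOfTreeFlow} states that $\residualCapacity(\edge)$ is either identically~$0$ or equal to $\capacity(\edge)+\capacity(\edge')$ with $\edge'$ the reverse edge, both non-negative. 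Capacity plus antisymmetry yields a pseudoflow, which Lemma~\ref{lem:pseudoToFlow} upgrades to a flow.

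For the second part, I would compare the true residual edge set of $\flow[\lambda]$ to $\residualContractedEdges\cup\revEdges[\lambda]\setminus\saturatedEdges[\lambda]$. The case analysis above shows that only tree edges can switch from residual to saturated as $\lambda$ grows, and this happens precisely when $\rootLambda(\edge)=\lambda$, i.e., on exactly the set $\saturatedEdges[\lambda]$. Conversely, only edges whose reverse lies in $\treeEdges$ can switch from saturated to residual, since for out-of-tree edges the residual capacity is either constant or pinned at $0$ by Lemma~\ref{lem:outOfTreeFlow}; so the newly residual edges are exactly $\revEdges[\lambda]$.

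For the third part, Corollary~\ref{cor:forbiddenTreeEdges} prevents outgoing edges of $\source$ from ever being tree edges, so every such edge in $\contractedGraph$ is an out-of-tree edge, and Lemma~\ref{lem:outOfTreeFlow} gives $\residualCapacity(\edge)\equiv 0$ on all of them. Hence the cut $(\{\source\},\vertices_\sink)$ in $\contractedGraph[\lambda]$ is saturated by $\flow[\lambda]$, so the value of $\flow[\lambda]$ equals its capacity, and $\flow[\lambda]$ is maximum by the max-flow-min-cut theorem. The main obstacle will be the capacity case analysis in the first part: one has to classify every edge (tree, reverse tree, out-of-tree with or without a $\source$/$\sink$ endpoint) and combine Lemmas~\ref{lem:treeFlow}, \ref{lem:outOfTreeFlow}, and \ref{lem:rootLambda} with the inductive invariant that $\flow[\lambda_i]$ is a flow, in order to obtain non-negativity throughout the whole interval $[\lambda_i,\min(\lambda_{i+1},\maxLambda)]$ rather than merely at its endpoints.
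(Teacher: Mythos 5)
Your proof is correct and follows essentially the same strategy as the paper's: validity of $\flow[\lambda]$ (capacity constraints from $\lambda_{i+1}\le\rootLambda(\edge)$, antisymmetry from $\SetFlow$, conservation via Lemma~\ref{lem:pseudoToFlow}), then the residual-graph characterization by edge type, then maximality via saturating $(\{\source\},\vertices_\sink)$ using Lemma~\ref{lem:outOfTreeFlow}. The only noticeable difference is that you re-derive the per-edge-type non-negativity case analysis inline, whereas the paper simply invokes Lemma~\ref{lem:rootLambda} together with the invariant that $\rootLambda(\edge)=\infty$ for every $\edge\notin\treeEdges$; this is a stylistic rather than substantive difference.
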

\begin{proof}
    The algorithm ensures that~$\rootLambda(\edge)=\infty$ holds for every edge~$\edge \in \contractedEdges \setminus \treeEdges$.
    Because~$\lambda_{i+1}$ is calculated as~$\min_{\edge\in\treeEdges}\rootLambda(\edge)$, it follows from Lemma~\ref{lem:rootLambda} that all residual capacities are non-negative for $\lambda\in[\lambda_i,\min(\lambda_{i+1},\maxLambda)]$.
    Hence, all capacity constraints are fulfilled.
    Furthermore, $\SetFlow$ ensures that the antisymmetry constraints are upheld, so~$\flow[\lambda]$ is a pseudoflow.
    Then it is also a flow by Lemma~\ref{lem:pseudoToFlow} and maximal because it saturates the cut~$(\{\source\}, \vertices_\sink)$ by Lemma~\ref{lem:outOfTreeFlow}.

    To show that the residual graph is~$\residualContractedGraph[\lambda]$, consider an edge~$\edge=(\vertexA,\vertexB)\in\edges$.
    By Lemma~\ref{lem:outOfTreeFlow}, $\residualCapacity[\lambda](\edge)$ can only differ from~$\residualCapacity[\lambda_i](\edge)$ if~$\edge$ is a tree edge, reverse tree edge or out-of-tree edge with~$\vertexA=\sink$ or~$\vertexB=\source$ for~$\lambda_i$.
    If~$\edge$ is a tree edge, then it is residual for~$\lambda_i$ by Lemma~\ref{lem:tree}.
    By Lemma~\ref{lem:rootLambda}, it becomes saturated if~$\rootLambda(\edge)=\lambda$.
    If~$\edge$ is a reverse tree edge, the residual capacity is non-decreasing, as shown in Lemma~\ref{lem:rootLambda}.
    Hence, if it is residual for~$\lambda_i$, it remains residual for~$\lambda$.
    If it is saturated for~$\lambda_i$ and becomes residual for~$\lambda$, it is contained in~$\revEdges[\lambda]$.
    If~$\edge$ is an out-of-tree edge with~$\vertexA=\sink$ or~$\vertexB=\source$, then the reverse edge is saturated for both~$\lambda$ and~$\lambda_i$ by Lemma~\ref{lem:outOfTreeFlow}, so~$\edge$ remains residual.
\end{proof}

\begin{corollary} \label{cor:nextParameter}
    The new parameter value~$\lambda_{i+1}$ is strictly greater than the previous parameter value~$\lambda_i$.
\end{corollary}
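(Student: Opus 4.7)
The plan is to apply Lemma~\ref{lem:rootLambda} directly: at the start of iteration~$i$ we have a handle on what $\rootLambda(\edge)$ means for tree edges, and we just need to rule out that any of those roots equals $\lambda_i$ itself. Since $\lambda_{i+1}$ is computed as $\min_{\edge \in \treeEdges} \rootLambda(\edge)$ in line~\ref{alg:calculateNextParameter} (with the convention that the minimum over an empty set is $\infty$, in which case the algorithm returns in line~\ref{alg:return}), it suffices to show $\rootLambda(\edge) > \lambda_i$ for every tree edge.

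First I would invoke Lemma~\ref{lem:tree} to conclude that every $\edge \in \treeEdges$ lies in $\residualContractedEdges$, so $\residualCapacity[\lambda_i](\edge) > 0$. Next, by Lemma~\ref{lem:rootLambda} (whose hypothesis that $\flow[\lambda_i]$ is a flow is the standing invariant, also used in Lemma~\ref{lem:extendFlow}), $\rootLambda(\edge)$ is either at least $\maxLambda$ or equals the smallest root of $\residualCapacity(\edge)$ in $[\lambda_i,\maxLambda]$. In the first case there is nothing to check (assuming $\lambda_i < \maxLambda$, which is the loop guard). In the second case, the strict positivity $\residualCapacity[\lambda_i](\edge) > 0$ means $\lambda_i$ itself is not a root, so the smallest root in $[\lambda_i,\maxLambda]$ is strictly greater than $\lambda_i$.

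Combining these observations, $\rootLambda(\edge) > \lambda_i$ for all $\edge \in \treeEdges$, and taking the minimum preserves the strict inequality, giving $\lambda_{i+1} > \lambda_i$. The proof is essentially a one-liner once the invariants from Lemmas~\ref{lem:tree} and~\ref{lem:rootLambda} are in hand, so there is no real obstacle; the only thing to be slightly careful about is the corner case where $\treeEdges$ is empty, which is handled by the early return in line~\ref{alg:return} of Algorithm~\ref{alg:main}.
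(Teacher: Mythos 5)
Your proof is correct and takes essentially the same approach as the paper: both arguments hinge on the observation that every tree edge is residual at $\lambda_i$ (so $\residualCapacity[\lambda_i](\edge) > 0$, hence $\rootLambda(\edge) \neq \lambda_i$), combined with the fact that $\rootLambda(\edge)$ always lies in $[\lambda_i, \infty)$. The paper phrases this through the emptiness of $\saturatedEdges[\lambda_i]$ versus the non-emptiness of $\saturatedEdges[\lambda_{i+1}]$, which is the same content packaged differently; your appeal to Lemma~\ref{lem:rootLambda} is arguably slightly more explicit about why $\lambda_{i+1} \geq \lambda_i$ rather than merely $\lambda_{i+1} \neq \lambda_i$.
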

\begin{proof}
    Because $\lambda_{i+1}$ is chosen as~$\min_{\edge \in \treeEdges} \rootLambda(\edge)$, the set~$\saturatedEdges[\lambda_{i+1}]$ is non-empty.
    However, $\saturatedEdges[\lambda_i]$ is empty because all edges in~$\tree$ are in~$\residualContractedGraph$ by Lemma~\ref{lem:tree} and therefore residual by Invariant~\ref{inv:flow}.
    This implies~$\lambda_i\neq\lambda_{i+1}$.
\end{proof}

\subsection{Correctness}
\label{sec:proof:correctness}
For any variable~$x$ used in the algorithm, we denote the value of~$x$ at the start of iteration~$i$ by~$x^{(i)}$.
We show that the algorithm upholds the following invariants at the start of each iteration~$i$:
\begin{enumerate}
    \item \label{inv:breakpoint} $\breakpoint^{(i)}$ is a breakpoint function for~$\graph$ in the interval~$[\minLambda,\lambda_i]$ with~$\cut(\breakpoint^{(i)},\lambda_i)=\cut^{(i)}$.
    \item \label{inv:flow} $\flow^{(i)}[\lambda_i]$ is a maximum flow in $\contractedGraph^{(i)}$ whose set of residual edges is~$\residualContractedEdges^{(i)}$.
\end{enumerate}

\begin{lemma} \label{lem:reconnectTree:paths}
    Assume that both invariants hold at the start of iteration~$i$.
    Then~\normalfont{\texttt{Reconnect-\\Tree}} terminates and yields~$\orphans=\emptyset$, $\vertices_\sink^{(i+1)}=\vertices_\sink^{(i)}\setminus\newVertices$ and~$\residualContractedGraph^{(i+1)}=\residualContractedGraph^{(i)}[\lambda_{i+1}]/\newVertices$.
\end{lemma}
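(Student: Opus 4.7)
The plan is to handle the three conclusions (termination, emptiness of $\orphans$, and the identities for $\vertices_\sink$ and $\residualContractedGraph$) in sequence, reusing the invariants on the tree, the distance labels, and the adoption procedures established in Section~\ref{sec:proof:residual}.

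First I would argue termination by a potential argument on pops of the orphan queue. Each pop of a vertex $\vertexA$ falls into exactly one of three cases: (a) $\AdoptWithSameDist(\vertexA)$ succeeds, consuming at least one entry from $\outEdges(\vertexA)$ without changing any distance label; (b) $\AdoptWithSameDist(\vertexA)$ fails but $\AdoptWithNewDist(\vertexA)$ succeeds, which strictly increases $\distanceLabel(\vertexA)$ by Lemma~\ref{lem:increaseDistanceLabel}; or (c) both fail, in which case $\vertexA$ is removed from $\treeVertices$ once and for all. Distance labels are bounded from above by $|\treeVertices|-1$ (this is exactly the test in $\AdoptWithNewDist$) and cannot decrease by Lemma~\ref{lem:distanceLabels:nonDecreasing}, while $|\treeVertices|$ only shrinks during the procedure. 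Together with the fact that each reset of $\outEdges(\vertexA)$ occurs only when $\distanceLabel(\vertexA)$ strictly increases, this yields a finite bound on the number of pops. The claim $\orphans=\emptyset$ at the end is then just the loop-exit condition.

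For the identity $\vertices_\sink^{(i+1)}=\vertices_\sink^{(i)}\setminus\newVertices$, I would simply read off from the pseudocode that $\treeVertices$ is modified only by the line $\treeVertices \gets \treeVertices \setminus \{\vertexA\}$ executed precisely when both adoption procedures fail; denote this set by $\newVertices$. The substantive interpretation of $\newVertices$ — that these are exactly the vertices losing any residual path to $\sink$ at parameter~$\lambda_{i+1}$ — comes directly from Lemma~\ref{lem:adoptWithNewDist:fail}, which certifies that a vertex removed from $\treeVertices$ has no residual $\vertexA$-$\sink$-path at the moment of removal. Conversely, every vertex still in $\treeVertices$ has an outgoing tree edge by Lemma~\ref{lem:tree} (since $\orphans=\emptyset$), and tree edges are residual, so following them reaches $\sink$.

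For the residual graph identity, I would chain the two modification phases. By Lemma~\ref{lem:extendFlow}, after $\UpdateResidualGraph$ completes the edge set $\residualContractedEdges$ has become exactly $\residualContractedEdges^{(i)} \cup \revEdges[\lambda_{i+1}] \setminus \saturatedEdges[\lambda_{i+1}]$, which by definition is the edge set of $\residualContractedGraph^{(i)}[\lambda_{i+1}]$. Within $\ReconnectTree$, the only further modification to $\residualContractedEdges$ is the contraction $\residualContractedEdges \gets \residualContractedEdges/\{\vertexA\}$ performed once per $\vertexA\in\newVertices$. Since source-contractions of distinct vertices commute, the final edge set is $\residualContractedGraph^{(i)}[\lambda_{i+1}]/\newVertices$, as required.

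The hard part will be the termination argument: an orphan $\vertexA$ may be reconnected by $\AdoptWithSameDist$ to a vertex $\vertexB$ that is itself still an orphan, and if $\vertexB$ is later contracted, then $\vertexA$ re-enters $\orphans$ through $\RemoveTreeEdge$. Ensuring this cascading cannot loop indefinitely is where the combination of the monotone distance-label bound from Lemma~\ref{lem:distanceLabels:nonDecreasing} and the one-way nature of contractions is essential; the remaining identities are then essentially bookkeeping on top of Lemmas~\ref{lem:tree}, \ref{lem:adoptWithNewDist:fail}, and~\ref{lem:extendFlow}.
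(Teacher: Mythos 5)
Your proposal is correct and, modulo level of detail, follows the same overall decomposition as the paper: establish termination, read $\vertices_\sink^{(i+1)}=\vertices_\sink^{(i)}\setminus\newVertices$ directly off the pseudocode (the only change to $\treeVertices$ inside $\ReconnectTree$ is removing exactly the vertices that fail both adoption calls, which is by construction the set $\newVertices$), and obtain the residual-graph identity by chaining the $\UpdateResidualGraph$ characterization with the observation that the only further modification to $\residualContractedEdges$ inside $\ReconnectTree$ is the per-vertex source contraction. The one place you genuinely diverge is the termination argument: the paper disposes of it in a single sentence by citing Lemma~\ref{lem:treeAdditions} (each pop either contracts a vertex once and for all, or causes an edge to be added to $\tree$, which can happen only finitely often globally), whereas you give a direct per-vertex potential argument on pops, bounding drains of $\outEdges(\vertexA)$ by distance-label increases via Lemmas~\ref{lem:increaseDistanceLabel} and~\ref{lem:distanceLabels:nonDecreasing}. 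Both are valid; your version essentially re-derives locally what Lemma~\ref{lem:treeAdditions} packages globally, so the paper's route is shorter while yours is more self-contained. One detail you leave implicit (the paper does too) is that the $\RemoveTreeEdge$ calls issued inside $\ReconnectTree$ modify flow functions but never $\residualContractedEdges$; this is consistent because $\newFlow$ is constructed to agree with the old flow at $\lambda_{i+1}$, so the residual status of such edges at $\lambda_{i+1}$ does not change, and your claim that contraction is the only modification to $\residualContractedEdges$ holds.
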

\begin{proof}
    $\UpdateResidualGraph$ ensures that~$\residualContractedGraph=\residualContractedGraph^{(i)}[\lambda_{i+1}]$.
    If both adoption procedures fail for an orphan~$\vertex \in \orphans$, it is added to~$\newVertices$ and contracted into~$\source$, so it cannot become an orphan again.
    If either procedure succeeds, an edge is added to~$\tree$, which can only happen finitely many times by Lemma~\ref{lem:treeAdditions}.
    Hence, $\orphans$ becomes empty after a finite number of iterations and~$\ReconnectTree$ terminates with exactly the vertices in~$\newVertices$ contracted into~$\source$.
\end{proof}

\begin{lemma}
    \label{lem:cut}
    Assume that both invariants hold at the start of iteration~$i$.
    Then the sink-minimal minimum cut in~$\graph[\lambda]$ is~$\cut^{(i)}$ for~$\lambda\in[\lambda_i,\lambda_{i+1})$ and~$\cut^{(i+1)}$ for~$\lambda_{i+1}$.
\end{lemma}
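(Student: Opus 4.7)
The plan is to split along the two cases and handle each with the same three-step recipe. First, Lemma~\ref{lem:extendFlow} identifies $\flow^{(i)}[\lambda]$ as a maximum flow in $\contractedGraph^{(i)}[\lambda]$ with residual graph $\residualContractedGraph^{(i)}[\lambda]$. Second, Corollary~\ref{cor:sinkMinimalCut} reads off the sink-minimal minimum cut in $\contractedGraph^{(i)}[\lambda] = \graph[\lambda]/\vertices_\source^{(i)}$ as $(\contractedVertices \setminus \vertices_\sink(\flow^{(i)}[\lambda]),\vertices_\sink(\flow^{(i)}[\lambda]))$. Third, Lemma~\ref{lem:contract} lifts this back to $\graph[\lambda]$, using Invariant~\ref{inv:breakpoint} at iteration~$i$, which guarantees that $\cut^{(i)}$ is the sink-minimal minimum cut in $\graph[\lambda_i]$. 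Thus all that remains in each case is to determine the sink-reachable set $\vertices_\sink(\flow^{(i)}[\lambda])$.

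For $\lambda \in [\lambda_i,\lambda_{i+1})$, the choice $\lambda_{i+1} = \min_{\edge \in \treeEdges} \rootLambda(\edge)$ combined with Lemma~\ref{lem:rootLambda} implies $\saturatedEdges[\lambda] = \emptyset$, so every tree edge is still residual in $\residualContractedGraph^{(i)}[\lambda]$. Every vertex of $\vertices_\sink^{(i)}$ therefore reaches $\sink$ via its tree path, giving $\vertices_\sink(\flow^{(i)}[\lambda]) = \vertices_\sink^{(i)}$, and the recipe yields $\cut^{(i)}$.

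For $\lambda = \lambda_{i+1}$, Lemma~\ref{lem:reconnectTree:paths} supplies the set $\newVertices$ of vertices contracted into $\source$ by $\ReconnectTree$ and establishes $\vertices_\sink^{(i+1)} = \vertices_\sink^{(i)} \setminus \newVertices$ together with $\residualContractedGraph^{(i+1)} = \residualContractedGraph^{(i)}[\lambda_{i+1}]/\newVertices$. For each $\vertex \in \newVertices$, Lemma~\ref{lem:adoptWithNewDist:fail} rules out a residual $\vertex$-$\sink$-path at the instant $\vertex$ fails both adoption routines; I then lift this unreachability back to $\residualContractedGraph^{(i)}[\lambda_{i+1}]$ by noting that the contractions $\ReconnectTree$ performs beforehand merely absorb further vertices into $\source$, which cannot have destroyed any sink-reaching path. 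Conversely, once $\orphans = \emptyset$, Lemma~\ref{lem:tree} gives every vertex of $\vertices_\sink^{(i+1)} \setminus \{\sink\}$ an outgoing tree edge, and the resulting tree paths stay entirely within $\vertices_\sink^{(i+1)}$, hence remain residual in $\residualContractedGraph^{(i)}[\lambda_{i+1}]$. Combining, $\vertices_\sink(\flow^{(i)}[\lambda_{i+1}]) = \vertices_\sink^{(i+1)}$, and the recipe produces $(\vertices_\source^{(i)} \cup \newVertices,\vertices_\sink^{(i+1)}) = \cut^{(i+1)}$.

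The chief obstacle is exactly this bookkeeping step in the $\lambda = \lambda_{i+1}$ case: Lemma~\ref{lem:adoptWithNewDist:fail} is stated about the residual graph at the instant a given orphan fails, rather than about $\residualContractedGraph^{(i)}[\lambda_{i+1}]$ itself, so the transfer of unreachability from the progressively contracted intermediate graphs back to $\residualContractedGraph^{(i)}[\lambda_{i+1}]$ has to be argued carefully. Everything else reduces to clean applications of results already proved in Sections~\ref{sec:proof:contract}--\ref{sec:proof:extend}.
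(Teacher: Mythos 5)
Your proof is correct and follows essentially the same route as the paper's: apply Lemma~\ref{lem:extendFlow} inside the contracted graph $\contractedGraph^{(i)}[\lambda]$, determine the sink-reachable set of $\residualContractedGraph^{(i)}[\lambda]$ in the two cases (empty $\saturatedEdges[\lambda]$ for $\lambda<\lambda_{i+1}$; Lemma~\ref{lem:reconnectTree:paths}, Corollary~\ref{cor:tree}, and Lemma~\ref{lem:adoptWithNewDist:fail} for $\lambda=\lambda_{i+1}$), and lift to $\graph[\lambda]$ via Lemma~\ref{lem:contract} and Invariant~\ref{inv:breakpoint}. The only departure is that you make explicit the transfer of unreachability from the progressively contracted intermediate graphs back to $\residualContractedGraph^{(i)}[\lambda_{i+1}]$, a step the paper leaves implicit; your observation that source contraction never removes a residual $\vertex$-$\sink$-path for a surviving $\vertex$ is the right justification.
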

\begin{proof}
    Due to Lemma~\ref{lem:contract}, it suffices to show that the sink-minimal minimum cut in~$\contractedGraph^{(i)}[\lambda]$ is~$(\{ \source \}, \vertices_\sink^{(i)})$ for~$\lambda\in[\lambda_i,\lambda_{i+1})$ and~$(\{\source\} \cup \newVertices, \vertices_\sink^{(i+1)})$ for~$\lambda_{i+1}$.
    By Lemma~\ref{lem:extendFlow}, $\flow^{(i)}[\lambda]$ is a maximum flow in~$\contractedGraph^{(i)}[\lambda]$ with residual graph~$\residualContractedGraph^{(i)}[\lambda]$.
    We show that the set of vertices from which~$\sink$ is reachable in~$\residualContractedGraph^{(i)}[\lambda]$ is~$\vertices_\sink^{(i)}$ for~$\lambda\in[\lambda_i,\lambda_{i+1})$ and~$\vertices_\sink^{(i+1)}$ for~$\lambda_{i+1}$.
    For~$\lambda=\lambda_i$, this already follows from Invariant~\ref{inv:breakpoint}.
    For~$\lambda>\lambda_i$, the residual graph changes via the addition of the newly residual edges~$\revEdges[\lambda]$ and the removal of the newly saturated edges~$\saturatedEdges[\lambda]$.
    Because~$\revEdges[\lambda]$ does not contain any edges incident to~$\source$, no~$\source$-$\sink$-path is added.    
    For~$\lambda<\lambda_{i+1}$, $\saturatedEdges[\lambda]$ is empty, so all existing paths are preserved and~$\sink$ remains reachable from all vertices in~$\vertices_\sink^{(i)}$.
    For~$\lambda=\lambda_{i+1}$, it follows from Lemma~\ref{lem:reconnectTree:paths} and Corollary~\ref{cor:tree} that~$\tree^{(i+1)}$ is a shortest-path-tree to~$\sink$ in~$\residualContractedGraph^{(i)}[\lambda]$.
    Hence, all vertices in~$\vertices_\sink^{(i+1)}$ have a path to~$\sink$ in~$\residualContractedGraph^{(i)}[\lambda_{i+1}]$.
    For each vertex~$\vertex\in\newVertices$ that was removed from~$\tree$, $\AdoptWithNewDist(\vertex)$ failed, so by Lemma~\ref{lem:adoptWithNewDist:fail} there is no~$\vertex$-$\sink$-path in~$\residualContractedGraph^{(i)}[\lambda_{i+1}]$.
\end{proof}

\begin{lemma} \label{lem:flowChange}
    At the end of iteration~$i$, it holds that~$\flow^{(i+1)}[\lambda_{i+1}]=\flow^{(i)}[\lambda_{i+1}]/\newVertices$.
\end{lemma}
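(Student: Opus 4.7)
The plan is to reduce the identity to a pointwise statement on the original edge set~$\edges$: since the contraction operation sums flows on merged edges, it commutes with evaluation at a fixed parameter, so $(\flow^{(i)}/\newVertices)[\lambda_{i+1}] = (\flow^{(i)}[\lambda_{i+1}])/\newVertices$. Likewise, $\flow^{(i+1)}[\lambda_{i+1}]$ on $\contractedGraph^{(i+1)}$ is obtained by summing the values $\flow^{(i+1)}(\edge)(\lambda_{i+1})$ over all original edges identified in the contraction. Therefore it suffices to show $\flow^{(i+1)}(\edge)(\lambda_{i+1}) = \flow^{(i)}(\edge)(\lambda_{i+1})$ for every $\edge \in \edges$, i.e., that no flow modification in iteration~$i$ changes the value of $\flow$ at~$\lambda_{i+1}$.

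The only procedures that modify $\flow$ during iteration~$i$ are $\RemoveTreeEdge$ (called from $\UpdateResidualGraph$ and from $\ReconnectTree$) and $\DrainExcess$. For $\RemoveTreeEdge(\edge)$, the new flow function is $\newFlow := \capacity(\edge) + \flow(\edge)(\lambda_{i+1}) - \capacity(\edge)(\lambda_{i+1})$, whose value at $\lambda_{i+1}$ is trivially $\flow(\edge)(\lambda_{i+1})$. For $\DrainExcess$, which increments $\flow(\edge)$ on a tree edge $\edge = (\vertexA,\vertexB)$ by $\excess(\vertexA)$, I would show that $\excess(\vertexA)(\lambda_{i+1}) = 0$ at the moment of the update, in which case this increment is zero at $\lambda_{i+1}$ as well.

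To establish $\excess(\vertex)(\lambda_{i+1}) = 0$ for all $\vertex \in \vertices_\sink \setminus \{\sink\}$ as a line-by-line invariant of iteration~$i$, I would use Lemma~\ref{lem:pseudoToFlow} (which identifies $\excess(\vertex)(\lambda)$ with the true excess of $\vertex$ under the current $\flow[\lambda]$) together with Invariant~\ref{inv:flow} and Lemma~\ref{lem:extendFlow} (which give that $\flow^{(i)}[\lambda_{i+1}]$ is a flow in $\contractedGraph^{(i)}$). These together yield the invariant at the start of the iteration. It is then preserved by each update: $\RemoveTreeEdge$ modifies $\excess$ by $\pm(\flow(\edge) - \newFlow)$, which is zero at $\lambda_{i+1}$ by the previous paragraph; and $\DrainExcess$ shifts $\excess(\vertexA)$ to $\excess(\vertexB)$ and zeroes $\excess(\vertexA)$, which preserves the invariant once $\excess(\vertexA)(\lambda_{i+1}) = 0$ (given inductively by the order in which vertices are processed).

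The main obstacle is the bookkeeping for interleaved calls, in particular because $\RemoveTreeEdge$ is invoked from inside $\ReconnectTree$ (when an orphan's children are cut loose and become new orphans) and can occur between partially completed draining steps. This is handled by verifying the invariant at every line that touches $\excess$ rather than only between subroutine calls, using the two micro-arguments above. With the invariant in hand, no flow modification in iteration~$i$ alters $\flow(\edge)(\lambda_{i+1})$ for any $\edge \in \edges$, and the claimed identity $\flow^{(i+1)}[\lambda_{i+1}] = \flow^{(i)}[\lambda_{i+1}]/\newVertices$ on $\contractedGraph^{(i+1)}$ follows from the commutation of contraction with evaluation stated at the outset.
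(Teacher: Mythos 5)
Your proposal is correct and follows essentially the same approach as the paper's proof: reduce to the per-edge invariant that no flow modification in iteration~$i$ changes~$\flow(\edge)(\lambda_{i+1})$, then verify this for $\RemoveTreeEdge$ (trivially, since $\newFlow(\lambda_{i+1}) = \flow(\edge)(\lambda_{i+1})$ by construction) and for $\DrainExcess$ (by maintaining the invariant $\excess(\vertex)(\lambda_{i+1})=0$ through the iteration), and finally combine with Lemma~\ref{lem:reconnectTree:paths} for the contraction. One small shortcut you could take: Lemma~\ref{lem:pseudoToFlow} directly asserts $\excess(\vertex)=0$ (as a function) at the start of every iteration, so your detour through Invariant~\ref{inv:flow} and Lemma~\ref{lem:extendFlow} to establish the base case of the invariant is unnecessary, though not wrong.
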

\begin{proof}
    At the start of the iteration, all excess functions are 0 by Lemma~\ref{lem:pseudoToFlow}.
    The parts of the iteration that modify flow values and excess functions are~$\RemoveTreeEdge$ and~\texttt{Drain-}\\$\texttt{Excess}(\lambda_{i+1})$.
    Consider a call to~$\RemoveTreeEdge(\edge)$ for some edge~$\edge=(\vertexA,\vertexB)$ with reverse edge~$\edge'=(\vertexB,\vertexA)$.
    Because~$\newFlow[\lambda_{i+1}] = \flow(\edge)[\lambda_{i+1}]$ holds for the new flow function~$\newFlow$ by construction, $\flow(\edge)[\lambda_{i+1}]$ and~$\flow(\edge')[\lambda_{i+1}]$ remain unchanged.
    The excesses of~$\vertexA$ and~$\vertexB$ are modified by adding and subtracting~$\flow(\edge)-\newFlow$, respectively, which leaves~$\excess(\vertexA)[\lambda_{i+1}]=\excess(\vertexB)[\lambda_{i+1}]=0$ unchanged.
    If~$\edge$ is processed in~$\DrainExcess(\lambda_{i+1})$, then~$\excess(\vertexA)$ is added to~$\flow(\edge)$ and subtracted from~$\flow(\edge')$.
    Additionally, $\excess(\vertexA)$ is added to~$\excess(\vertexB)$ and~$\excess(\vertexA)$ is set to 0.
    This upholds~$\excess(\vertexA)[\lambda_{i+1}]=\excess(\vertexB)[\lambda_{i+1}]=0$ and therefore also leaves~$\flow(\edge)[\lambda_{i+1}]$ and~$\flow(\edge')[\lambda_{i+1}]$ unchanged.
    Because~$\residualContractedGraph^{(i+1)}[\lambda_{i+1}]=\residualContractedGraph^{(i)}[\lambda_{i+1}]/\newVertices$ holds at the end of the iteration by Lemma~\ref{lem:reconnectTree:paths}, the claim follows.
\end{proof}

\begin{lemma} \label{lem:invariants}
    If both invariants hold at the start of iteration~$i$ and the algorithm does not terminate during iteration~$i$, they also hold at the start of iteration~$i+1$, i.e.,
    \begin{enumerate}
        \item $\breakpoint^{(i+1)}$ is a breakpoint function for~$\graph$ in the interval~$[\minLambda,\lambda_{i+1}]$ with~$\cut(\breakpoint^{(i+1)},\lambda_{i+1})=\cut^{(i+1)}$.
        \item $\flow^{(i+1)}[\lambda_{i+1}]$ is a maximum flow in~$\contractedGraph^{(i+1)}$ with residual graph~$\residualContractedGraph^{(i+1)}$.
    \end{enumerate}
\end{lemma}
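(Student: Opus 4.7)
The plan is to prove the two invariants separately, each by reducing to the lemmas established earlier in Section~\ref{sec:proof}. Invariant~\ref{inv:flow} is largely a straightforward assembly of prior lemmas, while Invariant~\ref{inv:breakpoint} requires splitting the parameter range $[\minLambda,\lambda_{i+1}]$ into subintervals and checking that the updated breakpoint function describes the correct sink-minimal minimum cut on each.

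For Invariant~\ref{inv:flow}, I would first invoke Lemma~\ref{lem:extendFlow}, which tells us that $\flow^{(i)}[\lambda_{i+1}]$ is a maximum flow in $\contractedGraph^{(i)}[\lambda_{i+1}]$ with residual graph $\residualContractedGraph^{(i)}[\lambda_{i+1}]$. By Lemma~\ref{lem:cut}, the sink-minimal minimum cut in $\contractedGraph^{(i)}[\lambda_{i+1}]$ has source component $\{\source\}\cup\newVertices$, so Lemma~\ref{lem:contractFlow} applies and shows that $\flow^{(i)}[\lambda_{i+1}]/\newVertices$ is a maximum flow in $\contractedGraph^{(i)}[\lambda_{i+1}]/\newVertices = \contractedGraph^{(i+1)}$. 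By Lemma~\ref{lem:flowChange}, this maximum flow equals $\flow^{(i+1)}[\lambda_{i+1}]$, and by Lemma~\ref{lem:reconnectTree:paths}, its residual graph is exactly $\residualContractedGraph^{(i+1)}$.

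For Invariant~\ref{inv:breakpoint}, I note that the only modification to the breakpoint function during iteration~$i$ happens in $\ReconnectTree$: it sets $\breakpoint(\vertex) = \lambda_{i+1}$ for each $\vertex \in \newVertices$. Thus $\breakpoint^{(i+1)}$ coincides with $\breakpoint^{(i)}$ on $\vertices \setminus \newVertices$ and equals $\lambda_{i+1}$ on $\newVertices$. I would then verify the three cases: for $\lambda \in [\minLambda, \lambda_i]$, vertices in $\newVertices$ still satisfy $\breakpoint^{(i+1)}(\vertex) = \lambda_{i+1} > \lambda$, so they remain in the sink component, matching what $\breakpoint^{(i)}$ prescribes by the prior Invariant~\ref{inv:breakpoint}; for $\lambda \in (\lambda_i, \lambda_{i+1})$, every $\vertex \in \newVertices$ still has $\breakpoint^{(i+1)}(\vertex) > \lambda$, so $\cut(\breakpoint^{(i+1)}, \lambda) = \cut^{(i)}$, which equals the sink-minimal minimum cut in $\graph[\lambda]$ by Lemma~\ref{lem:cut}; for $\lambda = \lambda_{i+1}$, the source component according to $\breakpoint^{(i+1)}$ is exactly $\vertices_\source^{(i)} \cup \newVertices = \vertices_\source^{(i+1)}$, matching $\cut^{(i+1)}$, which Lemma~\ref{lem:cut} identifies as the sink-minimal minimum cut in $\graph[\lambda_{i+1}]$.

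The main obstacle, if any, is being careful that the case $\lambda \in [\minLambda,\lambda_i]$ is genuinely covered by the prior Invariant~\ref{inv:breakpoint} after modification of the breakpoints on $\newVertices$; this requires observing that $\newVertices \subseteq \vertices_\sink^{(i)}$ (which follows because orphans are tree vertices by Lemma~\ref{lem:tree}) together with the fact that $\lambda_{i+1} > \lambda_i$ by Corollary~\ref{cor:nextParameter}, so raising a breakpoint from $\infty$ to $\lambda_{i+1}$ does not alter any cut at $\lambda \leq \lambda_i$. Everything else is bookkeeping on top of Lemmas~\ref{lem:reconnectTree:paths}, \ref{lem:cut}, \ref{lem:flowChange}, \ref{lem:extendFlow}, and \ref{lem:contractFlow}.
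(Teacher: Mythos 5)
Your proposal is correct and follows essentially the same approach as the paper: the same case split of $[\minLambda,\lambda_{i+1}]$ into three subintervals for Invariant~\ref{inv:breakpoint}, and the same chain of Lemmas~\ref{lem:extendFlow}, \ref{lem:cut}, \ref{lem:contractFlow}, \ref{lem:flowChange}, and \ref{lem:reconnectTree:paths} for Invariant~\ref{inv:flow}. The only differences are cosmetic (you place the boundary $\lambda_i$ in the first interval while the paper places it in the second, which is equivalent since the prior invariant gives $\cut(\breakpoint^{(i)},\lambda_i)=\cut^{(i)}$; and you write ``raising'' where ``lowering'' is meant for the change from $\infty$ to $\lambda_{i+1}$, which does not affect the logic).
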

\begin{proof}
    With Lemma~\ref{lem:reconnectTree:paths}, it follows that
    \[
    \cut(\breakpoint^{(i+1)},\lambda)=\begin{cases}
    \cut(\breakpoint^{(i)},\lambda) & \text{if } \lambda\in[\minLambda,\lambda_i),\\
    \cut^{(i)} & \text{if } \lambda\in[\lambda_i,\lambda_{i+1}),\\
    \cut^{(i+1)} & \text{otherwise.}\\
    \end{cases}
    \]
    For~$\lambda\in[\minLambda,\lambda_i)$, it follows from Invariant~\ref{inv:breakpoint} that this is the sink-minimal minimum cut in~$\graph[\lambda]$.
    For the other two cases, it follows from Lemma~\ref{lem:cut}, so Invariant~\ref{inv:breakpoint} is upheld.

    By Lemma~\ref{lem:extendFlow}, $\flow^{(i)}[\lambda_{i+1}]$ is a maximum flow in~$\contractedGraph^{(i)}[\lambda_{i+1}]$ with residual graph~$\residualContractedGraph^{(i)}[\lambda_{i+1}]$.
    By Lemma~\ref{lem:reconnectTree:paths}, $\newVertices$ is fully contained in the source component of~$\cut^{(i+1)}$, which is a minimum cut in~$\graph[\lambda_{i+1}]$ by Lemma~\ref{lem:cut}.
    Hence, by Lemma~\ref{lem:contractFlow}, $\flow^{(i)}[\lambda_{i+1}]/\newVertices$ is a maximum flow in~$\contractedGraph^{(i)}[\lambda_{i+1}]/\newVertices$ with residual graph~$\residualContractedGraph^{(i)}[\lambda_{i+1}]/\newVertices$.
    By Lemma~\ref{lem:reconnectTree:paths}, these graphs equal~$\contractedGraph^{(i+1)}$ and~$\residualContractedGraph^{(i+1)}$, respectively.
    By Lemma~\ref{lem:flowChange}, $\flow^{(i)}[\lambda_{i+1}]/\newVertices=\flow^{(i+1)}[\lambda_{i+1}]$.
\end{proof}

\begin{theorem}
    \PBFS returns a breakpoint function.
\end{theorem}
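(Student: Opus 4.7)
The plan is a straightforward induction on the iteration index $i$, using Lemma~\ref{lem:invariants} as the main workhorse and then handling termination and the gap between $\lambda_i$ and $\maxLambda$ at the end.

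First I would verify that both invariants hold at the start of iteration $0$ (the base case). After $\Initialize$ computes $\initialFlow$ as a max-flow in~$\graph[\minLambda]$ and performs the reverse BFS from $\sink$, the set $\treeVertices$ is precisely $\vertices_\sink(\initialFlow)$, which by Corollary~\ref{cor:sinkMinimalCut} is the sink component of the sink-minimal minimum cut $\cut^{(0)}$ in~$\graph[\minLambda]$. The initialization sets $\breakpoint(\vertex)=\minLambda$ for $\vertex\in\vertices\setminus\treeVertices$ and $\breakpoint(\vertex)=\infty$ for $\vertex\in\treeVertices$, so $\cut(\breakpoint^{(0)},\minLambda)=\cut^{(0)}$, giving Invariant~1. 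Then $\CalcFlowFunction$ transforms~$\initialFlow$ into a flow function and immediately calls~$\DrainExcess(\minLambda)$; since the pushes it performs only redistribute excesses without changing~$\flow[\minLambda]$ (the same argument as in Lemma~\ref{lem:flowChange} with $\lambda_{i+1}$ replaced by $\minLambda$), $\flow^{(0)}[\minLambda]=\initialFlow/(\vertices\setminus\treeVertices)$ is a max-flow in~$\contractedGraph^{(0)}$ by Lemma~\ref{lem:contractFlow}, so Invariant~2 holds as well.

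Next I would apply Lemma~\ref{lem:invariants} inductively: whenever the algorithm does not terminate during iteration~$i$, both invariants are preserved into iteration~$i+1$. Termination is guaranteed because Corollary~\ref{cor:nextParameter} ensures strictly increasing~$\lambda_i$, and the total number of iterations is finite since each one removes at least one edge from~$\tree$ (via $\saturatedEdges$ in $\UpdateResidualGraph$), and by Lemma~\ref{lem:treeAdditions} each edge can enter~$\tree$ at most $n$ times, bounding the total edge removals by~$\mathcal{O}(nm)$. Hence the loop eventually exits either in line~\ref{alg:return} (when $\lambda_{i+1}>\maxLambda$) or through the while condition (when $\lambda_i\geq\maxLambda$).

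The last piece, which I expect to require the most care, is showing that the returned $\breakpoint$ is a valid breakpoint function on the \emph{entire} interval $[\minLambda,\maxLambda]$, not merely on $[\minLambda,\lambda_i]$ as Invariant~1 guarantees. In both termination scenarios, let $i$ be the index of the final iteration. By Invariant~1, $\breakpoint^{(i)}$ represents the sink-minimal minimum cut for every $\lambda\in[\minLambda,\lambda_i]$. For $\lambda\in(\lambda_i,\maxLambda]$, I would invoke Lemma~\ref{lem:cut}, which states that $\cut^{(i)}$ remains the sink-minimal minimum cut throughout $[\lambda_i,\lambda_{i+1})$; since the termination condition ensures $\maxLambda<\lambda_{i+1}$ (either directly by the check on line~\ref{alg:return}, or because otherwise the loop would have continued), we have $[\lambda_i,\maxLambda]\subseteq[\lambda_i,\lambda_{i+1})$. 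Every vertex $\vertex\in\treeVertices$ still has $\breakpoint(\vertex)=\infty>\maxLambda$ (its value has not been overwritten in $\ReconnectTree$), so by definition of $\cut(\breakpoint,\lambda)$ we obtain $\cut(\breakpoint^{(i)},\lambda)=\cut^{(i)}$ for all such~$\lambda$, completing the proof.
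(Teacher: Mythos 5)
Your proof is correct and follows essentially the same structure as the paper's: establish both invariants at the start of iteration~$0$, propagate them inductively via Lemma~\ref{lem:invariants}, use Corollary~\ref{cor:nextParameter} for termination, and handle the remaining interval up to~$\maxLambda$ via Lemma~\ref{lem:cut}. You spell out a few details the paper leaves implicit (the finiteness argument via Lemma~\ref{lem:treeAdditions}, and the explicit appeal to Corollary~\ref{cor:sinkMinimalCut} and Lemma~\ref{lem:contractFlow} in the base case), but the approach is the same.
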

\begin{proof}
    Invariant~\ref{inv:breakpoint} holds at the start of the first iteration because~$\Initialize$ ensures that~$\breakpoint^{(0)}$ is a breakpoint function for~$\graph$ in~$[\minLambda,\lambda_0]$ with~$\cut(\breakpoint^{(0)},\lambda_0)=(\vertices_\source^{(0)},\vertices_\sink^{(0)})$.
    The flow~$\initialFlow$ calculated in~$\Initialize$ is a maximum flow in~$\graph[\minLambda]$ with residual graph~$\residualGraph$.
    By Lemma~\ref{lem:contractFlow}, $\flow^{(0)}[\lambda_0]=\initialFlow/\vertices_\source^{(0)}$ is a maximum flow in~$\contractedGraph^{(0)}=\graph[\minLambda]/\vertices_\source^{(0)}$ with residual graph~$\residualContractedGraph^{(0)}$, so Invariant~\ref{inv:flow} also holds.
    For each subsequent iteration, Lemma~\ref{lem:invariants} shows that the invariants are upheld.
    Because~$\lambda_i$ strictly increases with each iteration according to Corollary~\ref{cor:nextParameter}, the algorithm terminates either because the loop condition~$\lambda_i<\maxLambda$ fails or the return condition~$\lambda_{i+1}>\maxLambda$ in line~\ref{alg:return} is met.
    In the former case, the claim follows immediately from Invariant~\ref{inv:breakpoint}.
    In the latter case, it follows from Invariant~\ref{inv:breakpoint} and Lemma~\ref{lem:cut}.
\end{proof}

\subsection{Runtime}
\label{sec:proof:runtime}
\begin{lemma} \label{lem:reconnectTime}
    The time spent in~$\ReconnectTree$ throughout the algorithm is in~$\mathcal{O}(nm)$.
\end{lemma}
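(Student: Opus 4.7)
The plan is to follow the amortized analysis used for IBFS and split the total work performed inside $\ReconnectTree$ across the whole algorithm into four contributions that can each be bounded by $\mathcal{O}(nm)$: (i) the cost of popping orphans from $\orphans$ and the constant-time bookkeeping per orphan, (ii) the edge scans performed inside $\AdoptWithSameDist$, (iii) the edge scans performed inside $\AdoptWithNewDist$, and (iv) the children-loop that removes the incoming tree edges of an orphan via $\RemoveTreeEdge$.

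For (iii), I would first argue that a single invocation of $\AdoptWithNewDist(\vertexA)$ runs in time $\mathcal{O}(\deg(\vertexA))$, since it scans the outgoing edges of $\vertexA$ once to find the neighbor with minimum distance label. Each invocation either fails, in which case $\vertexA$ is contracted into~$\source$ and leaves $\treeVertices$ permanently, or succeeds and strictly increases $\distanceLabel(\vertexA)$ by Lemma~\ref{lem:increaseDistanceLabel}. By Lemma~\ref{lem:distanceLabels:nonDecreasing}, the latter can occur at most $n$ times per vertex, so each vertex is the argument of at most $n+1$ calls to $\AdoptWithNewDist$. Summing $\deg(\vertexA)$ over all vertices gives a total bound of $\mathcal{O}\bigl(n \sum_{\vertexA} \deg(\vertexA)\bigr) = \mathcal{O}(nm)$.

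For (ii), the amortization is via the implicit potential stored in the current-edge pointer described after Algorithm~\ref{alg:adopt}. Each edge examined by $\AdoptWithSameDist(\vertexA)$ is removed from $\outEdges(\vertexA)$, and the only place where $\outEdges(\vertexA)$ is reset is line~\ref{alg:adopt:outEdges} inside $\AdoptWithNewDist(\vertexA)$, which by Lemma~\ref{lem:increaseDistanceLabel} strictly increases $\distanceLabel(\vertexA)$. Together with Lemma~\ref{lem:distanceLabels:nonDecreasing} this implies that $\outEdges(\vertexA)$ is reset at most $n$ times for each $\vertexA$, so every outgoing edge of $\vertexA$ is scanned $\mathcal{O}(n)$ times overall. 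Summing over all vertices again yields $\mathcal{O}(nm)$.

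Finally, for (i) and (iv), I would observe that a vertex is pushed onto $\orphans$ exactly when a tree edge is removed by $\RemoveTreeEdge$ (either in $\UpdateResidualGraph$ or during the children-loop of $\ReconnectTree$ itself). By Lemma~\ref{lem:treeAdditions}, each of the $m$ edges can be removed from $\tree$ at most $n$ times, giving a total of $\mathcal{O}(nm)$ orphan creations and $\mathcal{O}(nm)$ $\RemoveTreeEdge$ executions, each of which is constant-time. Combining the four bounds gives the claimed total of $\mathcal{O}(nm)$. The main obstacle I anticipate is verifying carefully that no resets of $\outEdges(\vertexA)$ happen outside $\AdoptWithNewDist(\vertexA)$, and that the children-loop cost is fully absorbed by the tree-edge-removal count; both hinge on the invariants of Section~\ref{sec:proof:residual} but are easy to miscount.
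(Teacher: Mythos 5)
Your proposal is correct and follows essentially the same amortized-charging argument as the paper's proof: the cost of $\AdoptWithSameDist$ is charged to removals from $\outEdges$ and amortized against resets, each of which is tied to a distance-label increase via Lemma~\ref{lem:increaseDistanceLabel}, and the cost of $\AdoptWithNewDist$ is charged directly to its $\mathcal{O}(\deg(\vertexA))$ invocations, of which there are at most $n+1$ per vertex. The paper's proof is terser — it simply asserts that the runtime of $\ReconnectTree$ is dominated by the two adoption procedures and omits the explicit accounting of orphan pops and the children-loop that you fold into contributions (i) and (iv) via Lemma~\ref{lem:treeAdditions}; making that explicit is a modest improvement in rigor but not a different approach.
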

\begin{proof}
    The runtime of~$\ReconnectTree$ is dominated by the runtimes of~$\AdoptWithSameDist$ and~$\AdoptWithNewDist$.
    For a vertex~$\vertex$, the time spent in~$\AdoptWithSameDist(\vertex)$ is proportional to the number of edges that are removed from~$\outEdges(\vertex)$.
    Each time a removed edge~$\edge$ is re-added to~$\outEdges(\vertex)$ in~$\AdoptWithNewDist(\vertexA)$, the distance label of~$\vertexA$ increases by Lemma~\ref{lem:increaseDistanceLabel}.
    By Lemma~\ref{lem:distanceLabels:nonDecreasing}, this can happen at most~$n$ times, which yields an overall runtime of~$\mathcal{O}(nm)$ for~$\AdoptWithSameDist$.
    Each time~$\AdoptWithNewDist(\vertex)$ is called, it either increases~$\distanceLabel(\vertex)$ or removes~$\vertex$ from~$\residualContractedGraph$, in which case~$\AdoptWithNewDist(\vertex)$ is never called again.
    Because the time for a single call is in~$\mathcal{O}(\deg(\vertex))$ and~$\distanceLabel(\vertex)$ can increase at most~$n$ times, the overall time for~$\AdoptWithNewDist$ is also in~$\mathcal{O}(nm)$.
\end{proof}

\begin{theorem}
    \PBFS runs in time $\mathcal{O}(n^2m)$.
\end{theorem}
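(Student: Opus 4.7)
The plan is to split the total work into three separate buckets and bound each of them: (i) the one-time initialization, (ii) the per-iteration work of $\UpdateResidualGraph$ and $\DrainExcess$ times the number of iterations of the main loop, and (iii) the total work of $\ReconnectTree$ summed over the entire run. I expect the dominant contribution to come from either (i) or (ii), each giving $\mathcal{O}(n^2m)$, while (iii) will be absorbed by Lemma~\ref{lem:reconnectTime}.

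First, I would bound the initialization. The call $\CalcMaxFlow(\graph[\minLambda])$ costs $\mathcal{O}(n^2m)$ when instantiated with IBFS, while $\BuildResidualGraph$ and $\ReverseBFS$ each run in $\mathcal{O}(n+m)$. The procedure $\CalcFlowFunction$ does $\mathcal{O}(m)$ work setting flow functions on source- and sink-incident edges and then makes one call to $\DrainExcess(\minLambda)$, which walks each excess vertex at most once along its unique outgoing tree edge and is therefore in $\mathcal{O}(n)$ (possibly with a $\log n$ factor for processing in descending order of distance label). So initialization is $\mathcal{O}(n^2m)$.

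Next, I would bound the number of iterations of the main \textbf{while} loop. By Corollary~\ref{cor:nextParameter}, each iteration that does not terminate satisfies $\lambda_{i+1}>\lambda_i$, and since $\lambda_{i+1}=\min_{\edge\in\treeEdges}\rootLambda(\edge)$ the set $\saturatedEdges$ is non-empty; hence each iteration invokes $\RemoveTreeEdge$ on at least one tree edge. By Lemma~\ref{lem:treeAdditions}, each of the $m$ edges is removed from $\tree$ at most $n$ times, so the total number of iterations is in $\mathcal{O}(nm)$. Within a single iteration, computing $\lambda_{i+1}$ takes $\mathcal{O}(n)$ by scanning $\treeEdges$ (or $\mathcal{O}(\log n)$ per priority-queue update as discussed in the implementation details), $\UpdateResidualGraph$ does a linear scan over $\treeEdges$ and processes each edge in $\saturatedEdges$ in constant amortized time, and $\DrainExcess(\lambda_{i+1})$ visits each vertex of $\excessVertices\subseteq\treeVertices$ exactly once. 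All of these are $\mathcal{O}(n)$ per iteration, giving $\mathcal{O}(nm)\cdot\mathcal{O}(n)=\mathcal{O}(n^2m)$ in aggregate.

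Finally, I would invoke Lemma~\ref{lem:reconnectTime} directly: the total time spent in $\ReconnectTree$ across the whole algorithm is in $\mathcal{O}(nm)$, which is dominated by the other two buckets. Summing the three contributions yields the claimed $\mathcal{O}(n^2m)$ bound. The main subtlety I anticipate is defending the $\mathcal{O}(n)$ per-iteration cost of $\DrainExcess$: one has to argue that $|\excessVertices|\in\mathcal{O}(n)$ at the start of each call (it is reset to $\emptyset$ in $\UpdateResidualGraph$ and only grows by one per $\RemoveTreeEdge$ and per drain step) and that processing vertices in decreasing distance order can be organized so that the sort cost is negligible, e.g.\ by bucket-sorting on the integer distance labels, so that no extra logarithmic factor appears.
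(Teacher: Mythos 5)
Your decomposition into initialization, per-iteration work times iteration count (via Lemma~\ref{lem:treeAdditions} and Corollary~\ref{cor:nextParameter}), and a global $\mathcal{O}(nm)$ charge for $\ReconnectTree$ (via Lemma~\ref{lem:reconnectTime}) is precisely the paper's proof, with the same lemmas invoked for the same purposes. The extra discussion of bucket-sorting on distance labels to avoid a logarithmic factor in $\DrainExcess$ is a reasonable implementation note but not needed at the level the paper argues, where $\mathcal{O}(n)$ per call is simply asserted.
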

\begin{proof}
    $\CalcMaxFlow$ runs in time~$\mathcal{O}(n^2m)$ when performed by IBFS.
    The remaining steps in~$\Initialize$, as well as~$\CalcFlowFunction$, require time~$\mathcal{O}(n + m)$.
    If~$\lambda_{i+1}>\maxLambda$, the algorithm terminates.
    Otherwise, the following~$\UpdateResidualGraph$ call removes at least one saturated edge from~$\tree$.
    By Lemma~\ref{lem:treeAdditions}, each edge can be removed at most~$\mathcal{O}(n)$ times, so the total number of iterations of the while-loop is in~$\mathcal{O}(nm)$.
    Each~$\UpdateResidualGraph$ and~$\DrainExcess$ call takes time~$\mathcal{O}(n)$, so the total time spent in these calls is in~$\mathcal{O}(n^2m)$ time.
    The overall time for~$\ReconnectTree$ is in~$\mathcal{O}(nm)$ by Lemma~\ref{lem:reconnectTime}.
    In total, we obtain the runtime bound of $\mathcal{O}(n^2m)$.
\end{proof}

\begin{table*}[h]
    \caption{%
        Benchmark instances used in our experiments.
        The number of vertices and edges are denoted by~$n$ and~$m$, respectively.
        The polygon aggregation instances are listed first.
        The first group of instances, from \texttt{ahrem} to \texttt{groebzig}, consists of small villages.
        The following group, from \texttt{aldenhoven} to \texttt{forsbach}, are towns.
        The \texttt{group} instances are larger regions encompassing multiple villages and town.
        \texttt{bonn} and \texttt{cologne} are cities, whereas \texttt{big\_set} is a very large region.
        The last group consists of the image segmentation instances.
    }%
    \label{tbl:instances}%
    \begin{minipage}[t]{0.46\textwidth}
    \small{
    \begin{tabular}[t]{l@{\hskip 1.8cm}rr}
        \toprule
        Instance & $n$ & $m$ \\
        \midrule
        \texttt{ahrem}  &  5\,508  &  39\,927\\
        \texttt{altenholz}  &  15\,862  &  114\,179\\
        \texttt{altenrath}  &  8\,592  &  61\,973\\
        \texttt{belm}  &  13\,704  &  98\,029\\
        \texttt{berga}  &  7\,051  &  50\,979\\
        \texttt{bockelskamp}  &  2\,980  &  21\,295\\
        \texttt{bokeloh}  &  4\,842  &  34\,763\\
        \texttt{braunlage}  &  10\,983  &  78\,873\\
        \texttt{buisdorf}  &  10\,589  &  76\,789\\
        \texttt{butzweiler}  &  4\,950  &  35\,503\\
        \texttt{duengenheim}  &  7\,578  &  54\,947\\
        \texttt{edendorf}  &  17\,408  &  125\,957\\
        \texttt{erlenbach}  &  6\,179  &  44\,531\\
        \texttt{erp}  &  13\,269  &  96\,137\\
        \texttt{friesheim}  &  8\,712  &  62\,649\\
        \texttt{gerolstein}  &  11\,556  &  82\,673\\
        \texttt{gevenich}  &  4\,629  &  33\,407\\
        \texttt{gluecksburg}  &  10\,050  &  72\,381\\
        \texttt{goddula}  &  6\,497  &  46\,517\\
        \texttt{goldenstedt}  &  12\,139  &  86\,869\\
        \texttt{groebzig}  &  7\,370  &  53\,505\\[5pt]
        \texttt{aldenhoven}  &  28\,551  &  207\,717\\
        \texttt{andernach}  &  38\,533  &  279\,943\\
        \texttt{bad\_harzburg}  &  22\,163  &  158\,799\\
        \texttt{bad\_neunahr}  &  38\,001  &  273\,933\\
        \texttt{bergedorf}  &  67\,513  &  489\,635\\
        \texttt{celle}  &  103\,297  &  747\,683\\
        \texttt{euskirchen}  &  100\,379  &  736\,905\\
        \texttt{forsbach}  &  23\,745  &  170\,831\\
        \bottomrule
    \end{tabular}
    }
    \end{minipage}
    \hspace{0.03\textwidth}
    \begin{minipage}[t]{0.48\textwidth}
    \small{
    \begin{tabular}[t]{lrr}
        \toprule
        Instance & $n$ & $m$ \\
        \midrule
        \texttt{gruppe1}  &  39\,904  &  285\,711\\
        \texttt{gruppe2}  &  23\,137  &  166\,229\\
        \texttt{gruppe3}  &  29\,553  &  213\,107\\
        \texttt{gruppe4}  &  40\,422  &  289\,887\\
        \texttt{gruppe5}  &  46\,920  &  338\,135\\
        \texttt{gruppe6}  &  46\,270  &  331\,557\\
        \texttt{gruppe7}  &  53\,956  &  386\,839\\
        \texttt{gruppe8}  &  22\,143  &  158\,687\\
        \texttt{gruppe9}  &  37\,533  &  269\,163\\
        \texttt{gruppe10}  &  86\,491  &  624\,025\\
        \texttt{gruppe11}  &  70\,486  &  510\,565\\
        \texttt{gruppe12}  &  76\,316  &  549\,311\\[5pt]
        \texttt{bonn}  &  518\,449  &  3\,743\,553\\
        \texttt{cologne}  &  1\,135\,846  &  8\,222\,807\\
        \texttt{big\_set}  &  3\,383\,359  &  24\,588\,269\\[5pt]
        \texttt{adhead.n6c10} &  12\,582\,914 &  125\,829\,122\\
        \texttt{adhead.n6c100} &  12\,582\,914 &  125\,829\,122\\
        \texttt{adhead.n26c10} &  12\,582\,914 &  377\,487\,362\\
        \texttt{adhead.n26c100} &  12\,582\,914 &  377\,487\,362\\
        \texttt{babyface.n6c10} &  5\,062\,502 &  50\,625\,002\\
        \texttt{babyface.n6c100} &  5\,062\,502 &  50\,625\,002\\
        \texttt{babyface.n26c10} &  5\,062\,502 &  151\,875\,002\\
        \texttt{babyface.n26c100} &  5\,062\,502 &  151\,875\,002\\
        \texttt{bone.n6c10} &  7\,798\,786 &  77\,987\,842\\
        \texttt{bone.n6c100} &  7\,798\,786 &  77\,987\,842\\
        \texttt{bone.n26c10} &  7\,798\,786 &  233\,963\,522\\
        \texttt{bone.n26c100} &  7\,798\,786 &  233\,963\,522\\
        \texttt{liver.n6c10} &  4\,161\,602 &  41\,616\,002\\
        \texttt{liver.n6c100} &  4\,161\,602 &  41\,616\,002\\
        \texttt{liver.n26c10} &  4\,161\,602 &  124\,848\,002\\
        \texttt{liver.n26c100} &  4\,161\,602 &  124\,848\,002\\
        \bottomrule
    \end{tabular}
    }
    \end{minipage}
\end{table*}

\section{Experiments}
\label{sec:experiments}
We evaluate the practical performance of \PBFS by comparing it to \dichotomicscheme on instances from two classes of applications: polygon aggregation and image segmentation.
For the static max-flow algorithm used by \dichotomicscheme, we test both PRF and IBFS.
We do not compare to GGT because the results by Babenko et al.~\cite{BDGTZ07} suggest that it is outperformed by~\dichotomicscheme unless the splits are extremely imbalanced; we show that this is not the case for any of the tested instances.
To ensure that all algorithms run on the same graph data structure, we used our own C\raisebox{0.15ex}{\small++} implementations of all algorithms.
For PRF and IBFS, we closely follow the public implementations by Cherkassky and Goldberg~\cite{CG97} and Goldberg et al.~\cite{GHKKTW15}, respectively.
Our graph data structure requires that every edge has a reverse edge and that every vertex has a source-incident and sink-incident edge; missing edges are added with capacity~0 when the graph is loaded.
All experiments were run on a single thread of a machine with an AMD EPYC 7543P CPU, 32 cores and~256\,GB RAM.
The code was compiled with GCC 11.4.0\footnote{The source code is available at \url{https://github.com/jonas-sauer/MonotoneParametricMinCut}.}.

\subparagraph*{Instances.}
To evaluate the algorithms for the task of polygon aggregation, we created a new set of~44 benchmark instances\footnote{The benchmark set is made publicly available at~\url{https://doi.org/10.5281/zenodo.13642985}.}.
These consist of building footprints from municipalities and regions in Western Germany, which were extracted from OpenStreetMap and processed as described by Rottmann et al.~\cite{RDGRH21}.
To the best of our knowledge, none of the public benchmark datasets from computer vision include monotone parametric instances.
Therefore, we add synthetic parametric capacities to a 3D image segmentation dataset for the static max-flow problem\footnote{Downloaded from~\url{https://vision.cs.uwaterloo.ca/data/maxflow}.}.
This dataset consists of regular grid graphs derived from medical scans.
Each scan is turned into four instances, where instance \texttt{imagename.nxcy} has $x$ neighbors per vertex (excluding~$\source$ and~$\sink$) and a maximum edge capacity of~$y$.
For each edge~$\edge=(\vertexA,\vertexB)$ with original static capacity~$\capacity_0(\edge)$, we assign a parametric capacity function~$\capacity(\edge)$ as follows.
If~$\vertexA=\source$, we set~$\capacity(\edge)(\lambda)=\capacity_1 + \lambda \cdot \capacity_2$, where~$\capacity_1$ and~$\capacity_2$ are chosen uniformly at random from~$[1,y]$.
Otherwise, we set~$\capacity(\edge)(\lambda)=\capacity_0(\edge)$.
An overview of the instances is given in Table~\ref{tbl:instances}.

\subparagraph*{Performance.}
\begin{table*}
    \center
    \caption{%
        Performance of \PBFS and \dichotomicscheme-IBFS on polygon aggregation instances.
        For each algorithm, Time is the runtime in milliseconds and BP is the number of computed breakpoints.
        For \PBFS, Ad/BP is the number of successful adoptions, divided by the number of breakpoints, and Loop/Init is the time spent in the main loop, divided by the initialization time.
        For \dichotomicscheme, Vert is the combined number of vertices of all contracted graphs built by the algorithm, divided by~$n$, and Contr is the percentage of the runtime that is spent building contracted graphs.
    }%
    \label{tbl:runtime-full-agg}%
    \small{
    \begin{tabular*}{\textwidth}{@{\,}l@{\extracolsep{\fill}}r@{\extracolsep{\fill}}r@{\extracolsep{\fill}}r@{\extracolsep{\fill}}r@{\extracolsep{\fill}}r@{\extracolsep{\fill}}r@{\extracolsep{\fill}}r@{\extracolsep{\fill}}r@{\extracolsep{\fill}}r@{\,}}
        \toprule
        \multirow{2}{*}{Instance} & \multicolumn{4}{c}{\PBFS} & \multicolumn{4}{c}{Dichotomic with IBFS} & \multirow{2}{*}{Speedup} \\
        \cmidrule{2-5} \cmidrule{6-9}
        & Time & BP & Ad./BP & Loop/Init & Time & BP & Vert. & Contr. \\
        \midrule
        \texttt{ahrem}  & 11.9 &  1\,891  & 20.1 & 10.33 & 35.1 &  1\,890  & 12.78 &  41\%  &  2.95\\
        \texttt{altenholz}  & 48.4 &  3\,342  & 52.9 & 16.36 & 109.9 &  3\,335  & 11.98 &  34\%  &  2.27\\
        \texttt{altenrath}  & 22.1 &  2\,742  & 26.3 & 11.54 & 59.2 &  2\,740  & 12.74 &  39\%  &  2.68\\
        \texttt{belm}  & 36.9 &  2\,421  & 54.3 & 14.95 & 85.3 &  2\,422  & 11.28 &  36\%  &  2.31\\
        \texttt{berga}  & 16 &  1\,977  & 25.3 & 10.94 & 43.5 &  1\,976  & 12.57 &  42\%  &  2.71\\
        \texttt{bockelskamp}  & 6.3 & 859 & 25.9 & 10.21 & 16.7 & 859 & 10.59 &  41\%  &  2.64\\
        \texttt{bokeloh}  & 11.5 &  1\,036  & 41.6 & 11.95 & 27.9 &  1\,034  & 10.52 &  37\%  &  2.43\\
        \texttt{braunlage}  & 28.9 &  2\,383  & 41.9 & 13.24 & 69 &  2\,383  & 11.7 &  38\%  &  2.39\\
        \texttt{buisdorf}  & 27 &  2\,991  & 28.8 & 12.96 & 68.5 &  2\,985  & 12.54 &  39\%  &  2.54\\
        \texttt{butzweiler}  & 12.6 &  1\,347  & 32.7 & 10.39 & 29.2 &  1\,348  & 10.95 &  38\%  &  2.31\\
        \texttt{duengenheim}  & 18 &  2\,581  & 22.7 & 11.29 & 50 &  2\,575  & 13.03 &  40\%  &  2.78\\
        \texttt{edendorf}  & 51.4 &  4\,953  & 33.6 & 15.14 & 124.6 &  4\,947  & 13.3 &  37\%  &  2.42\\
        \texttt{erlenbach}  & 14.1 &  1\,624  & 30.2 & 11.56 & 37.2 &  1\,623  & 11.5 &  38\%  &  2.63\\
        \texttt{erp}  & 32.8 &  4\,663  & 21 & 11.85 & 91.4 &  4\,649  & 13.82 &  41\%  &  2.79\\
        \texttt{friesheim}  & 22.5 &  2\,387  & 32.2 & 12.23 & 55.1 &  2\,382  & 11.98 &  38\%  &  2.45\\
        \texttt{gerolstein}  & 31.4 &  2\,577  & 41.3 & 13.28 & 73.2 &  2\,574  & 11.86 &  37\%  &  2.33\\
        \texttt{gevenich}  & 10.5 &  1\,426  & 24.1 & 10.24 & 28.6 &  1\,424  & 12.82 &  40\%  &  2.72\\
        \texttt{gluecksburg}  & 26.2 &  2\,741  & 32.7 & 12.77 & 65.7 &  2\,741  & 12.33 &  38\%  &  2.51\\
        \texttt{goddula}  & 15 &  1\,758  & 28.6 & 11.11 & 39.5 &  1\,759  & 11.72 &  40\%  &  2.64\\
        \texttt{goldenstedt}  & 31.3 &  3\,406  & 28.9 & 11.52 & 78.4 &  3\,406  & 12.52 &  39\%  &  2.51\\
        \texttt{groebzig}  & 16.4 &  2\,132  & 24.1 & 10.79 & 44.6 &  2\,132  & 12.34 &  41\%  &  2.73\\[2pt]
        \texttt{aldenhoven}  & 83.2 &  9\,366  & 26.2 & 13.04 & 218.2 &  9\,260  & 14.9 &  39\%  &  2.62\\
        \texttt{andernach}  & 122.7 &  8\,039  & 47.4 & 13.72 & 293.6 &  8\,009  & 13.72 &  35\%  &  2.39\\
        \texttt{bad\_harzburg}  & 63.6 &  4\,929  & 41.2 & 13.29 & 151.3 &  4\,927  & 12.76 &  37\%  &  2.38\\
        \texttt{bad\_neunahr}  & 106.3 &  9\,068  & 34 & 11.47 & 268 &  9\,024  & 13.8 &  38\%  &  2.52\\
        \texttt{bergedorf}  & 239.8 &  20\,121  & 34.1 & 12.08 & 559.8 &  19\,902  & 15.35 &  36\%  &  2.33\\
        \texttt{celle}  & 359.3 &  26\,196  & 36.7 & 10.49 & 870.9 &  25\,964  & 15.56 &  37\%  &  2.42\\
        \texttt{euskirchen}  & 360.5 &  31\,434  & 29.7 & 11.1 & 880.7 &  30\,782  & 16.47 &  38\%  &  2.44\\
        \texttt{forsbach}  & 68 &  7\,609  & 27.1 & 13.85 & 173.6 &  7\,598  & 14.12 &  38\%  &  2.55\\[2pt]
        \texttt{gruppe1}  & 147.7 &  8\,788  & 52.9 & 12.61 & 289.4 &  8\,784  & 13.13 &  36\%  &  1.96\\
        \texttt{gruppe2}  & 65.9 &  6\,527  & 31.8 & 13.83 & 164.5 &  6\,507  & 13.75 &  38\%  &  2.50\\
        \texttt{gruppe3}  & 81.7 &  6\,984  & 35.6 & 12.92 & 209.9 &  6\,979  & 13.7 &  39\%  &  2.57\\
        \texttt{gruppe4}  & 121.6 &  10\,376  & 33 & 10.71 & 296.5 &  10\,356  & 14.16 &  39\%  &  2.44\\
        \texttt{gruppe5}  & 129 &  16\,317  & 20.4 & 9.84 & 360.3 &  16\,273  & 15.86 &  41\%  &  2.79\\
        \texttt{gruppe6}  & 134.9 &  12\,772  & 29.3 & 10.48 & 336.2 &  12\,761  & 14.59 &  39\%  &  2.49\\
        \texttt{gruppe7}  & 187.9 &  15\,793  & 34.7 & 11.64 & 415.8 &  15\,780  & 14.74 &  39\%  &  2.21\\
        \texttt{gruppe8}  & 54.2 &  6\,153  & 25.1 & 8.66 & 151.2 &  6\,151  & 13.54 &  40\%  &  2.79\\
        \texttt{gruppe9}  & 114.1 &  9\,605  & 35.9 & 12.1 & 275.1 &  9\,586  & 14.01 &  38\%  &  2.41\\
        \texttt{gruppe10}  & 283.2 &  25\,685  & 28.5 & 10.29 & 712.6 &  25\,295  & 16.01 &  40\%  &  2.52\\
        \texttt{gruppe11}  & 214.8 &  20\,287  & 27.2 & 8.52 & 550.4 &  20\,039  & 15.75 &  41\%  &  2.56\\
        \texttt{gruppe12}  & 304.2 &  20\,598  & 42.7 & 11.69 & 606.3 &  20\,486  & 14.82 &  38\%  &  1.99\\[2pt]
        \texttt{bonn}  &  2\,997.0  &  150\,943  & 28.1 & 13.8 &  5\,391.8  &  150\,827  & 18.18 &  41\%  &  1.80\\
        \texttt{cologne}  &  7\,917.8  &  330\,720  & 27.7 & 16.36 &  12\,946.9  &  329\,911  & 19.38 &  41\%  &  1.64\\
        \texttt{big\_set}  &  36\,452.3  &  1\,134\,906  & 30.7 & 27.14 &  46\,550.6  &  1\,089\,047  & 21.09 &  41\%  &  1.28\\
        \bottomrule
    \end{tabular*}
    } 
\end{table*}

\begin{table*}
    \center
    \caption{%
        Performance of \PBFS and \dichotomicscheme-IBFS on image segmentation instances. See Table~\ref{tbl:runtime-full-agg} for column descriptions.
    }%
    \label{tbl:runtime-full-seg}%
    \small{
    \begin{tabular*}{\textwidth}{@{\,}l@{\extracolsep{\fill}}r@{\extracolsep{\fill}}r@{\extracolsep{\fill}}r@{\extracolsep{\fill}}r@{\extracolsep{\fill}}r@{\extracolsep{\fill}}r@{\extracolsep{\fill}}r@{\extracolsep{\fill}}r@{\extracolsep{\fill}}r@{\,}}
        \toprule
        \multirow{2}{*}{Instance} & \multicolumn{4}{c}{\PBFS} & \multicolumn{4}{c}{Dichotomic with IBFS} & \multirow{2}{*}{Speedup} \\
        \cmidrule{2-5} \cmidrule{6-9}
        & Time & BP & Ad./BP & Loop/Init & Time & BP & Vert. & Contr. \\
        \midrule
        \texttt{adhead.n6c10}  &  4\,037.2  & 2 & 2.5 & 0 &  10\,150.9  & 3 & 0.05 &  50\%  &  2.51\\
        \texttt{adhead.n6c100}  &  4\,053.3  & 2 & 2.5 & 0 &  10\,248.2  & 3 & 0.05 &  50\%  &  2.53\\
        \texttt{adhead.n26c10}  &  17\,671.7  &  10\,311  & 209 & 0.18 &  37\,421.2  &  5\,690  & 0.26 &  46\%  &  2.12\\
        \texttt{adhead.n26c100}  &  19\,307.7  &  36\,998  & 52.5 & 0.19 &  38\,552.9  &  31\,771  & 0.36 &  44\%  &  2.00\\[3pt]
        \texttt{babyface.n6c10}  &  1\,524.7  & 20 & 29.3 & 0 &  3\,892.4  & 21 & 0 &  49\%  &  2.55\\
        \texttt{babyface.n6c100}  &  1\,475.4  & 29 & 22.3 & 0 &  3\,670.3  & 30 & 0 &  50\%  &  2.49\\
        \texttt{babyface.n26c10}  &  3\,754.8  & 277 & 99.9 & 0.01 &  10\,700.9  & 253 & 0.01 &  53\%  &  2.85\\
        \texttt{babyface.n26c100}  &  3\,829.0  & 336 & 80.8 & 0.01 &  10\,410.5  & 336 & 0.01 &  53\%  &  2.72\\[3pt]
        \texttt{bone.n6c10}  &  2\,524.4  & 84 & 311.8 & 0.01 &  6\,009.4  & 82 & 0.03 &  48\%  &  2.38\\
        \texttt{bone.n6c100}  &  2\,561.9  & 124 & 262.2 & 0.01 &  6\,008.9  & 125 & 0.03 &  48\%  &  2.35\\
        \texttt{bone.n26c10}  &  14\,172.4  &  6\,225  &  1\,313.1  & 0.92 &  23\,458.2  &  3\,686  & 0.28 &  43\%  &  1.66\\
        \texttt{bone.n26c100}  &  15\,456.7  &  12\,189  & 715.9 & 0.94 &  25\,101.5  &  11\,204  & 0.34 &  42\%  &  1.62\\[3pt]
        \texttt{liver.n6c10}  &  1\,365.6  & 185 & 82.4 & 0.01 &  3\,332.9  & 181 & 0.08 &  48\%  &  2.44\\
        \texttt{liver.n6c100}  &  1\,429.3  &  1\,078  & 13.9 & 0.01 &  3\,427.7  &  1\,077  & 0.09 &  47\%  &  2.40\\
        \texttt{liver.n26c10}  &  3\,054.8  & 378 & 100.9 & 0.01 &  9\,054.2  & 355 & 0.09 &  55\%  &  2.96\\
        \texttt{liver.n26c100}  &  3\,127.8  &  1\,506  & 25 & 0.01 &  8\,865.7  &  1\,493  & 0.1 &  54\%  &  2.83\\
        \bottomrule
    \end{tabular*}
    } 
\end{table*}
Tables~\ref{tbl:runtime-full-agg} and~\ref{tbl:runtime-full-seg} report the performance of \dichotomicscheme with IBFS as the max-flow algorithm (denoted as \dichotomicscheme-IBFS) and \PBFS.
We choose IBFS because it outperforms PRF on image segmentation instances, which are grid-like and have short~$\source$-$\sink$-paths~\cite{GHKTW11}.
The comparison with \dichotomicscheme-PRF in Appendix~\ref{app:experiments} confirms that this advantage carries over to the aggregation instances, which are nearly planar and also have short~$\source$-$\sink$-paths.
On both instance classes, \PBFS is significantly faster than \dichotomicscheme-IBFS.
The speedup is between~2 and~3 on most instances and somewhat smaller on the largest aggregation instances and the \texttt{bone.n26} instances.

There are significant differences between the two instance classes:
The aggregation instances have around~$n/3$ breakpoints and the amount of work performed by \PBFS per breakpoint (measured by the number of adoptions per breakpoint) is consistently small.
By contrast, the segmentation instances have fewer breakpoints.
On some instances, the amount of work per breakpoint is higher, which indicates that the sink-component tree changes more drastically.
Another difference is the impact of the initial flow computation.
On the aggregation instances, the solution for~$\maxLambda$ consists only of the input polygons, whereas the solution for~$\minLambda$ typically contains most triangles.
Hence, the parts of the graph that are contracted by \dichotomicscheme before the first~$\Bisect$ call are very small.
Similarly, \PBFS computes the initial flow quickly and spends most of the runtime in the main loop.
The segmentation instances have more balanced initial cuts, so the initial flow computation is the bottleneck for both algorithms.
\PBFS has an advantage here because it only makes one max-flow call, whereas \dichotomicscheme makes two and also needs to contract the graph.
This explains the reduced speedup on the \texttt{bone.n26} instances, where the contracted graphs are relatively large and the work per breakpoint is high.

Note that the number of breakpoints computed by both algorithms is not identical.
The differences are due to numerical instability, which affects both algorithms because they perform divisions between floating-point numbers.
Because the algorithms explore the parameter space in different orders, this can lead to small disagreements.
However, we verified that for each breakpoint of either algorithm, the computed flow values are identical up to the sixth decimal point.

We observe that GGT cannot realistically outperform \dichotomicscheme on any of the tested instances.
On the segmentation instances, the first contraction step already removes most vertices.
On the aggregation instances, the average number of contracted graphs that contain a given vertex (see Vert in Tables~\ref{tbl:runtime-full-agg} and~\ref{tbl:runtime-full-seg}) is close to~$\log_2 n$, which indicates that the splits are very balanced.
Note that for a worst-case instance, where every~\Bisect call splits off one vertex, this number would be linear in~$n$.

\subparagraph*{Influence of Instance Structure.}

\begin{table}[t]
    \center
    \caption{%
        Performance on \texttt{liver.n6c10} depending on the number of parametric source-/sink-incident edges. Par is the proportion of vertices for which the source-incident ($\source$) or sink-incident ($\sink$) edge is given a randomly chosen parametric capacity. The remaining source- or sink-incident edges are given capacity~0. An exception is~$t = 0.0$, in which case the sink-incident edges retain their original, constant capacities to ensure that the minimum cut is not trivial. Edges that are not incident to~$\source$ or~$\sink$ also keep their original, constant capacities. Bot/BP is the number of \emph{bottleneck edges}, i.e., tree edges that are removed in \texttt{UpdateResidualGraph}, divided by the number of breakpoints. Ad/Bot is the number of successful adoptions, divided by the number of bottleneck edges. Dist is the average distance label among successfully adopted orphans. Other columns as in Table~\ref{tbl:runtime-full-agg}, except times are given in seconds.
    }%
    \label{tbl:liver}%
    \small{
    \begin{tabular*}{\textwidth}{@{\,}r@{\extracolsep{\fill}}r@{\extracolsep{\fill}}r@{\extracolsep{\fill}}r@{\extracolsep{\fill}}r@{\extracolsep{\fill}}r@{\extracolsep{\fill}}r@{\extracolsep{\fill}}r@{\extracolsep{\fill}}r@{\extracolsep{\fill}}r@{\extracolsep{\fill}}r@{\extracolsep{\fill}}r@{\extracolsep{\fill}}r@{\,}}
        \toprule
        \multicolumn{2}{c}{Par} & \multicolumn{6}{c}{\PBFS} & \multicolumn{4}{c}{\dichotomicscheme-IBFS} & \multirow{2}{*}{Speedup} \\
        \cmidrule{1-2} \cmidrule{3-8} \cmidrule{9-12}
        $\source$ & $\sink$ & Time & BP & Bot/BP & Ad/Bot & Dist & Loop/Init & Time & BP & Vert & Contr \\
        \midrule
        0.1 & 0.0 & 15.1 & 1\,533 & 132.8 & 119.7 & 87.9 & 1.64 & 15.8 & 937 & 0.9 & 19\% & 1.05 \\
        0.1 & 0.1 & 258.2 & 1\,548 & 488.3 & 554.2 & 101.0 &  72.91 & 156.7 & 679 & 5.1 & 10\% & 0.61 \\
        0.1 & 0.5 & 5.8 & 464 & 1\,471.4 & 5.5 & 57.9 &  1.41 & 5.6 & 412 & 0.1 & 28\% & 0.97\\
        0.1 & 1.0 & 3.8 & 418 & 1\,499.2 & 3.5 & 313.1 & 1.29 & 4.2 & 379 & 0.1 & 38\% & 1.10 \\[3pt]
        0.5 & 0.0 & 1.9 & 511 & 18.9 & 18.4 & 17.4 &  0.05 & 3.9 & 438 & 0.1 & 42\% & 2.06 \\
        0.5 & 0.1 & 2.3 & 388 & 64.5 & 5.0 & 7.5 &  0.04 & 4.8 & 421 & 0.1 & 34\% & 2.07 \\
        0.5 & 0.5 & 379.9 & 8\,192 & 356.9 & 171.0 & 101.6 &  113.93 & 236.4 & 3\,506 & 6.5 & 8\% & 0.62 \\
        0.5 & 1.0 & 32.9 & 4\,566 & 941.5 & 5.4 & 38.3 &  15.92 & 34.4 & 3\,093 & 1.2 & 9\% & 1.05 \\[3pt]
        1.0 & 0.0 & 1.4 & 185 & 10.6 & 7.8 & 11.7 & 0.01 & 3.3 & 181 & 0.1 & 48\% & 2.34 \\
        1.0 & 0.1 & 1.8 & 361 & 63.4 & 2.8 & 15.6 & 0.03 & 3.9 & 384 & 0.1 & 41\% & 2.23 \\
        1.0 & 0.5 & 33.6 & 3\,748 & 53.9 & 90.9 & 133.2 & 0.30 & 36.1 & 3\,066 & 1.2 & 9\% & 1.08 \\
        1.0 & 1.0 & 482.3 & 10\,330 & 428.4 & 141.6 & 142.2 & 206.61 & 218.2 & 4\,451 & 6.9 & 9\% & 0.45 \\
        \bottomrule
    \end{tabular*}}
\end{table}
Because we do not have access to realistic parametric capacities for the image segmentation instances, we investigate how the choice of the capacities affects the performance of the algorithms.
Table~\ref{tbl:liver} shows the results of an experiment on the \texttt{liver.n6c10} instance, in which we vary the amount of source- and sink-incident edges with parametric capacities.
Note that these are the only tested instances in which both the source- and sink-incident edges are parametric.
In polygon aggregation as well as the computer vision problems that can be represented in the MAP-MRF framework~\cite{KBR07}, the sink-incident edges have constant capacities.

In instances with many parametric-sink incident edges, \PBFS removes a large number of saturated tree edges per breakpoint.
This is because all residual sink-incident edges are tree edges.
If they are parametric, they are likely to become saturated at some point, which may force a significant portion of the tree to be rebuilt.
If the proportions of parametric source- and sink-incident edges are similar, we observe that the number of adoptions per removed tree edge becomes very high.
This indicates that \PBFS performs many iterations that do not add a breakpoint but drastically change the sink-component tree.
In this case, \dichotomicscheme is faster because the number of max-flow computations is proportional to the number of breakpoints.

Overall, \PBFS is faster on instances where the residual graph does not change drastically between breakpoints.
This is the case for both polygon aggregation instances and the evaluated image segmentation instances in which the parametric source-incident edges significantly outnumber the parametric sink-incident edges.
On instances with few breakpoints and drastic changes in the residual graph, \dichotomicscheme performs better.

\subparagraph*{Approximation.}
\begin{figure}[t]
    \newcommand{\plotHeight}{4.5cm}
\newcommand{\plotWidth}{0.49\textwidth}

\begin{tikzpicture}
\begin{axis}[
   name=mainPlot,
   height=\plotHeight,
   width=\plotWidth,
   xmin=-0.25,
   xmax=16.25,
   ymin=-0.03,
   ymax=1.03,
   xlabel={Approximation factor ($\varepsilon = 10^{-x}$)},
   ylabel={Found breakpoints},
   xlabel style = {yshift=8pt},
   ylabel style = {yshift=-1pt},
   xtick = {0, 2, 4, 6, 8, 10, 12, 14},
   extra x ticks = {16},
   extra x tick labels = { $\infty$ },
   ytick = {0, 0.25, 0.5, 0.75, 1},
   yticklabels = {0\%, 25\%, 50\%, 75\%, 100\%},
   xtick pos=left,
   ytick pos=left,
   ytick align=outside,
   xtick align=outside,
   grid=both,
   axis line style={legendColor},
   at={(0.0\linewidth,0)}
]
\addplot[color=plotColor1,linePlot,mark=*] table[y=breakpointsAhrem,col sep=tab]{plot.csv};
\addplot[color=plotColor2,linePlot,mark=*] table[y=breakpointsEuskirchen,col sep=tab]{plot.csv};
\addplot[color=plotColor3,linePlot,mark=*] table[y=breakpointsBonn,col sep=tab]{plot.csv};
\addplot[color=plotColor4,linePlot,mark=*] table[y=breakpointsCologne,col sep=tab]{plot.csv};
\addplot[color=plotColor5,linePlot,mark=*] table[y=breakpointsBigSet,col sep=tab]{plot.csv};
\end{axis}

\begin{axis}[
   scaled y ticks = false,
   y tick label style={/pgf/number format/fixed},
   height=\plotHeight,
   width=\plotWidth,
   xmin=-0.25,
   xmax=16.25,
   ymin=0,
   ymax=4,
   xlabel={Approximation factor ($\varepsilon = 10^{-x}$)},
   ylabel={Speedup},
   xlabel style = {yshift=8pt},
   ylabel style = {yshift=-1pt},
   xtick = {0, 2, 4, 6, 8, 10, 12, 14},
   extra x ticks = {16},
   extra x tick labels = { $\infty$ },
   xtick pos=left,
   ytick pos=left,
   ytick align=outside,
   xtick align=outside,
   grid=both,
   axis line style={legendColor},
   at={(0.5\linewidth,0)}
]

\addplot[linePlot,mark=none,domain=-0.25:16.25] {1};
\addplot[color=plotColor1,linePlot,mark=*] table[y=slowdownAhrem,col sep=tab]{plot.csv};\label{legend:ahrem}
\addplot[color=plotColor2,linePlot,mark=*] table[y=slowdownEuskirchen,col sep=tab]{plot.csv};\label{legend:euskirchen}
\addplot[color=plotColor3,linePlot,mark=*] table[y=slowdownBonn,col sep=tab]{plot.csv};\label{legend:bonn}
\addplot[color=plotColor4,linePlot,mark=*] table[y=slowdownCologne,col sep=tab]{plot.csv};\label{legend:cologne}
\addplot[color=plotColor5,linePlot,mark=*] table[y=slowdownBigSet,col sep=tab]{plot.csv};\label{legend:bigSet}
\end{axis}

\arrayrulecolor{legendColor}
\node[inner sep=0pt,outer sep=0pt,yshift=-3pt] (legend) at (mainPlot.outer south west) {};
\node[inner sep=0pt,outer sep=0pt,anchor=north west] at (legend) {
\footnotesize
\begin{tabular*}{\textwidth}{|@{~}l@{~~~~}r@{\extracolsep{\fill}}r@{\extracolsep{\fill}}r@{\extracolsep{\fill}}r@{\extracolsep{\fill}}r@{~}|}
    \hline
     & & & & & \\[-6pt]
    \textbf{Instance:} & \legend{\ref{legend:ahrem}}\,\texttt{ahrem} & \legend{\ref{legend:euskirchen}}\,\texttt{euskirchen} & \legend{\ref{legend:bonn}}\,\texttt{bonn} & \legend{\ref{legend:cologne}}\,\texttt{cologne} & \legend{\ref{legend:bigSet}}\,\texttt{big\_set} \\[1pt]
    \hline
\end{tabular*}
};
\arrayrulecolor{black}
\end{tikzpicture}%
    \caption{%
    	Performance of \dichotomicscheme-IBFS on aggregation instances, depending on the approximation factor~$\varepsilon=10^{-x}$.
        Left: Ratio of found breakpoints, compared to~$\varepsilon=0$. Right: Speedup of \PBFS over \dichotomicscheme-IBFS.
    }%
    \label{fig:plot}%
\end{figure}
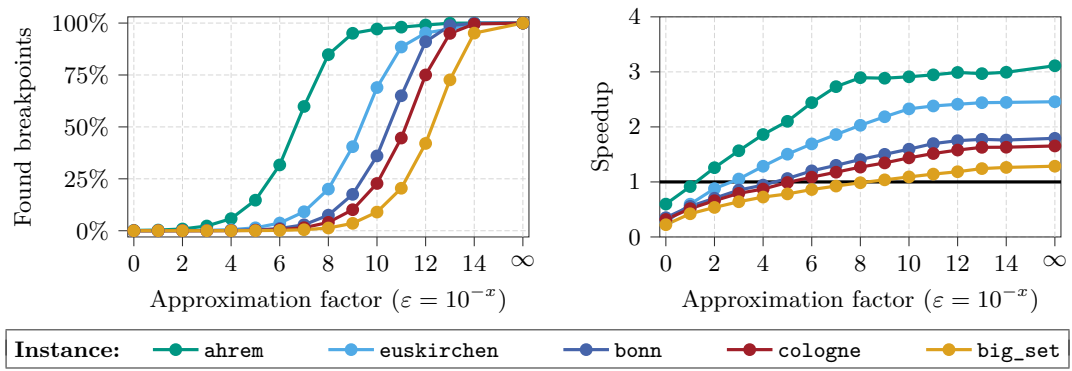
Although our main focus is on computing the entire breakpoint function, for some applications it is sufficient to compute an~$\varepsilon$-approximation.
As discussed in Section~\ref{sec:applications}, this can be done easily with \dichotomicscheme.
In this use case, \dichotomicscheme potentially has a runtime advantage over \PBFS, which always computes all breakpoints.
However, Figure~\ref{fig:plot} shows that on aggregation instances, \PBFS starts to outperform \dichotomicscheme-IBFS for~$\varepsilon$ values between~$10^{-2}$ and~$10^{-8}$.
For these values, \dichotomicscheme only finds a tiny fraction of all breakpoints.
If~$\varepsilon$ is chosen small enough to find half of all breakpoints, then \PBFS outperforms \dichotomicscheme by a factor between~1.2 and~2.5.

\subparagraph*{Other Algorithms.}
Rottmann et al.~\cite{RDGRH21} solve polygon aggregation using \dichotomicscheme without contraction.
Even on moderately sized instances, this approach does not scale well enough to compute the entire breakpoint function in reasonable time.
With an approximation factor of~$\varepsilon=10^{-7}$ on \texttt{euskirchen}, the algorithm runs for~145\,s and computes~2\,815 breakpoints (less than~9\%).
On \texttt{bonn} with~$\varepsilon=10^{-5}$, it takes~254\,s and computes~560 breakpoints (less than~0.4\%).

Finally, \PBFS significantly outperforms the algorithm for the simple sensitivity analysis proposed by Gallo, Grigoriadis and Tarjan~\cite{GGT89}.
We implemented this approach using EIBFS as the static max-flow algorithm, which is faster than PRF on our instances, and ran it for the set of breakpoints computed by \PBFS.
We observe that the algorithm does not scale well on aggregation instances:
On \texttt{ahrem}, it takes~257.0\,ms, which is over~20 times slower than \PBFS.
On \texttt{euskirchen}, the runtime is~71\,s, which is almost~200 times slower.
The reason for this is that the effort for turning a flow into a preflow is proportional to the number of source- and sink-incident edges, and this step has to be performed once for every breakpoint.
On aggregation instances, this requires quadratic effort overall.

\section{Conclusion}
We showed that the polygon aggregation problem posed by Rottmann et al.~\cite{RDGRH21} can be formulated as a parametric min-cut problem with monotone source-sink-capacities~(\MPMC), which allows faster algorithms to be applied.
With this application in mind, we presented parametric breadth-first search~(\PBFS), a new algorithm for \MPMC that finds the breakpoints in order.
\PBFS outperforms the state of the art on most evaluated benchmark instances, in particular the aggregation instances, while bounding the worst-case runtime to that of a single static max-flow call.

\subparagraph*{Open Questions.}
Our new \PBFS algorithm reuses some of the subroutines from the incremental breadth-first search~(IBFS) algorithm for the static max-flow algorithm~\cite{GHKTW11}.
It would be interesting to investigate whether the approach of finding the breakpoints in order can be combined with other max-flow algorithms, such as Hochbaum's pseudoflow algorithm~\cite{Hoc08}.
Furthermore, there is room for improvement in the runtime bound of~$\mathcal O(n^2m)$.
IBFS can be reduced to~$\mathcal O(nm \log n)$ with dynamic trees, but as discussed in Section~\ref{sec:alg}, it is unclear whether they can be adapted to support the operation of finding the smallest root of the residual capacity functions.

\bibliography{literature}

\appendix
\section{Omitted Experiments}
\label{app:experiments}
Tables~\ref{tbl:prf-agg} and~\ref{tbl:prf-seg} compare the performance of \dichotomicscheme-PRF to \dichotomicscheme-IBFS.
The latter is faster on all instances.
The speedup ranges from~25--50\% on the small aggregation instances to almost~9 on some segmentation instances.

\begin{table*}
    \center
    \caption{%
        Runtimes of \PBFS, \dichotomicscheme-IBFS and \dichotomicscheme-PRF on polygon aggregation instances. BP is the number of breakpoints computed by \PBFS.
    }%
    \label{tbl:prf-agg}%
    \small{
    \begin{tabular*}{\textwidth}{@{\,}l@{\extracolsep{\fill}}r@{\extracolsep{\fill}}r@{\extracolsep{\fill}}r@{\extracolsep{\fill}}r@{\extracolsep{\fill}}r@{\extracolsep{\fill}}r@{\extracolsep{\fill}}r@{\extracolsep{\fill}}r@{\,}}
        \toprule
        \multirow{3}{*}{Instance} & \multirow{3}{*}{BP} & \multicolumn{3}{c}{Runtime [ms]} & \multicolumn{3}{c}{Speedup} \\
        \cmidrule{3-5} \cmidrule{6-8}
        & & \multirow{2}{*}{\PBFS} & \dichotomicscheme- & \dichotomicscheme- & \PBFS & \PBFS & IBFS \\
        & & & IBFS & PRF & /IBFS & /PRF & /PRF \\
        \midrule
        \texttt{ahrem}  &  1\,891  & 11.9 & 35.1 & 44.3 & 2.95 & 3.72 &  1.26\\
        \texttt{altenholz}  &  3\,342  & 48.4 & 109.9 & 185.7 & 2.27 & 3.83 &  1.69\\
        \texttt{altenrath}  &  2\,742  & 22.1 & 59.2 & 84.4 & 2.68 & 3.82 &  1.43\\
        \texttt{belm}  &  2\,421  & 36.9 & 85.3 & 130.7 & 2.31 & 3.54 &  1.53\\
        \texttt{berga}  &  1\,977  & 16 & 43.5 & 57.4 & 2.71 & 3.58 &  1.32\\
        \texttt{bockelskamp}  & 859 & 6.3 & 16.7 & 21 & 2.64 & 3.31 &  1.26\\
        \texttt{bokeloh}  &  1\,036  & 11.5 & 27.9 & 38.1 & 2.43 & 3.32 &  1.36\\
        \texttt{braunlage}  &  2\,383  & 28.9 & 69 & 99.2 & 2.39 & 3.43 &  1.44\\
        \texttt{buisdorf}  &  2\,991  & 27 & 68.5 & 96.1 & 2.54 & 3.56 &  1.40\\
        \texttt{butzweiler}  &  1\,347  & 12.6 & 29.2 & 38.4 & 2.31 & 3.05 &  1.32\\
        \texttt{duengenheim}  &  2\,581  & 18 & 50 & 65 & 2.78 & 3.62 &  1.30\\
        \texttt{edendorf}  &  4\,953  & 51.4 & 124.6 & 180.9 & 2.42 & 3.52 &  1.45\\
        \texttt{erlenbach}  &  1\,624  & 14.1 & 37.2 & 51.7 & 2.63 & 3.65 &  1.39\\
        \texttt{erp}  &  4\,663  & 32.8 & 91.4 & 124.8 & 2.79 & 3.81 &  1.36\\
        \texttt{friesheim}  &  2\,387  & 22.5 & 55.1 & 75.6 & 2.45 & 3.36 &  1.37\\
        \texttt{gerolstein}  &  2\,577  & 31.4 & 73.2 & 102.4 & 2.33 & 3.26 &  1.40\\
        \texttt{gevenich}  &  1\,426  & 10.5 & 28.6 & 36.1 & 2.72 & 3.44 &  1.26\\
        \texttt{gluecksburg}  &  2\,741  & 26.2 & 65.7 & 90.7 & 2.51 & 3.47 &  1.38\\
        \texttt{goddula}  &  1\,758  & 15 & 39.5 & 53.2 & 2.64 & 3.55 &  1.35\\
        \texttt{goldenstedt}  &  3\,406  & 31.3 & 78.4 & 110.6 & 2.51 & 3.54 &  1.41\\
        \texttt{groebzig}  &  2\,132  & 16.4 & 44.6 & 58.7 & 2.73 & 3.59 &  1.32\\[3pt]
        \texttt{aldenhoven}  &  9\,366  & 83.2 & 218.2 & 317.3 & 2.62 & 3.81 &  1.45\\
        \texttt{andernach}  &  8\,039  & 122.7 & 293.6 & 474.7 & 2.39 & 3.87 &  1.62\\
        \texttt{bad\_harzburg}  &  4\,929  & 63.6 & 151.3 & 237 & 2.38 & 3.73 &  1.57\\
        \texttt{bad\_neunahr}  &  9\,068  & 106.3 & 268 & 428.2 & 2.52 & 4.03 &  1.60\\
        \texttt{bergedorf}  &  20\,121  & 239.8 & 559.8 & 937.8 & 2.33 & 3.91 &  1.68\\
        \texttt{celle}  &  26\,196  & 359.3 & 870.9 &  1\,526.9  & 2.42 & 4.25 &  1.75\\
        \texttt{euskirchen}  &  31\,434  & 360.5 & 880.7 &  1\,438.1  & 2.44 & 3.99 &  1.63\\
        \texttt{forsbach}  &  7\,609  & 68 & 173.6 & 261.8 & 2.55 & 3.85 &  1.51\\[3pt]
        \texttt{gruppe1}  &  8\,788  & 147.7 & 289.4 & 425.8 & 1.96 & 2.88 &  1.47\\
        \texttt{gruppe2}  &  6\,527  & 65.9 & 164.5 & 243.8 & 2.5 & 3.7 &  1.48\\
        \texttt{gruppe3}  &  6\,984  & 81.7 & 209.9 & 326.3 & 2.57 & 3.99 &  1.55\\
        \texttt{gruppe4}  &  10\,376  & 121.6 & 296.5 & 436.2 & 2.44 & 3.59 &  1.47\\
        \texttt{gruppe5}  &  16\,317  & 129 & 360.3 & 510.7 & 2.79 & 3.96 &  1.42\\
        \texttt{gruppe6}  &  12\,772  & 134.9 & 336.2 & 505 & 2.49 & 3.74 &  1.50\\
        \texttt{gruppe7}  &  15\,793  & 187.9 & 415.8 & 597.5 & 2.21 & 3.18 &  1.44\\
        \texttt{gruppe8}  &  6\,153  & 54.2 & 151.2 & 206.8 & 2.79 & 3.81 &  1.37\\
        \texttt{gruppe9}  &  9\,605  & 114.1 & 275.1 & 424.3 & 2.41 & 3.72 &  1.54\\
        \texttt{gruppe10}  &  25\,685  & 283.2 & 712.6 &  1\,104.8  & 2.52 & 3.9 &  1.55\\
        \texttt{gruppe11}  &  20\,287  & 214.8 & 550.4 & 776.6 & 2.56 & 3.62 &  1.41\\
        \texttt{gruppe12}  &  20\,598  & 304.2 & 606.3 & 936.6 & 1.99 & 3.08 &  1.54\\[3pt]
        \texttt{bonn}  &  150\,943  &  2\,997.0  &  5\,391.8  &  9\,885.5  & 1.8 & 3.3 &  1.83\\
        \texttt{cologne}  &  330\,720  &  7\,917.8  &  12\,946.9  &  23\,904.4  & 1.64 & 3.02 &  1.85\\
        \texttt{big\_set}  &  1\,134\,906  &  36\,452.3  &  46\,550.6  &  102\,488.0  & 1.28 & 2.81 &  2.20\\
        \bottomrule
    \end{tabular*}
    } 
\end{table*}

\begin{table*}
    \center
    \caption{%
        Runtimes of \PBFS, \dichotomicscheme-IBFS and \dichotomicscheme-PRF on image segmentation instances. BP is the number of breakpoints computed by \PBFS.
    }%
    \label{tbl:prf-seg}%
    \small{
    \begin{tabular*}{\textwidth}{@{\,}l@{\extracolsep{\fill}}r@{\extracolsep{\fill}}r@{\extracolsep{\fill}}r@{\extracolsep{\fill}}r@{\extracolsep{\fill}}r@{\extracolsep{\fill}}r@{\extracolsep{\fill}}r@{\extracolsep{\fill}}r@{\,}}
        \toprule
        \multirow{3}{*}{Instance} & \multirow{3}{*}{BP} & \multicolumn{3}{c}{Runtime [ms]} & \multicolumn{3}{c}{Speedup} \\
        \cmidrule{3-5} \cmidrule{6-8}
        & & \multirow{2}{*}{\PBFS} & \dichotomicscheme- & \dichotomicscheme- & \PBFS & \PBFS & IBFS \\
        & & & IBFS & PRF & /IBFS & /PRF & /PRF \\
        \midrule
        \texttt{adhead.n6c10}  & 2 &  4\,037.2  &  10\,150.9  &  25\,828.5  & 2.51 & 6.4 &  2.54\\
        \texttt{adhead.n6c100}  & 2 &  4\,053.3  &  10\,248.2  &  33\,759.6  & 2.53 & 8.33 &  3.29\\
        \texttt{adhead.n26c10}  &  10\,311  &  17\,671.7  &  37\,421.2  &  147\,513.4  & 2.12 & 8.35 &  3.94\\
        \texttt{adhead.n26c100}  &  36\,998  &  19\,307.7  &  38\,552.9  &  134\,413.0  & 2 & 6.96 &  3.49\\[3pt]
        \texttt{babyface.n6c10}  & 20 &  1\,524.7  &  3\,892.4  &  10\,878.6  & 2.55 & 7.13 &  2.79\\
        \texttt{babyface.n6c100}  & 29 &  1\,475.4  &  3\,670.3  &  14\,464.3  & 2.49 & 9.8 &  3.94\\
        \texttt{babyface.n26c10}  & 277 &  3\,754.8  &  10\,700.9  &  79\,058.7  & 2.85 & 21.06 &  7.39\\
        \texttt{babyface.n26c100}  & 336 &  3\,829.0  &  10\,410.5  &  92\,144.7  & 2.72 & 24.06 &  8.85\\[3pt]
        \texttt{bone.n6c10}  & 84 &  2\,524.4  &  6\,009.4  &  23\,820.2  & 2.38 & 9.44 &  3.96\\
        \texttt{bone.n6c100}  & 124 &  2\,561.9  &  6\,008.9  &  24\,511.9  & 2.35 & 9.57 &  4.08\\
        \texttt{bone.n26c10}  &  6\,225  &  14\,172.4  &  23\,458.2  &  81\,688.2  & 1.66 & 5.76 &  3.48\\
        \texttt{bone.n26c100}  &  12\,189  &  15\,456.7  &  25\,101.5  &  87\,948.6  & 1.62 & 5.69 &  3.50\\[3pt]
        \texttt{liver.n6c10}  & 185 &  1\,365.6  &  3\,332.9  &  10\,636.7  & 2.44 & 7.79 &  3.19\\
        \texttt{liver.n6c100}  &  1\,078  &  1\,429.3  &  3\,427.7  &  11\,700.3  & 2.4 & 8.19 &  3.41\\
        \texttt{liver.n26c10}  & 378 &  3\,054.8  &  9\,054.2  &  21\,842.0  & 2.96 & 7.15 &  2.41\\
        \texttt{liver.n26c100}  &  1\,506  &  3\,127.8  &  8\,865.7  &  25\,896.4  & 2.83 & 8.28 &  2.92\\
        \bottomrule
    \end{tabular*}
    } 
\end{table*}

\end{document}